\newif\iffull
\renewcommand{\paragraph}[1]{\vspace{.06in}\noindent\textbf{{#1}.}}
\newcommand{\name}{Antidote\xspace}
\newcommand{\rone}{(\emph{i})~}
\newcommand{\rtwo}{(\emph{ii})~}
\newcommand{\rthree}{(\emph{iii})~}
\newcommand{\rfour}{(\emph{iv})~}
\renewcommand{\leq}{\leqslant}
\newcommand{\restrsym}{\downarrow}
\newcommand{\restr}[2]{{#1{\restrsym_{#2}}}}
\newcommand{\arestr}[2]{{#1{\restrsym_{#2}^{\#}}}}
\DeclareMathOperator*{\argmax}{argmax}
\DeclareMathOperator*{\argmin}{argmin}
\DeclareMathOperator*{\lb}{lb}
\DeclareMathOperator*{\ub}{ub}
\newcommand{\train}{T}
\newcommand{\tdom}{\mathcal{X}}
\newcommand{\point}{x}
\newcommand{\class}{y}
\newcommand{\classes}{\mathcal{Y}}
\newcommand{\drop}[2]{\langle#1,#2\rangle}
\newcommand{\reals}{\mathbb{R}}
\newcommand{\learner}{L}
\newcommand{\tlearner}{\textsf{DTrace}}
\newcommand{\atlearner}{\tlearner^\#}
\newcommand{\tlearnerr}{\tlearner_\reals}
\newcommand{\atlearnerr}{\atlearner_\reals}
\newcommand{\model}{M}
\newcommand{\poison}{\Delta}
\newcommand{\Drop}{\Delta}
\newcommand{\impurity}{\mathsf{ent}}
\newcommand{\aimpurity}{\mathsf{ent}^\#}
\newcommand{\summary}{\mathsf{cprob}}
\newcommand{\asummary}{\mathsf{cprob}^\#}
\newcommand{\bestsplit}{\mathsf{bestSplit}}
\newcommand{\abestsplit}{\mathsf{bestSplit}^\#}
\newcommand{\bestsplitr}{\bestsplit_\reals}
\newcommand{\abestsplitr}{\abestsplit_\reals}
\newcommand{\lub}{lub}
\newcommand{\filter}{\mathsf{filter}}
\newcommand{\afilter}{\mathsf{filter}^\#}
\newcommand{\filterr}{\filter_\reals}
\newcommand{\afilterr}{\afilter_\reals}
\newcommand{\score}{\mathsf{score}}
\newcommand{\ascore}{\mathsf{score}^\#}
\newcommand{\phibest}{\varphi^\star}
\newcommand{\rpred}{\rho}
\newcommand{\tree}{R}
\newcommand{\seq}{\sigma}
\newcommand{\ifs}{\textbf{if}}
\newcommand{\thens}{\textbf{then}}
\newcommand{\elses}{\textbf{else}}
\newcommand{\nullpred}{\diamond}
\newcommand{\numBenchmarks}{5}
\newcommand{\iris}{Iris\xspace}
\newcommand{\wdbc}{Wisconsin Diagnostic Breast Cancer\xspace}
\newcommand{\mammography}{Mammographic Masses\xspace}
\newcommand{\mnistbin}{MNIST-1-7-Binary\xspace}
\newcommand{\mnistreal}{MNIST-1-7-Real\xspace}
\newcommand{\puresets}{\mathit{pure}}
\begin{document}

\title{Proving Data-Poisoning Robustness in Decision Trees}         
\author{Samuel Drews}
\affiliation{
    \institution{University of Wisconsin-Madison}
    \city{Madison}
    \state{WI}
    \country{USA}
}
\email{sedrews@wisc.edu}
\author{Aws Albarghouthi}
\affiliation{
    \institution{University of Wisconsin-Madison}
    \city{Madison}
    \state{WI}
    \country{USA}
}
\email{aws@cs.wisc.edu}
\author{Loris D'Antoni}
\affiliation{
    \institution{University of Wisconsin-Madison}
    \city{Madison}
    \state{WI}
    \country{USA}
}
\email{loris@cs.wisc.edu}

\begin{abstract}
Machine learning models are brittle, and
small changes in the training data can result in different predictions.
We study the problem of proving that a prediction is robust
to \emph{data poisoning}, where an attacker can 
inject a number of malicious elements into the training set to influence the learned model.
We target decision-tree models, a popular and simple class of machine 
learning models that underlies many complex learning techniques.
We present a sound verification technique based on \emph{abstract interpretation}
and implement it in a tool called \name.
\name abstractly trains decision trees for an intractably 
large space of possible poisoned datasets.
Due to the soundness of our abstraction, \name can produce proofs that, for a given input, 
the corresponding prediction would not have 
changed had the training set been tampered with or not.
We demonstrate the effectiveness of \name on a number
of popular datasets.
\end{abstract}

%
%
%
%

\begin{CCSXML}
<ccs2012>
   <concept>
       <concept_id>10011007.10010940.10010992.10010998.10011000</concept_id>
       <concept_desc>Software and its engineering~Automated static analysis</concept_desc>
       <concept_significance>300</concept_significance>
       </concept>
   <concept>
       <concept_id>10002978.10002986</concept_id>
       <concept_desc>Security and privacy~Formal methods and theory of security</concept_desc>
       <concept_significance>300</concept_significance>
       </concept>
   <concept>
       <concept_id>10010147.10010257.10010293.10003660</concept_id>
       <concept_desc>Computing methodologies~Classification and regression trees</concept_desc>
       <concept_significance>300</concept_significance>
       </concept>
 </ccs2012>
\end{CCSXML}

\ccsdesc[300]{Software and its engineering~Automated static analysis}
\ccsdesc[300]{Security and privacy~Formal methods and theory of security}
\ccsdesc[300]{Computing methodologies~Classification and regression trees}

\keywords{Abstract Interpretation, Adversarial Machine Learning, Decision Trees, Poisoning, Robustness}

\maketitle

\section{Introduction}\label{sec:introduction}

Artificial intelligence, in the form of machine learning (ML), is rapidly transforming the world as we know it. Today, ML is responsible for an ever-growing spectrum of sensitive decisions---from loan decisions, to diagnosing diseases, to autonomous driving. 
Many recent works have shown how ML models are brittle~\cite{SzegedyZSBEGF13,DBLP:conf/icml/WangJC18,chen2017targeted,Biggio12,steinhardt2017certified}, and
with ML spreading across many industries, the issue of robustness in ML models has taken center stage.
The research field that deals with studying robustness of ML models is referred to as
\emph{adversarial machine learning}.
In this field, researchers have proposed many definitions that try to capture robustness to different \emph{adversaries}.
The majority of these works have focused on verifying or improving the model's robustness to \emph{test-time attacks}~\cite{Gehr18,singh2019abstract,anderson2019optimization,katz2017reluplex,wang2018formal},
where
an adversary can craft small perturbations to input examples that
fool the ML model into changing its prediction,
e.g., making the model think a picture of a cat is that of a zebra~\cite{carlini2017towards}.

\paragraph{Data-Poisoning Robustness}
This paper focuses on verifying \emph{data-poisoning robustness},
which captures how robust a training algorithm $\learner$ is to variations in a given \emph{training set}
$\train$. 
Intuitively, applying $\learner$ to the training set $\train$ results in a classifier (model)
$\model$, and in this paper we are interested in how the trained model varies when
producing perturbations of the input training set $\train$. 

The idea is that an adversary can produce slight modifications
of the training set, e.g., by supplying a small amount of malicious training points,
to influence the produced model and its predictions.
This attack model is possible when data is curated, for example,
via crowdsourcing or from online repositories, where attackers can try to add
malicious elements to the training data.
For instance,~\citet{xiao2015feature} consider adding malicious training points to
affect a malware detection model; similarly,~\citet{chen2017targeted} 
consider adding a small number of images to bypass a facial recognition model.

\begin{figure}[t]
  \includegraphics[width=8.5cm]{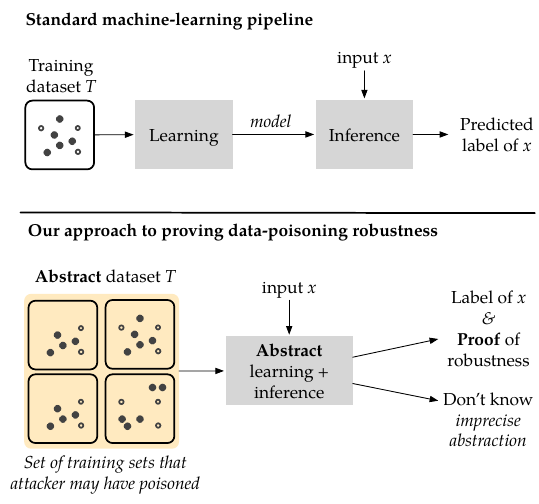}
  \caption{High-level overview of our approach}\label{fig:general}
\end{figure}

A \emph{perturbed set} $\poison(\train)$
defines a set of neighboring datasets the adversary could have attacked to yield the training set $\train$.
To define what it means for the training algorithm $\learner$ to be robust, we need to measure
how the model learned by $\learner$ varies when modifying the training set.
Let us say  we have an input example $\point$---e.g., a test example---and its classification label is $M(\point)=y$,
where $M=\learner(\train)$ is the model learned from the  training set $\train$.
We say that $\point$ is robust to poisoning if and only if
for all $\train' \in \poison(T)$, we have $\learner(\train')(\point) = \class$;
that is, no matter what dataset $\train' \in \poison(\train)$ we use to construct the model $\model' = \learner(\train')$,
we want $\model'$ to always return the same classification $\class$ on the input $\point$.
To be clear, in this paper we are concerned with a \emph{local} robustness property:
we are proving the invariance of individual test points' classifications
to changes in the training set.


\paragraph{Verification Challenges}
Data-poisoning robustness has been studied extensively~\cite{Biggio12,Xiao12,Xiao15,Newell14,Mei15}. This body
of work has demonstrated data-poisoning attacks---i.e., modifications to training sets---that can degrade classifier accuracy, sometimes
dramatically, or force certain predictions
on specific inputs.
 While some defenses have been proposed against specific attacks~\cite{LaishramP16,steinhardt2017certified}, 
we are not aware of any technique that can formally verify that a given learning algorithm is robust to perturbations
to a given training set.
Verifying data-poisoning robustness of a given learner requires solving a number of challenges:
\begin{enumerate}[topsep=5pt, partopsep=0pt, leftmargin=*]
\item The datasets over which the learner operates are typically large (thousands of elements).
Even when considering simple poisoning attacks, the number of modified training sets we need to consider
        can be intractably large to represent and explore explicitly.
\item Because learners are complicated programs that employ complex metrics (e.g., entropy  and loss functions),
their verification requires new specialized techniques.			
\end{enumerate}

\paragraph{Our Approach}
We focus on the problem of verifying data-poisoning robustness for
\textit{decision-tree learners}. We choose decision trees because
\rone they are widely used interpretable models;
\rtwo they are used in industrial models like random forests and XGBoost~\cite{chen2016xgboost};
\rthree decision-tree-learning  has been shown to be 
unstable to training-set perturbation~\cite{dwyer07,turney95,li02,perez05};
and \rfour decision-tree-learning algorithms are typically deterministic---e.g., they do not employ stochastic optimization techniques---%
making them amenable to verification.

We present \name, \textit{a tool for verifying data-poisoning robustness of decision-tree learners}.
At a high level, \name takes as input a training set $\train$ and an input $\point$,
 symbolically constructs every tree built by a particular decision-tree learner $\learner$ on every possible variation
 of $\train$ in $\poison(\train)$, and applies all those trees to $\point$.
 If all the trees agree on the label of $\point$, then we know that $\point$ is robust to poisoning $\train$.
 (See Figure~\ref{fig:general} for an overview.)
\name addresses the two challenges highlighted above as follows.

To address the first challenge of training on a combinatorially large number of datasets, 
\name employs a novel \emph{abstract domain} for concisely representing sets of datasets.
\name is a sound abstract interpretation
of standard decision-tree learning algorithms:
Instead of constructing a single decision tree,
it implicitly constructs an overapproximation of all decision trees for every
training set in $\poison(\train)$.

To address the second challenge, \name has to soundly approximate  how decision-tree learning algorithms propagate entropy computations across a model.
\name takes advantage of the following observation: 
every input $\point$ only \emph{traverses} a single root-to-leaf trace in the learned decision tree---i.e., the sequence of predicates that affects the decision for $\point$. 
This observation allows \name to build a simpler abstraction that only needs to track the  predicates and training elements affecting the
decision for $\point$ at a given node in the tree, instead of across the entire tree.
We call this a \emph{trace-based view}
of decision-tree learning, where we are only concerned with the tree trace(s) traversed by $\point$. 

\paragraph{Evaluation}
We evaluated \name  on 
a number of real datasets from the literature.
\name successfully proves robustness for many test inputs, even in cases where the
learner is allowed to build a complex decision tree and
the attacker is allowed to contribute more than 1\% of the points in the training set.
For instance, \name can, in around 1 minute, prove robustness for some test inputs of the MNIST-1-7~\cite{steinhardt2017certified,Biggio12} datasets for cases where the attacker may have contributed up to 192 malicious elements to the dataset.
 A na\"ive enumeration approach would have to construct around $10^{432}$ models to prove the same property!

\paragraph{Contributions}
We summarize our contributions as follows:

\begin{itemize}[topsep=5pt, partopsep=0pt, leftmargin=*]
\item \name: the first sound technique for verifying
    data-poisoning robustness for decision-tree learners (\S\ref{sec:overview}).
\item A \emph{trace-based view} of decision-tree learning
    as a stand-alone algorithm that allows us to sidestep
    the challenging problem of reasoning about the set of all possible output trees
    a learner can output on different datasets (\S\ref{sec:abslearning}).
\item An \textit{abstract domain that concisely encodes sets of perturbed datasets}
    and the abstract transformers necessary to verify
    robustness of a decision-tree learner (\S\ref{sec:absdomain} and \S\ref{sec:extensions}).
\item An evaluation of \name on five representative datasets from the literature.
    \name can prove poisoning robustness for all datasets
    in cases where an enumeration approach would be doomed to fail (\S\ref{sec:evaluation}).
\end{itemize}

Proofs of theorems and figures for additional benchmarks are available in
\iffull Appendix~\ref{app:proofs}. \else
the full version of this paper~\cite{antidotearxiv}. \fi

\section{Overview}\label{sec:overview}

In this section, we give an overview of decision-tree learning,
the poisoning-robustness problem, and motivate our abstraction-based proof technique.

\paragraph{Decision-Tree Learning}
Consider the dataset $T_{\emph{bw}}$ at the top of Figure~\ref{fig:example}.
It is comprised of 13 elements with a single numerical feature.
Each element is labeled as a white (empty) or  black (solid) circle.
We use $x$ to denote the feature value of each element.
Our goal is to construct a  decision tree that classifies a given number into
white or black.

For simplicity, we assume that we can only build trees of depth 1, like the one shown 
at the bottom Figure~\ref{fig:example}.
At each step of building a decision tree, the learning algorithm is looking for a predicate
$\varphi$ with the best score, with the goal of splitting the dataset into two pieces with \emph{least diversity}, i.e., most elements have the same class (formally defined usually using a notion of entropy).
This is what we see in our example: using the predicate $x \leq 10$,
we split the dataset into two sets, one that is mostly white (left) and one that is completely black (right).
This is the best split we can have for our data, assuming we can only pick predicates of the form
$x \leq c$, for an integer $c$.\footnote{
Note that, while the set of predicates $x \leq c$ is infinite, 
for this dataset (and in general for any dataset), there exists only finitely
many inequivalent predicates---e.g., $x\leq 4$ and $x\leq 5$ split the 
dataset into the same two sets.
}

Given a new element for a classification, we check if it is $\leq 10$,
in which case we say it is white with probability $7/9$---i.e.,
the fraction of white elements such that $\leq 10$.
Otherwise, if the  element is $>10$, we say it is black with probability 1.

\begin{figure}[t]
    \includegraphics[width=8cm]{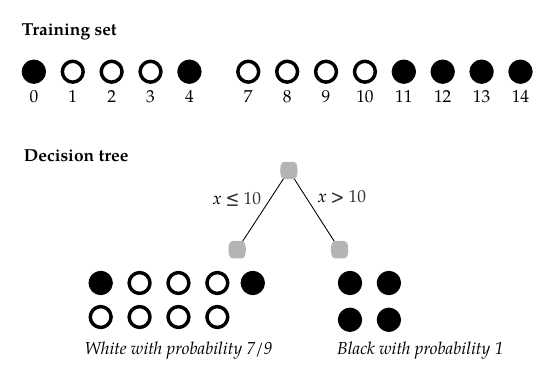}
    \caption{Illustrative example}\label{fig:example}
\end{figure}

\paragraph{Data-Poisoning Robustness}
Imagine we want to classify an input $x$
but want to make sure the classification would not have changed had the training data
been slightly different.
For example, maybe some percentage of the data was maliciously added by an attacker
to sway the learning algorithm, a problem known as \emph{data poisoning}.
Our goal is to check whether the classification of $x$ is robust to data poisoning.

\paragraph{A Na\"ive Approach}
Consider our running example and imagine  we want to classify the number $5$.
Additionally, we want to prove that \emph{removing up to two elements} from the training set
would not change the classification of $5$---i.e., we assume that up to ${\sim}15\%$ (or $2/13$) of the dataset is contributed maliciously.
The na\"ive way to do this is to consider every possible training dataset
with up to two elements removed and retrain the decision tree.
If all trees classify the input $5$ as white, the classification is robust
to this level of poisoning.

Unfortunately, this approach is intractable. Even for our tiny example, 
we have to train 92 trees $(\binom{13}{2} + \binom{13}{1} + 1)$.
For a dataset of 1000 elements and a poisoning of up to 10 elements,
we have ${\sim}10^{23}$ possibilities.

\paragraph{An Abstract Approach}
Our approach to efficiently proving poisoning robustness exploits a number
of insights.
First, we can perform decision-tree learning \emph{abstractly} on a \emph{symbolic set of training sets},
without having to deal with a combinatorial explosion.
The idea is that the operations in decision-tree learning, e.g., selecting a predicate and splitting the dataset, do not 
need to look at every concrete element of a dataset, but at aggregate statistics (counts).

Recall our running example in Figure~\ref{fig:example}. Let us say that up to two elements have been removed.
No matter what two elements you choose, the predicate $x \leq 10$ remains one that gives \emph{a} best split for the dataset.
In cases of ties between predicates,
our algorithm abstractly represents all possible splits.
%
For each predicate, we can symbolically compute best- and worst-case scores 
in the presence of poisoning as an \emph{interval}.
Similarly, we can also compute an interval that overapproximates
the set of possible classification probabilities.
For instance, in the left branch of the decision-tree, the probability will be
$[0.71,1]$ instead of $0.78$ (or $7/9$).
The best case probability of 1 is when we drop the black points $0$ and $4$;
the worst-case probability of $0.71$ (or $5/7$) is when we drop any two white points.

The next insight that enables our approach is that we \emph{do not need to explicitly build the tree}.
Since our goal is to prove robustness of a single input point, which effectively takes a single trace through the tree, we mainly need to keep track of the abstract training sets as they propagate along those
traces. This insight drastically simplifies our approach; otherwise, we would need to somehow abstractly represent sets of elements of a tree data structure, a non-trivial problem in program analysis.

\begin{figure}[t]
    \includegraphics[width=1in]{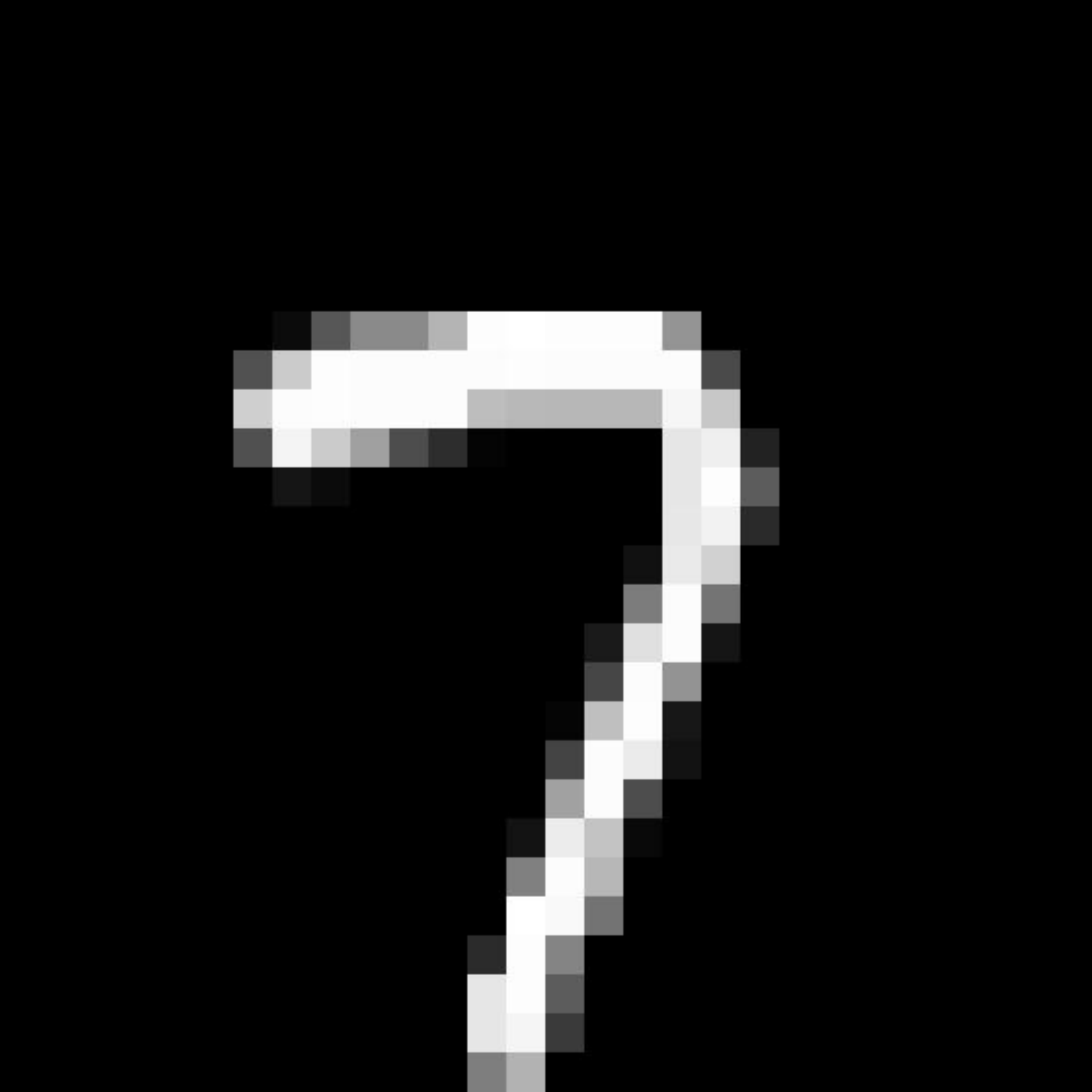}
    \caption{Example MNIST-1-7 digit that is proven poisoning-robust by \name.}
    \label{fig:exmnist}
\end{figure}

\paragraph{Abstraction and Imprecision}
We note that our approach is sound but necessarily incomplete;
that is, 
when our approach returns ``robust'' the answer is correct, but
there are robust instances for which our approach will not be able to prove robustness.
The are numerous sources of imprecision due to overapproximation, for example,
we use the \emph{intervals domain} (or disjunctive intervals) to capture  real-valued entropy calculations of different training set splits, as well as the final probability of classification.

\paragraph{An Involved Example} 
To further illustrate our technique,
we preview one of our experiments.
We applied our approach to the MNIST-1-7 dataset,
which has been 
used to study data-poisoning for deep neural networks~\cite{steinhardt2017certified}
and support vector machines~\cite{Biggio12}.
In our experiments, we checked whether \name could prove data-poisoning for the inputs used in the same
dataset when training a decision tree.
For example, when applying \name to the image of the digit in Figure~\ref{fig:exmnist}, \name proves that it is poisoning robust
(always classified as a seven)
for up to 192 poisoned elements in 90 seconds.
This is equivalent to training on ${\sim}10^{432}$ datasets!

\section{Poisoning and Decision Tree Learning}\label{sec:abslearning}

In this section, we begin by formally defining the \emph{data-poisoning-robustness problem}.
Then, we present a \emph{trace-based} view of decision-tree learning,
which will pave the way for a poisoning-robustness proof technique.

\subsection{The Poisoning Robustness Problem}
In a typical supervised learning setting,
we are given a learning algorithm $\learner$
and a training set $\train \subseteq \tdom \times \classes$
comprised of elements of some set $\tdom$, each with its classification label from
a finite set of classes $\classes$.
Applying $\learner$ to $\train$ results in a classifier (or model):
$\model : \tdom \to \classes$.
For now, we assume  that both the learning algorithm $\learner$
and the models it learns are deterministic functions.%
\footnote{Our approach, however, needs to handle non-determinism in decision-tree learning,
which arises when breaking ties for
choosing predicates with equal scores and choosing labels for classes with equal probabilities.}

A \emph{perturbed set} $\poison(\train) \subseteq 2^{\tdom \times \classes}$
 defines a set of possible \emph{neighboring} datasets of $\train$.
Our robustness definitions are relative to some given perturbation $\poison$.
(In Section~\ref{sec:npoison-model}, we define a specific perturbed set
that captures a particular form of data poisoning.)
%

\begin{definition}[Poisoning Robustness]
Fix a learning algorithm $\learner$, a training set $\train$, and let $\poison(\train)$ be a perturbed set.
Given an element $\point \in \tdom$, we say that $\point$ is robust
to poisoning $\train$ if and only if 
\[
\forall \train' \in \poison(\train) \ldotp \ \learner(\train')(x) = \learner(\train)(x) 
\]
When $\train$ and $\poison$ are clear from context,
we will simply say that $\point$ is robust.
\end{definition}
In other words, no matter what  dataset $\train' \in \poison(\train)$ we use to construct the model $\model = \learner(\train')$,
we want  $\model$ to always return the same classification for $\point$.
{returned for $\point$ by the model $\learner(\train)$
learned on the original training set $\train$.}

\begin{example}
Imagine we suspect that an attacker has contributed 10 training points
to $\train$, but we do not know which ones.
We can define $\poison(\train)$ to be $\train$ as well as every subset of $\train$
of size $|\train|-10$.
If an input $x$ is robust for this definition of $\poison(\train)$,
then no matter whether the attacker has contributed 10 training items or not,
the classification of $x$ does not change.
\end{example}


\subsection{Decision Trees: A Trace-Based View}
We now formally define decision trees.
We will formalize a tree as the \emph{set of traces} from the root
to each of the leaves.
As we will see, this trace-based view will help enable our proof technique.
The idea of representing
an already-learned decision tree as a set of traces
is not new and has often been explored in the context of
extracting interpretable rules from decision trees~\cite{Quinlan87}.

A decision tree $\tree$ is a finite set of traces,
where each trace is a tuple $(\seq, \class)$ such that
$\seq$ is a sequence of Boolean predicates and
 $\class \in \classes$ is the classification.
 
Semantically, a tree $\tree$ is a function in $\tdom \rightarrow \classes$.
Given an input $\point \in \tdom$,
applying $\tree(\point)$ results in a classification $\class$
from the trace $(\seq, \class) \in \tree$ where $x$ satisfies all the predicates
in the sequence $\seq = [\varphi_1,\ldots,\varphi_n]$, that is,
 $\bigwedge_{i=1}^n \point \models \varphi_i$ is true.  
We say a tree $\tree$ is \textit{well-formed} if for every $\point\in\tdom$ there exists exactly 
one
trace $(\seq, \class) \in \tree$ such that  $x$ satisfies all predicates in $\seq$.
In the following we assume all trees are well-formed.

\begin{example}[Decision tree traces]
Consider the decision-tree in Figure~\ref{fig:example}.
It contains two traces, each with a sequence of predicates containing a single predicate:
$([x \leq 10], \text{\it white})$ and
$([x > 10], \text{\it black})$.
\end{example}

\subsection{Decision-Tree Learning: A Trace-Based View}

We now present a simple decision-tree learning algorithm, $\tlearner$.
Then, in Section~\ref{sec:absdomain}, we abstractly interpret $\tlearner$ with the goal of  proving poisoning robustness.

One of our key insights is that we do not need to explicitly represent the learned trees (i.e., the set of all traces),
since our goal is to prove robustness of a \emph{single  input} point, which effectively takes a \emph{single trace} through the tree.
Therefore, in this section, we will define a \emph{trace-based decision-tree learning algorithm}.
This is inspired by standard algorithms---%
like CART~\cite{breiman2017classification}, ID3~\cite{quinlan1986induction}, and C4.5~\cite{quinlan1993c}---%
but \emph{it is input-directed, in the sense that it only builds the trace of the tree that a given
input $\point$ will actually traverse.}

\paragraph{A Trace-Based Learner}
Our trace-based learner $\tlearner$ is shown in Figure~\ref{fig:alg}.
It takes a training set $\train$ and an input $x$
and computes the trace traversed by $x$ in the tree learned on $\train$.
Intuitively, if we compute the set of all traces $\tlearner(\train,x)$ for each $x \in \train$,
we get the full tree, the one that we would have traditionally learned for $\train$.

The learner $\tlearner$ repeats two core operations: 
\rone selecting a predicate $\varphi$
with which to split the dataset (using $\bestsplit$)
and 
\rtwo removing elements of the training set
based on whether they satisfy the predicate $\varphi$ (depending on $x$,
using $\filter$).\footnote{%
Note that \rtwo is what distinguishes our trace-based learning
from conventional learning of a full tree.
\rone selects predicates that would comprise the tree, while
\rtwo directs us to recurse \emph{only} along the path that the specific $\point$ would take,
as opposed to recursing down both
(and without affecting how predicates in the tree are selected).
}
The number of times the loop is repeated ($d$) is the maximum depth of the trace that is constructed.
Throughout, we assume a fixed  set of classes $\classes = \{1,\ldots,k\}$.

The mutable state of $\tlearner$ is the triple  $(\train, \varphi, \seq)$:
\begin{itemize}[topsep=5pt, partopsep=0pt, leftmargin=*]
    \item $\train$ is the training set, which will keep getting refined (by dropping elements) as the trace is constructed.
    \item $\varphi$ is the most recent predicate along the trace, which is initially undefined (denoted by $\nullpred$).
    \item $\seq$ is the sequence of predicates along the trace, which is initially empty.
\end{itemize}

\begin{figure}
\begin{flushleft}
\textbf{Input: } training set $\train$ and input $\point \in \tdom$
~\\
\textbf{Initialize: } $\varphi \gets \nullpred$, 
$\seq \gets \text{ empty trace}$
~\\[1mm]
repeat $d$ times
~\\
\hspace{1em}$\ifs \ \impurity(\train) = 0 \ \thens\ \mathsf{return}$\\
\hspace{1em}$\varphi \gets \bestsplit(\train)$\\
\hspace{1em}$\ifs \ \varphi = \nullpred \ \thens \ \mathsf{return}$\\
\hspace{1em}$\train \gets \filter(\train,\varphi,\point)$\\
\hspace{1em}$\ifs \  \point \models \varphi \  \thens \ \seq \gets \seq\varphi  \ \elses \  \seq \gets \seq\neg\varphi$
~\\[1mm]
\hspace{0em}\textbf{Output:} $\argmax_{i \in [1,k]} p_i, \text{ where } \summary(\train)=\langle p_1,\ldots,p_k\rangle
$
\end{flushleft}
\caption{Trace-based decision-tree learner $\tlearner$}\label{fig:alg}
\end{figure}

\paragraph{Predicate Selection}
We assume that $\tlearner$ is equipped with a finite set of predicates $\Phi$
with which it can construct a decision-tree classifier;
 each predicate in $\Phi$ is a Boolean function in $\tdom \rightarrow \mathbb{B}$.

$\bestsplit(\train)$ computes a predicate $\phibest \in \Phi$ that splits the current dataset $\train$---usually minimizing a notion of
entropy.
Ideally, the learning algorithm would consider every possible sequence of predicates to partition a dataset
in order to arrive at an optimal classifier.
For efficiency, a decision-tree-learning  algorithms does this greedily: it selects the best predicate it can find for a single split and moves on to the next split.
To perform this greedy choice, it measures how diverse the two datasets resulting from the split are.
We formalize this below:

We use $\restr{\train}{\varphi}$ to denote the subset of $\train$ that satisfies $\varphi$,
i.e., $$\restr{\train}{\varphi} = \{(x,y) \in \train \mid x \models \varphi\}$$
%
Let $\Phi'$ be the set of all predicates
that do not trivially split the dataset:
$\Phi' = \{\varphi \in \Phi \mid \restr{\train}{\varphi} \neq \emptyset \wedge \restr{\train}{\varphi} \neq\train \}$.
Finally, $\bestsplit(\train)$ is defined as follows:
\[
\bestsplit(\train) = 
    \argmin_{\varphi\in\Phi'}\ \score(\train,\varphi) \]
where $\score(\train,\varphi) = |\restr{\train}{\varphi}| \cdot \impurity(\restr{\train}{\varphi}) + |\restr{\train}{\neg\varphi}| \cdot \impurity(\restr{\train}{\neg\varphi})$.
Informally, $\bestsplit(\train)$ is the predicate 
that splits $\train$ into two sets with the lowest 
entropy, as defined by the function $\impurity$ shown in Figure~\ref{fig:aux}.
Formally, $\impurity$ computes \emph{Gini impurity}, which is used, for instance,
in the CART algorithm~\cite{breiman2017classification}.
Note that if  $\Phi' = \emptyset$,
we assume $\bestsplit(\train)$ is undefined (returns $\nullpred$).
Further, if multiple predicates are possible values of $\bestsplit(\train)$,
we assume one is returned nondeterministically.
Later, in Section~\ref{sec:absdomain}, our abstract interpretation of $\tlearner$ will actually  capture all possible predicates in the case of a tie.

\begin{example}
\label{ex:entropy-cprob}
    Recall our example from Section~\ref{sec:overview}
    and Figure~\ref{fig:example}.
    For readability, we use $\train$ instead of $T_{bw}$ for the name of the dataset.
    Let us compute $\score(T,\varphi)$, where $\varphi$ is $x \leq 10$.    
    We have $|\restr{\train}{\varphi}| = 9$ and
    $|\restr{\train}{\neg\varphi}| = 4$.
    For the classification probabilities, defined by $\summary$ (Figure~\ref{fig:aux}), we have
    $\summary(\restr{\train}{\varphi}) = \langle 7/9, 2/9 \rangle$ and 
    $\summary(\restr{\train}{\neg\varphi}) = \langle 0, 1 \rangle$
    assuming the first element represents white classification;
    e.g., in $\restr{\train}{\varphi}$, there's a $7/9$ chance of being classified as white.
    For $\impurity$, we have
    $\impurity(\restr{\train}{\varphi}) \approx 0.35$ and
    $\impurity(\restr{\train}{\neg\varphi}) = 0$.
    Since $\restr{\train}{\varphi}$ is solely
    composed of black points, its Gini impurity is 0.

    The score of $x \leq 10$ is therefore ${\sim}3.1$.
    For the predicate $x \leq 11$, we get the higher (worse)
    score of ${\sim}3.2$,
    as it generates a more diverse split.

\end{example}

\begin{figure}
\centering
\begin{align*}
    \impurity(\train) &= \sum_{i=1}^k p_i(1-p_i), \ \text{ where } \summary(\train) = \langle p_1,\ldots,p_k \rangle
\\
    \summary(\train) &= \left \langle\frac{|\{(\point,y) \in \train \mid y = i \}|} { |\train| } \right \rangle_{i \in [1,k]}
\end{align*}
\caption{Auxiliary operator definitions. $\impurity$ is Gini impurity;
$\summary$ returns a vector of classification probabilities, one element for each class $i \in [1,k]$.}
\label{fig:aux}
\end{figure}

\paragraph{Filtering the Dataset}
The operator $\filter$ removes elements of $\train$ that 
evaluate differently than $\point$ on $\varphi$.
Formally, 
\[
\filter(\train,\varphi,\point) =\begin{cases}
    \restr{\train}{\varphi} & \text{if } x \models \varphi\\
    \restr{\train}{\neg\varphi} & \text{otherwise}
    \end{cases}
\]

\paragraph{Learner Result}
When $\tlearner$ terminates in a state \linebreak $(\train_r,\varphi_r,\seq_r)$,
we can read the classification of $x$ as the class $i$ with the highest 
number of training elements in $\train_r$.

Using $\summary$, in Figure~\ref{fig:aux},
we compute the probability of each class $i$
for a training set $\train$ as a vector of probabilities.
Finally, $\tlearner$ returns the class with the highest probability:
\[\argmax_{i \in [1,k]} p_i\hspace{1cm} \text{ where } \summary(\train_r)=\langle p_1,\ldots,p_k\rangle
\]
As before, in case of a tie in probabilities, we assume a nondeterministic choice.

\begin{example}
\label{ex:training-example}
    Following the computation from Ex.~\ref{ex:entropy-cprob}, \linebreak
    $\tlearner(T,18)$ terminates in  state 
    $(\restr{\train}{x > 10}, \ {x \leq 10}, \ [x > 10])$.
    Point 18 is associated with the trace $[{x >  10}]$ 
    and is classified as black because     
    $\summary(\restr{\train}{x > 10}) = \langle 0, 1 \rangle$. 
\end{example}


\section{Abstractions of Poisoned Semantics} \label{sec:absdomain}
In this section, we begin by defining a data-poisoning model in which an attacker contributes a number of malicious training items.
Then, we demonstrate how to apply the trace-based learner $\tlearner$
to \emph{abstract sets of training sets}, allowing us to
efficiently prove poisoning-robustness.

\subsection{The $n$-Poisoning Model}\label{sec:npoison-model}
For our purposes, we will consider a poisoning model
where the attacker has contributed up to $n$ elements of the training set---%
we call it $n$-\emph{poisoning}.
%
%
Formally, given a training set $\train$ and a natural number
$n \leq |\train|$, 
we define the following perturbed set:
\[
\Drop_n(\train) = \{\train' \subseteq \train \ : \ |\train \setminus \train'| \leq n\}
\]
In other words,
$\Drop_n(\train)$ captures every training set
the attacker could have possibly started from to arrive at $\train$.

This definition of dataset poisoning matches many 
settings studied in the literature~\cite{chen2017targeted,steinhardt2017certified,xiao2015feature}.
The idea is that an attacker has contributed a number of malicious
data points into the training set to influence the resulting classifier.
For example, \citet{chen2017targeted} consider poisoning a facial recognition model to enable bypassing authentication,
and \citet{xiao2015feature} consider poisoning a malware detector to allow the attacker to install malware.

We do not know which $n$ points in $\train$
are the malicious ones, or if there are malicious points at all. Thus, the set $\Drop_n(\train)$
captures every possible subset of $\train$ where we have removed up to $n$ (potentially malicious) elements.
Our goal is to prove that our classification is robust to up to $n$
possible poisoned points added by the attacker.
So if we try every possible dataset in $\Drop_n(\train)$ and they all result in the same classification on $\point$, then $x$ is robust regardless of the attacker's potential contribution.

Observe that $|\Drop_n(\train)| = \sum_{i=1}^n \binom{|\train|}{i}$.
So even for relatively small datasets and number $n$,
the set of possibilities is massive, e.g., 
for MNIST-1-7 dataset (\S\ref{sec:evaluation}),
for $n=50$, we have about $10^{141}$ possible training sets in $\Drop_n(\train)$.

\subsection{Abstract Domains for Verifying $n$-Poisoning}
\label{sec:abstract-domains-npoisoning}
Our goal is to efficiently evaluate $\tlearner$
on an input $\point$ for all possible training datasets in $\Drop_n(\train)$. If all of them yield the same classification $\class$,
then we know that $\point$ is a robust input.
Our insight is that we can abstractly interpret $\tlearner$
on a symbolic set of training sets without having to fully 
expand it into all of its possible concrete instantiations.
This allows us to train on an enormous number of datasets,
which would be impossible via enumeration. 

Recall that the state of $\tlearner$ is $(\train, \varphi, \seq)$;
for our purposes, we do not have to consider the sequence of predicates $\seq$, as we are only interested in the final classification, which is a function of $\train$.
In this section, we present the \emph{abstract domains} for each component of the learner's state.


\paragraph{Abstract Training Sets}
Abstracting training sets is the main novelty of our technique.
We use the  abstract element $\drop{\train'}{n'}$ to denote a set of training sets
and it captures the definition of 
$\Drop_{n'}(\train')$:
For every training set $T'$ and number $n'$, the concretization
function is $\gamma\left(\drop{\train'}{n'}\right)=\Drop_{n'}(\train')$.
Therefore, we have that initially the \emph{abstraction} function 
 $\alpha(\Drop_n(\train))=\drop{\train}{n}$ is precise.
 Note that an abstract element $\drop{\train'}{n'}$
 succinctly captures a large number of concrete sets, $\Drop_{n'}(\train')$.
 Further, all operations we perform on 
 $\drop{\train'}{n'}$ will only modify
 $\train'$ and $n'$, without resorting to concretization.

We can define an \emph{efficient} join operation on two elements in the abstract domain%
\footnote{%
Elements in the domain are ordered so that $\drop{\train_1}{n_1} \sqsubseteq \drop{\train_2}{n_2}$
if and only if $\train_1 \subseteq \train_2 \land n_1 \leq n_2 - |\train_2 \setminus \train_1|$.
In the text, we define the concretization function,
a special case of the abstraction function, and the join operation;
note that we do not require an explicit meet operation for the purposes of this paper---%
although one is well-defined:
\begin{align*}
\drop{\train_1}{n_1} \sqcap \drop{\train_2}{n_2} \coloneqq&~
\text{if } |\train_1 \setminus \train_2| > n_1 \lor |\train_2 \setminus \train_1| > n_2
\text{ then } \bot \\
&~\text{else } \drop{\train_1 \cap \train_2}{\min(n_1 - |\train_1 \setminus \train_2|, n_2 - |\train_2 \setminus \train_1|)}
\end{align*}
}
as follows:
\begin{definition}[Joins]
\label{def:join}
Given two training sets $\train_1, \train_2$ and $n_1,n_2 \in \mathbb{N}$,
$\drop{\train_1}{n_1} \sqcup \drop{\train_2}{n_2} \coloneqq
\drop{\train'}{n'}$
where $\train' = \train_1 \cup \train_2$
and $n' = \max(|\train_1 \setminus \train_2| + n_2, |\train_2 \setminus \train_1| + n_1)$.
\end{definition}
Notice that the join of two sets is an overapproximation of the union of the two sets.
The following proposition formalizes the soundness of this operation:
\begin{proposition}\label{prop:join}
For any $\train_1$, $\train_2$, $n_1$, $n_2$, the following holds:
\[\gamma\left(\drop{\train_1}{n_1}\right) \cup \gamma\left(\drop{\train_2}{n_2}\right)
\subseteq \gamma(\drop{\train_1}{n_1} \sqcup \drop{\train_2}{n_2}).
\]
\end{proposition}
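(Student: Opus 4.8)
The plan is to prove the containment elementwise by unfolding the concretization. By definition, $\gamma(\drop{\train'}{n'}) = \Drop_{n'}(\train') = \{S \subseteq \train' : |\train' \setminus S| \leq n'\}$, so I would fix an arbitrary $S$ belonging to the left-hand side and verify the two membership conditions for the joined element $\drop{\train'}{n'}$, where $\train' = \train_1 \cup \train_2$ and $n' = \max(|\train_1 \setminus \train_2| + n_2, |\train_2 \setminus \train_1| + n_1)$. Since the left-hand side is a union, $S$ lies in $\gamma(\drop{\train_1}{n_1})$ or in $\gamma(\drop{\train_2}{n_2})$; these two cases are symmetric under swapping the indices $1 \leftrightarrow 2$, so I would treat only the first, in which $S \subseteq \train_1$ and $|\train_1 \setminus S| \leq n_1$.

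The subset condition is immediate: $S \subseteq \train_1 \subseteq \train_1 \cup \train_2 = \train'$. The crux is bounding $|\train' \setminus S|$, for which I would decompose $\train' \setminus S = (\train_1 \cup \train_2) \setminus S$ using $S \subseteq \train_1$. The key set-theoretic identity is $(\train_1 \cup \train_2) \setminus S = (\train_1 \setminus S) \cup (\train_2 \setminus \train_1)$: every element of $\train_2 \setminus S$ that also lies in $\train_1$ is already counted in $\train_1 \setminus S$, so only the part of $\train_2$ outside $\train_1$ contributes anything new. Moreover these two pieces are disjoint, since $\train_1 \setminus S \subseteq \train_1$ whereas $\train_2 \setminus \train_1$ is disjoint from $\train_1$.

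With this decomposition the cardinality is additive: $|\train' \setminus S| = |\train_1 \setminus S| + |\train_2 \setminus \train_1| \leq n_1 + |\train_2 \setminus \train_1|$, using the hypothesis $|\train_1 \setminus S| \leq n_1$. Finally I would observe that $n_1 + |\train_2 \setminus \train_1|$ is exactly one of the two arguments of the max defining $n'$, so $|\train' \setminus S| \leq n'$, which establishes $S \in \gamma(\drop{\train'}{n'})$. The symmetric case yields $|\train' \setminus S| \leq |\train_1 \setminus \train_2| + n_2 \leq n'$ via the other argument of the max, which is precisely why the join takes the maximum of the two expressions. I expect the only nontrivial step to be getting the set decomposition and its disjointness right so that the counts add exactly; everything else is bookkeeping against the definitions of $\gamma$ and of the join.
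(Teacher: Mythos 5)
Your proof is correct and follows essentially the same route as the paper's: both argue elementwise, reduce to one case by symmetry, use the decomposition $(\train_1 \cup \train_2) \setminus S = (\train_1 \setminus S) \cup (\train_2 \setminus \train_1)$, and bound the count by one argument of the $\max$ in the definition of $\sqcup$. The only (harmless) difference is that you observe the two pieces are disjoint and get exact additivity, where the paper simply uses subadditivity of cardinality.
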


\begin{example}
For any training set $\train_1$, if we consider the abstract sets
$\drop{\train_1}{2}$ and $\drop{\train_1}{3}$, because
the second set represents strictly more
concrete training sets, we have
$$\drop{\train_1}{2} \sqcup \drop{\train_1}{3}
=
\drop{\train_1}{3}$$

Now consider the training set $\train_2 = \{x_1,x_2\}$.
We have 
\[
    \drop{\train_2}{2} \sqcup \drop{\train_2\cup\{x_3\}}{2}
= \drop{\train_2\cup\{x_3\}}{3}
    \]
Notice how the join increased the poisoned elements from 2 to 3 to accommodate for the additional element $x_3$.

\end{example}

\paragraph{Abstract Predicates and Numeric Values}
When abstractly interpreting what predicates the learner might choose for different training
sets, we will need to abstractly represent sets of possible predicates.
Simply, a set of predicates is abstracted \emph{precisely} as the corresponding set of predicates $\Psi$---i.e., 
for every set $\Psi$, we have $\alpha(\Psi)=\Psi$
and $\gamma(\Psi)=\Psi$. Moreover, $\Psi_1\sqcup \Psi_2=\Psi_1\cup \Psi_2$.
For certain operations, it will be handy for $\Psi$ to contain a special null predicate $\nullpred$.

When abstractly interpreting numerical operations, like $\summary$ and $\impurity$,
we will need to abstract sets of numerical values.
We do so using the standard
\emph{intervals} abstract domain (denoted $[l,u]$).
For instance,
$\alpha(\{0.2,0.4,0.6\})=[0.2,0.6]$
and 
$\gamma([0.2,0.6])=\{x\mid 0.2 \leq x \leq 0.6\}$.
The join of two intervals is defined as $[l_1,u_1]\sqcup [l_2,u_2]=[\min(l_1,l_2),\max(r_1,r_2)]$.
Interval arithmetic follows the standard definitions and we thus elide it here.%
\footnote{While we choose intervals as our numerical abstract
domain in this paper, any numerical abstract domain could be used.}

\subsection{Abstract Learner $\atlearner$}
We are now ready to define an abstract interpretation of the semantics of our decision-tree learner, denoted $\atlearner$.

\paragraph{Abstract Domain}
Recall that the state of $\tlearner$ is $(\train, \varphi, \seq)$;
for our purposes, we do not have to consider the sequence of predicates $\seq$, as we are only interested in the final classification, which is a function of $\train$.
Using the domains described in Section~\ref{sec:abstract-domains-npoisoning},
at each point in the learner, our abstract state is 
a pair
$(\drop{\train'}{n'}, \Psi')$ (i.e., in the product abstract domain) that tracks the current set of training sets and the current set of possible most recent predicates
the algorithm has split on (for all considered training sets).

When verifying \textit{n}-poisoning for a training set $\train$, the initial abstract state of the learner will be the pair
$(\drop{\train}{n}, \{\nullpred\})$.
In the rest of the section, we define the abstract semantics (i.e., our abstract transformers) for all 
the operations performed by $\atlearner$.
For operations that only affect one element of the state, we assume that the other component is left unchanged.

\subsection{Abstract Semantics of Auxiliary Operators}\label{sec:auxops}
We will begin by defining the abstract semantics
of the auxiliary operations in the algorithm before proceeding
to the core operations, $\filter$ and $\bestsplit$.
This is because the auxiliary operators are simpler and highlight
the nuances of our abstraction.

Let us begin by considering $\arestr{\drop{\train}{n}}{\varphi}$,
which is the abstract analog of $\restr{\train}{\varphi}$.
\begin{equation}
\label{eq:abstract-semantics-dropn}
\arestr{\drop{\train}{n}}{\varphi} \coloneqq \drop{\restr{\train}{\varphi}}{\min(n,|\restr{\train}{\varphi}|)}
\end{equation}
Simply, it removes elements not satisfying $\varphi$ from $\train$;
since the size of $\restr{\train}{\varphi}$ can go below $n$, we take the minimum of the two.

\begin{proposition}\label{prop:arestr}
    Let $\train' \in \gamma(\drop{\train}{n})$.
    For any predicate $\varphi$, we have $\restr{\train'}{\varphi} \in \gamma(\arestr{\drop{\train}{n}}{\varphi})$.
\end{proposition}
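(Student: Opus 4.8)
The plan is to unfold both the hypothesis and the goal according to the concretization function $\gamma$ and the definition of $\Drop$, reducing the statement to two elementary set-theoretic facts. Writing $\train' \in \gamma(\drop{\train}{n}) = \Drop_n(\train)$ means precisely $\train' \subseteq \train$ and $|\train \setminus \train'| \leq n$. Using Equation~(\ref{eq:abstract-semantics-dropn}), the goal $\restr{\train'}{\varphi} \in \gamma(\arestr{\drop{\train}{n}}{\varphi}) = \Drop_{\min(n,|\restr{\train}{\varphi}|)}(\restr{\train}{\varphi})$ unfolds to two obligations: (i) the inclusion $\restr{\train'}{\varphi} \subseteq \restr{\train}{\varphi}$, and (ii) the cardinality bound $|\restr{\train}{\varphi} \setminus \restr{\train'}{\varphi}| \leq \min(n, |\restr{\train}{\varphi}|)$.

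First I would dispatch (i): since $\train' \subseteq \train$, every element of $\train'$ satisfying $\varphi$ already lies in $\train$ and satisfies $\varphi$, so filtering is monotone and $\restr{\train'}{\varphi} \subseteq \restr{\train}{\varphi}$. It is convenient to record the sharper identity $\restr{\train'}{\varphi} = \restr{\train}{\varphi} \cap \train'$, which follows immediately from $\train' \subseteq \train$ and makes the second obligation transparent.

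The heart of the argument is obligation (ii), and the key observation is that filtering introduces no new discrepancies: I would show $\restr{\train}{\varphi} \setminus \restr{\train'}{\varphi} \subseteq \train \setminus \train'$. Indeed, any $(x,y)$ in the left-hand side lies in $\train$ and satisfies $\varphi$; were it also a member of $\train'$ it would satisfy $\varphi$ there too, placing it in $\restr{\train'}{\varphi}$ and contradicting its exclusion, so it must lie in $\train \setminus \train'$. Taking cardinalities yields $|\restr{\train}{\varphi} \setminus \restr{\train'}{\varphi}| \leq |\train \setminus \train'| \leq n$. Combining this with the trivial bound $|\restr{\train}{\varphi} \setminus \restr{\train'}{\varphi}| \leq |\restr{\train}{\varphi}|$ (the set is a subset of $\restr{\train}{\varphi}$) gives exactly the minimum appearing in the definition of $\arestr{\drop{\train}{n}}{\varphi}$, closing (ii).

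There is no genuine obstacle here beyond careful bookkeeping; the only substantive step is the containment $\restr{\train}{\varphi} \setminus \restr{\train'}{\varphi} \subseteq \train \setminus \train'$, which formalizes the intuition that restricting to $\varphi$ cannot convert a surviving element into a dropped one. The role of the $\min$ in the abstract transformer is precisely to cap the number of remaining droppable elements by the size of the filtered set, and the trivial subset bound above is what justifies that cap.
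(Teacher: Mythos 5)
Your proof is correct and takes essentially the same route as the paper's: both establish $\restr{\train'}{\varphi} \subseteq \restr{\train}{\varphi}$ and then bound $|\restr{\train}{\varphi} \setminus \restr{\train'}{\varphi}|$ by showing this difference is contained in $\train \setminus \train'$, hence has size at most $n$. The only cosmetic difference is that the paper handles the $\min$ by a two-case analysis on which argument achieves it, while you combine the bound by $n$ with the trivial bound by $|\restr{\train}{\varphi}|$ directly; these are logically the same step.
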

Now consider $\summary(\train)$, which returns a vector of probabilities for different classes.
Its abstract version returns an interval for each probability,
denoting the lower and upper bounds based on the training sets in the abstract set:%
\footnote{%
Note that this transformer can be more precise:
for example, the interval division as written is not guaranteed to be a subset of $[0,1]$,
despite the fact that all concrete values would be.
Throughout this section, many of the transformers are simply the ``natural'' lifting
of numerical arithmetic to interval arithmetic;
while this may not be optimal, we do so to make it easier
to see the correctness of the approach (and to make proofs and implementation straightforward).

In the case of $\asummary$, we can compute the optimal transformer inexpensively:
it is equivalent to write that $\summary(\train)$ computes, for each class $i \in [1,k]$,
the \emph{average} of the multiset
${S_i = [\text{if } \class=i \text{ then } 1 \text{ else } 0 \mid (\point,\class) \in \train]}$.
We can then have $\asummary(\drop{\train}{n})$ perform a similar computation for each component:
let $L_i$ denote the $m$-many least elements of $S_i$,
and let $U_i$ denote the $m$-many greatest elements of $S_i$,
where $m = |\train| - n$.
These $L_i$ and $U_i$ exhibit extremal behavior of averaging,
so we can directly compute the endpoints of the interval assigned to each class
as $[\frac{1}{m} \sum_{b\in L_i} b, \frac{1}{m} \sum_{b\in U_i} b]$.

Note that our implementation used for the evaluation (Section~\ref{sec:evaluation})
\emph{does} employ this optimal transformer for $\asummary$,
while the other transformers match what is presented.
}%
\[
\asummary(\drop{\train}{n}) \coloneqq
\left\langle \frac{[\max(0,c_i-n), c_i]}{[|T| - n, |T|]} \right\rangle_{i \in [1,k]}
\]
where $c_i = |\{(x,i) \in \train\}|$.
In other words, for each class $i$, we need to consider
the best- and worst-case probability based on removing $n$ elements from the training set, as denoted by the denominator and the numerator.
Note that in the corner case where $n=|\train|$,
we set $\asummary(\drop{\train}{n}) = \langle[0,1]\rangle_{i\in[1,k]}$.

\begin{proposition}\label{prop:asummary}
    Let $\train' \in \gamma(\drop{\train}{n})$.
    Then, $$\summary(\train') \in \gamma\left(\asummary(\drop{\train}{n})\right)$$
    where $\gamma\left(\asummary(\drop{\train}{n})\right)$ is the set of all possible probability vectors in the vector of intervals.
\end{proposition}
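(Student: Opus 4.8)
The plan is to unfold both the concrete and abstract definitions of $\summary$ and to check, coordinate by coordinate, that the concrete probability for each class lands inside the corresponding interval produced by $\asummary$. The statement is a pure soundness claim (containment, not optimality), so no tightness argument is required; it suffices to bound the numerator and the denominator separately and then appeal to the meaning of interval division.

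First I would fix an arbitrary $\train' \in \gamma(\drop{\train}{n}) = \Drop_n(\train)$; by definition this means $\train' \subseteq \train$ with $|\train \setminus \train'| \leq n$. Write $c_i = |\{(x,i) \in \train\}|$ and $c_i' = |\{(x,i) \in \train'\}|$ for the per-class counts, and set $N = |\train|$, $N' = |\train'|$. For the numerator: since $\train' \subseteq \train$ only drops elements, $c_i' \leq c_i$; since at most $n$ elements are dropped in total (hence at most $n$ of class $i$), $c_i' \geq c_i - n$, while trivially $c_i' \geq 0$; thus $c_i' \in [\max(0, c_i - n), c_i]$. For the denominator, $N - n \leq N' \leq N$ follows directly from $|\train \setminus \train'| \leq n$, so $N' \in [\,|\train| - n,\, |\train|\,]$. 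These are precisely the numerator and denominator intervals appearing in the definition of $\asummary(\drop{\train}{n})$.

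The final step is to invoke the semantics of interval division: since $c_i'$ is a single point of $[\max(0, c_i - n), c_i]$ and $N'$ a single point of $[\,|\train| - n,\, |\train|\,]$, the quotient $\summary(\train')_i = c_i'/N'$ is by definition one of the point-quotients covered by $[\max(0, c_i - n), c_i] / [\,|\train| - n,\, |\train|\,]$, which is the $i$-th component of $\asummary(\drop{\train}{n})$. Carrying this out for every $i \in [1,k]$ yields $\summary(\train') \in \gamma(\asummary(\drop{\train}{n}))$, as required.

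The one place the argument could go subtly wrong is the independence implicit in interval division: the bounds on $c_i'$ and on $N'$ are derived separately, even though dropping a given element changes both simultaneously, so no single $\train'$ need realize an endpoint of both intervals at once. This is harmless for soundness — we merely overapproximate, which is exactly the slack the footnote's optimal transformer removes — but I would flag it explicitly rather than silently assume the interval is tight. I would also dispatch the corner case $n = |\train|$: there $\train'$ may be empty and $\summary(\train')$ undefined, and the definition sets $\asummary(\drop{\train}{n}) = \langle[0,1]\rangle_{i \in [1,k]}$, whose concretization contains every probability vector, so the claim holds trivially.
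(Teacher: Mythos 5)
Your proof is correct and follows essentially the same route as the paper's: the paper dispatches the main case ($n < |\train|$) by appealing directly to the soundness of interval arithmetic---which your numerator/denominator bounds $c_i' \in [\max(0, c_i - n), c_i]$ and $|\train'| \in [\,|\train|-n, |\train|\,]$ merely spell out---and handles the corner case $n = |\train|$ exactly as you do, by noting that $\langle[0,1]\rangle_{i\in[1,k]}$ contains every probability vector. Your explicit remark that the numerator and denominator bounds need not be realized by the same $\train'$ (harmless overapproximation, the slack removed by the optimal transformer in the footnote) is a useful clarification the paper leaves implicit, but it does not change the argument.
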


\begin{example}
Consider the training set on the left side of the tree in Figure~\ref{fig:example}; call it $\train_\ell$.
It has 7 white elements and 2 black elements.
$\summary(\train_\ell)=\left\langle 7/9, 2/9\right\rangle$,
where the first element is the white probability.
$\asummary(\drop{\train_\ell}{2})$
produces the vector
$\left\langle \left[5/9, 1\right], \left[0, 2/7\right]\right\rangle$.
Notice the loss of precision in the lower bound
of the first element.
If we remove two white elements, we should get a probability of $5/7$, but the interval domain cannot capture the relation between the numerator and denominator in the definition of  $\asummary$.
\end{example}
The abstract version of the  Gini impurity is identical to the concrete one,
except that  it performs interval arithmetic:
\[
    \aimpurity(\train) = \sum_{i=1}^k \iota_i([1,1]-\iota_i), \ \text{ where } \asummary(\train) = \langle\iota_1,\ldots,\iota_k\rangle    
\]
Each term $\iota_i$ denotes an interval.

\subsection{Abstract Semantics of $\filter$}
We are now ready to define the abstract version of $\filter$.
Since we are dealing with abstract training sets, as well as a set of predicates 
$\Psi$, we need to consider for each $\varphi\in\Psi$
all cases where $x \models \varphi$ or $x \models \neg \varphi$, and take the join of all the resulting training sets (Definition~\ref{def:join}).
Let 
$$\Psi_x = \{\varphi \in \Psi \mid x \models \varphi\}
~\text{ and  }~ \Psi_{\neg x} = \{\varphi \in \Psi \mid x \models \neg\varphi\}$$
Then,
\[
    \afilter(\drop{\train}{n}, \Psi, \point) \coloneqq
    \left(\bigsqcup_{\varphi \in \Psi_x} \arestr{\drop{\train}{n}}{\varphi}\right)
    \sqcup \left(\bigsqcup_{\varphi \in \Psi_{\neg x}} \arestr{\drop{\train}{n}}{\neg\varphi}\right)
\]

\begin{proposition}\label{prop:afilter}
    Let $\train' \in \gamma(\drop{\train}{n})$ and $\varphi' \in \Psi$.
    Then,
    $$
    \filter(\train',\varphi',\point) \in
    \gamma\left(\afilter(\drop{\train}{n},\Psi,\point)\right)
    $$
\end{proposition}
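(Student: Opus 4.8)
The plan is to reduce the claim to the soundness of the restriction transformer (Proposition~\ref{prop:arestr}) together with the soundness of joins (Proposition~\ref{prop:join}), via a case split on whether $\point$ satisfies $\varphi'$. Since each predicate is a Boolean function $\tdom \to \mathbb{B}$, exactly one of $\point \models \varphi'$ and $\point \models \neg\varphi'$ holds, so the two cases are exhaustive and mutually exclusive; moreover they line up precisely with the two branches of the concrete $\filter$ and with membership of $\varphi'$ in $\Psi_x$ versus $\Psi_{\neg x}$.

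In the first case, $\point \models \varphi'$, the concrete filter yields $\filter(\train',\varphi',\point) = \restr{\train'}{\varphi'}$, and $\varphi' \in \Psi_x$. Applying Proposition~\ref{prop:arestr} to the predicate $\varphi'$ and the set $\train' \in \gamma(\drop{\train}{n})$ gives $\restr{\train'}{\varphi'} \in \gamma(\arestr{\drop{\train}{n}}{\varphi'})$. The second case is symmetric: when $\point \models \neg\varphi'$ we have $\filter(\train',\varphi',\point) = \restr{\train'}{\neg\varphi'}$ and $\varphi' \in \Psi_{\neg x}$, so applying Proposition~\ref{prop:arestr} to the predicate $\neg\varphi'$ yields $\restr{\train'}{\neg\varphi'} \in \gamma(\arestr{\drop{\train}{n}}{\neg\varphi'})$. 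In either case I obtain that the concrete filter output lies in the concretization of one particular abstract restriction, and that restriction appears as a summand of the iterated join defining $\afilter$.

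It then remains to argue that membership in the concretization of a single summand implies membership in the concretization of the whole join. For this I would first record a small monotonicity lemma: $\gamma(a) \subseteq \gamma(a \sqcup b)$ for any abstract training sets $a,b$, which is immediate from Proposition~\ref{prop:join} since $\gamma(a) \subseteq \gamma(a) \cup \gamma(b) \subseteq \gamma(a \sqcup b)$. Then, by a straightforward induction on the (finite) number of summands, $\gamma(a_j) \subseteq \gamma\bigl(\bigsqcup_i a_i\bigr)$ for every index $j$: the base case is trivial and the inductive step applies the lemma. Instantiating this with the summand identified above ($\arestr{\drop{\train}{n}}{\varphi'}$ in the first case, $\arestr{\drop{\train}{n}}{\neg\varphi'}$ in the second) and chaining the inclusions closes the argument.

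The only point requiring any care is the generalization of binary join soundness to the iterated join implicit in $\afilter$; everything else is a mechanical case split feeding directly into Propositions~\ref{prop:arestr} and~\ref{prop:join}. I should make sure the induction covers the two-tiered structure of $\afilter$ (the outer $\sqcup$ between the $\Psi_x$-block and the $\Psi_{\neg x}$-block, and the inner joins within each block), but since both tiers are just finite iterated binary joins, the same monotonicity argument applies uniformly, so no additional work is needed.
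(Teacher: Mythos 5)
Your proof is correct and takes essentially the same approach as the paper: the paper's entire proof is the one-line remark that the claim is an immediate consequence of the soundness of the abstract restriction operator (Proposition~\ref{prop:arestr}) and the soundness of the join (Proposition~\ref{prop:join}). Your case split on $\point \models \varphi'$ versus $\point \models \neg\varphi'$ and the monotonicity-plus-induction argument for iterated joins are simply the fully spelled-out version of that composition, with no substantive difference in strategy.
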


\begin{example}
Consider the full dataset $T_{\emph{bw}}$ from Figure~\ref{fig:example}.
For readability, we write $T$ instead of $T_{\emph{bw}}$ in the example.
Let $x$ denote the input with numerical feature 4,
and let $\Psi = \{x \leq 10\}$.
First, note that  because
$\Psi_{\neg x}$ is the empty set, the right-hand side of the result of applying the $\afilter$
operator will be the bottom element $\drop{\emptyset}{0}$ (i.e., the identity element for $\sqcup$).
Then, 
\begin{align*}\afilter(\drop{\train}{2},\Psi,x) &= 
\arestr{\drop{\train}{2}}{x\leq 10} \sqcup \drop{\emptyset}{0} & (\text{def. of } \afilter)\\
&= \drop{\restr{\train}{x \leq 10}}{2} \sqcup \drop{\emptyset}{0}& (\text{def. of } \arestr{\drop{\train}{n}}{\varphi})\\
&= \drop{\restr{\train}{x \leq 10}}{2} & (\text{def. of } \sqcup).
\end{align*}
%
\end{example}

\subsection{Abstract Semantics of $\bestsplit$}
We are now ready to define the abstract version of $\bestsplit$.
We begin by defining $\abestsplit$ without handling trivial predicates,
then we refine our definition.

\paragraph{Minimal Intervals}
Recall that in the concrete case, $\bestsplit$
 returns a predicate 
that minimizes the function $\score(\train, \varphi)$.
To lift $\bestsplit$ to the abstract semantics,
we define $\ascore$, which returns an interval, and what it means to be a \emph{minimal} interval---i.e., the interval
corresponding to the abstract minimal value of the objective function
$\ascore(\train, \varphi)$.

Lifting $\score(\train, \varphi)$ to  $\ascore(\drop{\train}{n}, \varphi)$ can
be done using the
sound transformers for the intermediary computations:
\begin{align*}
\ascore(\drop{\train}{n}, \varphi) \coloneqq&~
|\arestr{\drop{\train}{n}}{\varphi}| \cdot \aimpurity(\arestr{\drop{\train}{n}}{\varphi}) \\
&~+ |\arestr{\drop{\train}{n}}{\neg\varphi}| \cdot \aimpurity(\arestr{\drop{\train}{n}}{\neg\varphi})
\end{align*}
where $\lvert\drop{\train}{n}\rvert \coloneqq [|\train| - n, |\train|]$.

However, given a set of predicates $\Phi$,
$\abestsplit$ must return the ones with the minimal scores.
Before providing the formal definition, we illustrate the idea with an example.
\begin{example}
Imagine a set of predicates $\Phi=\{\varphi_1,\varphi_2,\varphi_3,\varphi_4\}$
with the following intervals for $\ascore(\drop{\train}{n},\varphi_i)$.
\begin{figure}[h!]
  \includegraphics[scale=1.5]{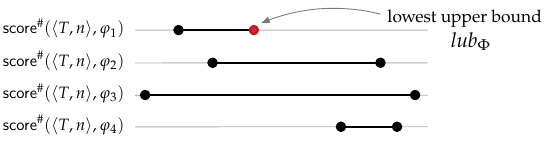}
\end{figure}

\noindent
Notice that $\varphi_1$ has the lowest upper bound for score (denoted in red and named $\lub_{\Phi}$).
Therefore, we call $\ascore(\drop{\train}{n},\varphi_1)$ the \emph{minimal interval} with respect to $\Phi$.
$\abestsplit$ returns all the predicates whose scores \emph{overlap} with the minimal interval $\ascore(\drop{\train}{n},\varphi_1)$, 
which in this case are $\varphi_1$, $\varphi_2$, and $\varphi_3$. 
This is because  there is a chance that $\varphi_1$, $\varphi_2$ and $\varphi_3$ are indeed
the predicates with the best score, but our abstraction has lost too much
precision for us to tell conclusively. 
\end{example}

Let $\lb$/$\ub$ be functions that return the lower/upper bound of an interval.
First, we define the lowest upper bound among the abstract scores of the predicates in $\Phi$ as 
\[
\lub_\Phi = \min_{\varphi\in\Phi} \ub(\ascore(\drop{\train}{n},\varphi))
\]
We can now define the set of predicates whose score overlaps with the minimal interval as:
\[
\{ \varphi \in \Phi \mid \lb(\ascore(\drop{\train}{n},\varphi)) \leq \lub_\Phi \}
\]

\paragraph{Dealing with Trivial Predicates}
Our formulation above considers the full set of predicates, $\Phi$.
To be more faithful to the concrete semantics,
$\abestsplit$ needs to eliminate trivial predicates from this set.
In the concrete case, we only considered $\varphi$
as a possible best split if $\varphi$ performed a non-trivial split on $\train$,
which we denoted $\varphi \in \Phi'$.
(Recall that a trivial split of $\train$ is one that returns $\emptyset$ or $\train$.)

This is a little tricky to lift to our abstract case,
since  a predicate $\varphi$ could non-trivially split
\emph{some} of the concrete datasets but not others.
We lift the set $\Phi'$ in two ways:
\begin{itemize}[topsep=5pt, partopsep=0pt, leftmargin=*]
\item {\it Universal predicates}:  the predicates that are non-trivial splits
\emph{for all} concrete training sets in $\gamma(\drop{\train}{n})$\footnote{Note that checking  $\emptyset \not\in \gamma(\drop{\train}{n})$
is equivalent to checking $n \neq |\train|$.}
\[    \Phi_\forall = \{\varphi \in \Phi \mid
        \emptyset \not\in \gamma(\arestr{\drop{\train}{n}}{\varphi}) \land
        \emptyset \not\in \gamma(\arestr{\drop{\train}{n}}{\neg\varphi})\}
\]

\item {\it Existential predicates}:  the predicates that are non-trivial splits
\emph{for at least one} concrete training set in $\gamma(\drop{\train}{n})$
\[
\Phi_\exists = \{\varphi \in \Phi \mid
    \drop{\emptyset}{\cdot} \neq \arestr{\drop{\train}{n}}{\varphi} \land
    \drop{\emptyset}{\cdot} \neq \arestr{\drop{\train}{n}}{\neg\varphi}\}
\]
\end{itemize}
Finally, the definition of $\abestsplit$ considers two cases:
\begin{align*}
    \abestsplit(\drop{\train}{n}) \coloneqq&~
    \text{if } \Phi_\forall = \emptyset \text{ then } \Phi_\exists \cup \{\nullpred\} \text{ else } \\
    &~ \{\varphi \in \Phi_\exists : \lb(\ascore(\drop{\train}{n}, \varphi)) \leq \lub_{\Phi_\forall}\}
\end{align*}

 

The first case captures when no single predicate is non-trivial for all sets:
we then return all predicates that succeed on at least one training set in $\drop{\train}{n}$,
since we cannot be sure one is strictly better than another.
To be sound, we also assume the cause of $\Phi_\forall$ being empty is 
a particular concrete training set for which every predicate forms a trivial split,
hence we include $\nullpred$ as a possibility.
The second case corresponds to returning the predicates with minimal scores.

%

\begin{lemma}\label{lem:abestsplit}
Let $\train' \in \gamma(\drop{\train}{n})$.
Then, 
\[
    \bestsplit(\train') \in
 \gamma(\abestsplit(\drop{\train}{n}))
 \]
\end{lemma}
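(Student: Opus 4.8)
The goal is to show that the concretely chosen split lies in the abstract result; since predicate sets concretize to themselves, $\gamma(\abestsplit(\drop{\train}{n})) = \abestsplit(\drop{\train}{n})$, so the claim reduces to a membership of predicates, $\bestsplit(\train') \in \abestsplit(\drop{\train}{n})$. First I would establish the one ingredient the statement tacitly relies on: soundness of $\ascore$, namely that $\score(\train',\varphi) \in \gamma(\ascore(\drop{\train}{n},\varphi))$ for every $\train' \in \gamma(\drop{\train}{n})$ and every $\varphi$. This follows by composing the soundness of the intermediary transformers---Proposition~\ref{prop:arestr} for $\arestr$, Proposition~\ref{prop:asummary} (hence $\aimpurity$) for the impurity term, the size interval $\lvert\arestr{\drop{\train}{n}}{\varphi}\rvert$, and monotonicity of interval arithmetic for the product and sum---so that both endpoints bound the concrete score: $\lb(\ascore(\drop{\train}{n},\varphi)) \leq \score(\train',\varphi) \leq \ub(\ascore(\drop{\train}{n},\varphi))$.

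Next I would fix $\train' \in \gamma(\drop{\train}{n})$ and split on whether $\bestsplit(\train')$ is $\nullpred$ or an honest predicate $\phibest$. If $\bestsplit(\train') = \nullpred$, then $\Phi'$ (relative to $\train'$) is empty, so no predicate splits $\train'$ non-trivially; in particular $\Phi_\forall = \emptyset$, since any universal predicate would split $\train'$ non-trivially, so $\abestsplit$ takes its first branch and returns $\Phi_\exists \cup \{\nullpred\}$, which contains $\nullpred$. Otherwise $\phibest$ is a non-trivial split of $\train'$, i.e. $\restr{\train'}{\phibest} \neq \emptyset$ and $\restr{\train'}{\neg\phibest} \neq \emptyset$; since $\train' \subseteq \train$, these are contained in $\restr{\train}{\phibest}$ and $\restr{\train}{\neg\phibest}$ respectively, both therefore nonempty, giving $\phibest \in \Phi_\exists$. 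If $\Phi_\forall = \emptyset$ we are again in the first branch and done; the interesting case is $\Phi_\forall \neq \emptyset$.

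For that second branch I must show $\lb(\ascore(\drop{\train}{n},\phibest)) \leq \lub_{\Phi_\forall}$. Let $\psi \in \Phi_\forall$ realize the minimum defining $\lub_{\Phi_\forall}$, so $\ub(\ascore(\drop{\train}{n},\psi)) = \lub_{\Phi_\forall}$. The key observation is that membership in $\Phi_\forall$ means $\psi$ splits \emph{every} concrete set in $\gamma(\drop{\train}{n})$ non-trivially, hence $\psi$ is a legitimate competitor in the concrete optimization over $\train'$, i.e. $\psi \in \Phi'$ relative to $\train'$. Then I chain $\lb(\ascore(\drop{\train}{n},\phibest)) \leq \score(\train',\phibest) \leq \score(\train',\psi) \leq \ub(\ascore(\drop{\train}{n},\psi)) = \lub_{\Phi_\forall}$, where the outer inequalities use $\ascore$-soundness and the middle one uses that $\phibest$ minimizes $\score(\train',\cdot)$ over $\Phi' \ni \psi$. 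Thus $\phibest$ survives the filter and lies in $\abestsplit(\drop{\train}{n})$.

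I expect the main obstacle to be this last chain, specifically justifying that $\psi$ is a valid concrete competitor: I must translate the abstract non-triviality condition $\emptyset \notin \gamma(\arestr{\drop{\train}{n}}{\psi})$ into $\lvert\restr{\train}{\psi}\rvert > n$, then deduce $\restr{\train'}{\psi} \neq \emptyset$ for the particular $\train'$ (using $\lvert\train \setminus \train'\rvert \leq n$), and run the symmetric argument for $\neg\psi$, so that the concrete optimality bound $\score(\train',\phibest)\le\score(\train',\psi)$ is genuinely available. Everything else is bookkeeping over the case split and the routine two-sided use of interval soundness.
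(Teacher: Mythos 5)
Your proof is correct and takes essentially the same route as the paper's: establish soundness of $\ascore$ by composing the sound intermediary transformers, case-split on $\bestsplit(\train') = \nullpred$ versus $\bestsplit(\train') = \phibest$, observe $\phibest \in \Phi_\exists$, and conclude via the chain $\lb(\ascore(\drop{\train}{n},\phibest)) \leq \score(\train',\phibest) \leq \score(\train',\psi) \leq \ub(\ascore(\drop{\train}{n},\psi)) = \lub_{\Phi_\forall}$, where $\psi$ realizes $\lub_{\Phi_\forall}$ and is a legitimate concrete competitor because membership in $\Phi_\forall$ forces it to split $\train'$ non-trivially. The only difference is one of detail: you make explicit the translation of the abstract non-triviality conditions into concrete ones (e.g., $|\restr{\train}{\psi}| > n$ implying $\restr{\train'}{\psi} \neq \emptyset$), a step the paper's proof leaves implicit when invoking the optimality of $\phibest$ against $\psi^*$.
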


\subsection{Abstracting Conditionals}
We abstractly interpret conditionals in $\tlearner$,
as is standard, by taking the join of all abstract states
from the feasible \emph{then} and \emph{else} paths.
In $\tlearner$, there are two branching statements of interest for our purposes,
one with the condition $\impurity(\train) = 0$ and one with $\varphi = \nullpred$. 

Let us consider the condition $\varphi = \nullpred$.
Given an abstract state $(\drop{\train}{n},\Psi)$,
we simply set $\Psi = \{\nullpred\}$ and propagate the state to the \emph{then} branch
(unless, of course, $\nullpred \not\in \Psi$, in which case we omit this branch).
For $\varphi \neq \nullpred$, we remove $\nullpred$ from $\Psi$ and propagate the resulting state through the \emph{else} branch.

Next, consider the conditional $\impurity(\train) = 0$.
For the \emph{then} branch,
we need to \emph{restrict}
an abstract state $(\drop{\train}{n},\Psi)$
to training sets with 0 entropy:
intuitively, this occurs when all elements have the same classification.
We ask: \emph{are there any concretizations composed of elements of the same class?},
and we proceed through the \emph{then} branch with the following training set abstraction:
\[
\bigsqcup_{i \in [1,k]} \puresets(\drop{\train}{n},i)
\]
where
\begin{align*}
\puresets(\drop{\train}{n},i) \coloneqq&~
\text{Let } T' = \{(x,y) \in \train \mid y=i\} \text{ in } \\
&~
\text{if } |\train \setminus \train'| \leq n
\text{ then } \drop{\train'}{n- |\train\setminus \train'| } \\
&~
\text{else } \bot
\end{align*}
The idea is as follows:
the set $\train'$ defines a subset of $\train$ containing
only elements of class $i$.
But if we have to remove more than $n$ elements from $\train$
to arrive at $\train'$, then the conditional is not realizable
by a concrete training set of class $i$,
and so we return the empty abstract state.

In the case of $\impurity(\train) \neq 0$ (the \emph{else} branch),
we soundly (imprecisely) propagate the original state without restriction.

\subsection{Soundness of Abstract Learner}
Finally, $\atlearner$
soundly overapproximates the results of  $\tlearner$ and can therefore be used to
prove robustness to $n$-poisoning.

\begin{theorem}\label{thm:main}
Let $\train' \in \gamma(\drop{\train}{n})$,
let $(\train_f', \cdot,\cdot)$ be the final state of $\tlearner(\train',\point)$,
and let $(\drop{\train_f''}{n_f},\cdot)$ be the final abstract state of $\atlearner(\drop{\train}{n},\point)$.
Then $\train_f' \in \gamma(\drop{\train_f''}{n_f})$.
\end{theorem}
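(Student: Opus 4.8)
The plan is to prove the theorem by induction on the number of loop iterations executed by $\tlearner$, maintaining a containment invariant that ties the concrete state of $\tlearner(\train',\point)$ to the abstract state of $\atlearner(\drop{\train}{n},\point)$ at every corresponding program point. Because both learners share the same control-flow skeleton (the $d$-fold loop, the two guarded returns, and the assignments to $\train$ and $\varphi$), I would run them in lock-step: after the first $j$ iterations, if the concrete state is $(\train'_j,\varphi'_j,\cdot)$ and the abstract state is $(\drop{\train''_j}{n_j},\Psi_j)$, the invariant I want is
\[
\train'_j \in \gamma(\drop{\train''_j}{n_j}) \quad\text{and}\quad \varphi'_j \in \Psi_j.
\]
The base case is immediate: $\train' \in \gamma(\drop{\train}{n})$ holds by hypothesis, and the initial predicate $\nullpred$ lies in the initial set $\{\nullpred\}$.

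For the inductive step I would push the invariant through each statement of the loop body using the soundness results already established. The assignment $\varphi \gets \bestsplit(\train)$ is matched by $\Psi \gets \abestsplit(\drop{\train}{n})$; Lemma~\ref{lem:abestsplit} gives $\bestsplit(\train'_j) \in \gamma(\abestsplit(\drop{\train''_j}{n_j}))$, re-establishing the predicate half. The filtering step $\train \gets \filter(\train,\varphi,\point)$ is matched by $\afilter$; since the inductive hypothesis supplies $\varphi'_j \in \Psi_j$ and $\train'_j \in \gamma(\drop{\train''_j}{n_j})$, Proposition~\ref{prop:afilter} yields $\filter(\train'_j,\varphi'_j,\point) \in \gamma(\afilter(\drop{\train''_j}{n_j},\Psi_j,\point))$, re-establishing the set half. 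The two conditionals are handled as abstract interpretation of guards: whichever feasible branch the concrete run takes, its state is contained in the concretization of the corresponding abstract branch state, and since $\atlearner$ merges branches by join, Proposition~\ref{prop:join} preserves containment across the merge. The only guard needing an auxiliary argument is $\impurity(\train)=0$: I would verify that $\bigsqcup_{i\in[1,k]}\puresets(\drop{\train''_j}{n_j},i)$ soundly captures every pure concrete set, i.e., if $\train'_j$ has zero Gini impurity then all its elements share some class $i$ and $\train'_j$ is reachable from $\train$ by dropping at most $n$ elements, so $\train'_j \in \gamma(\puresets(\drop{\train''_j}{n_j},i))$, and the join over $i$ covers it.

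The main obstacle is not any individual transformer but the control-flow mismatch introduced by the early returns. A given concrete run $\tlearner(\train',\point)$ returns at one specific iteration (when its impurity first reaches $0$, or when $\bestsplit$ first yields $\nullpred$), whereas the single abstract run must simultaneously account for all concrete runs in $\gamma(\drop{\train}{n})$, which may return at different iterations. To resolve this I would have $\atlearner$ accumulate its return states: at each guarded return the abstract substate that can satisfy the return condition is joined into an escaped accumulator, the complementary substate continues the loop, and the reported final state $(\drop{\train_f''}{n_f},\cdot)$ is the join of the accumulator with the state surviving all $d$ iterations. With this bookkeeping the induction gives $\train'_j \in \gamma(\drop{\train''_j}{n_j})$ at the precise iteration $j$ where the concrete run returns, and soundness of $\sqcup$ then places $\train_f' = \train'_j$ inside $\gamma(\drop{\train_f''}{n_f})$, which is exactly the claim. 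A final small check is that the output-time $\summary$/$\argmax$ step only reads, and never refines, $\train$, so the training-set containment established by the loop is precisely the containment asserted by the theorem.
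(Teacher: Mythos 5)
Your proposal is correct and takes essentially the same route as the paper: the paper's proof of Theorem~\ref{thm:main} is a one-line statement that soundness follows from composing the per-operation soundness results (Propositions~\ref{prop:join}, \ref{prop:arestr}, \ref{prop:afilter}, Lemma~\ref{lem:abestsplit}, and the conditional handling), and your lock-step induction with the invariant $\train'_j \in \gamma(\drop{\train''_j}{n_j})$, $\varphi'_j \in \Psi_j$ is precisely what that composition means when spelled out. Your accumulator for the early returns is the standard bookkeeping already implicit in the paper's treatment of conditionals as joins over feasible branches, so it adds rigor but not a different idea.
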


It follows from the soundness of $\atlearner$ 
that we can use it to prove $n$-poisoning robustness.
Let $I=([l_1,u_2], \ldots, [l_k,u_k])$ be a set of intervals.
We say that interval $[l_i, u_i]$ dominates $I$ 
if and only if $l_i > u_j$ for every $j\neq i$.

\begin{corollary}
Let $\drop{\train'}{n'}$ be the final abstract state of 
$\atlearner(\drop{\train}{n},\point)$.
If
$I=\asummary(\drop{\train'}{n'}))$ and there exists 
an interval in $I$ that dominates $I$ (i.e.,  same class is selected for every $\train\in\gamma\drop{\train}{n}$),
then 
$\point$ is robust
to $n$-poisoning of $\train$.
\end{corollary}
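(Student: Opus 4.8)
The plan is to derive the corollary as a direct consequence of Theorem~\ref{thm:main} and Proposition~\ref{prop:asummary}, finishing with an elementary argument that a dominating interval pins down a unique $\argmax$. First I would fix an arbitrary poisoned dataset $\train'' \in \gamma(\drop{\train}{n}) = \Drop_n(\train)$, together with \emph{any} nondeterministic execution of $\tlearner(\train'', \point)$, and let $(\train_f'', \cdot, \cdot)$ be its final state; the class that the learner assigns to $\point$ (i.e. $\learner(\train'')(\point)$) is then $\argmax_i p_i$ where $\summary(\train_f'') = \langle p_1,\ldots,p_k\rangle$. By Theorem~\ref{thm:main}, this concrete final training set lands inside the concretization of the final abstract state, namely $\train_f'' \in \gamma(\drop{\train'}{n'})$, where $\drop{\train'}{n'}$ is the final abstract state of $\atlearner(\drop{\train}{n},\point)$.

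Next I would push this containment through the summary operator. Applying Proposition~\ref{prop:asummary} to the abstract element $\drop{\train'}{n'}$ and its concrete member $\train_f''$ gives $\summary(\train_f'') \in \gamma(\asummary(\drop{\train'}{n'})) = \gamma(I)$. Writing $I = \langle [l_1,u_1],\ldots,[l_k,u_k]\rangle$, this means precisely that $p_j \in [l_j,u_j]$ for every class $j \in [1,k]$. The heart of the argument is then the domination step: assuming $[l_i,u_i]$ dominates $I$, so that $l_i > u_j$ for all $j \neq i$, we obtain $p_i \geq l_i > u_j \geq p_j$ for every $j \neq i$. Hence class $i$ is the strict, unique maximizer of the probability vector, forcing $\tlearner$ to output $i$. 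Since $\train''$ and the execution were arbitrary, every concrete run on every dataset in $\Drop_n(\train)$ outputs $i$; as $\train \in \Drop_n(\train)$ itself, we also get $\learner(\train)(\point) = i$, and therefore $\learner(\train'')(\point) = \learner(\train)(\point)$ for all $\train'' \in \Drop_n(\train)$, which is exactly $n$-poisoning robustness of $\point$.

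The hard part will not be any individual inequality but making the universal quantification over nondeterminism airtight. I must read Theorem~\ref{thm:main} as covering \emph{every} execution trace of $\tlearner$ — each resolution of the nondeterministic tie-breaking in $\bestsplit$ can produce a different final $\train_f''$, and all of these must be shown to lie in the single concretization $\gamma(\drop{\train'}{n'})$. Correspondingly, I must confirm that strict domination ($l_i > u_j$, not merely $\geq$) genuinely eliminates every tie, so that the nondeterministic $\argmax$ at the end of $\tlearner$ has no remaining freedom. Once strictness is in hand this last point is immediate, so the real care lies in the bookkeeping that one abstract final state soundly subsumes all concrete final states across both the choice of $\train''$ and the choice of nondeterministic branch — a fact I intend to lean on Theorem~\ref{thm:main} for rather than re-proving.
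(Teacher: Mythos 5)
Your proof is correct and takes essentially the same route as the paper, which leaves this corollary implicit by remarking only that it ``follows from the soundness of $\atlearner$'': your chain---Theorem~\ref{thm:main} to place every concrete final training set (over all datasets in $\Drop_n(\train)$ and all nondeterministic executions) inside $\gamma\left(\drop{\train'}{n'}\right)$, Proposition~\ref{prop:asummary} to bound each resulting $\summary$ vector componentwise by $I$, and strict domination ($l_i > u_j$) to force a unique $\argmax$ and eliminate tie-breaking freedom---is exactly that implicit argument made explicit. Your closing observation that $\train \in \Drop_n(\train)$, so the unchanged training set also yields class $i$ and robustness in the sense of the paper's definition follows, is a correct and necessary detail.
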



\section{Extensions}
\label{sec:extensions}

In this section, we present two extensions that make our abstract interpretation 
framework more practical.
First, we show how our abstract domain can be modified  to accommodate
real-valued features (\S~\ref{sec:real-valued-features}).
Second, we present a disjunctive abstract domain that is more precise than the one
we discussed, but more computationally inefficient (\S~\ref{sec:disjunctive-abstraction}).

\subsection{Real-Valued Features}
\label{sec:real-valued-features}

Thus far, we have assumed that $\tlearner$ and $\atlearner$ operate on
a finite set of predicates $\Phi$.
In real-world decision-tree implementations, this is not quite accurate:
for real-valued features, there are infinitely many possible predicates of the form
$\lambda \point_i \ldotp \point_i \leq \tau$ (where $\tau \in \mathds{R}$),
and the learner chooses a finite set of possible $\tau$ values dynamically,
based on the training set $\train$.
We will use the subscript $\reals$ to denote the real-valued
versions of existing operations.

\paragraph{From $\tlearner$ to $\tlearnerr$}
The new learner $\tlearnerr$ is almost identical to $\tlearner$.
However, each invocation of $\bestsplitr$
first computes a finite set of predicates $\Phi_\reals$.
%
Consider all of the values appearing in $\train$ for the $i$th feature in $\tdom$,
sorted in ascending order.
For each pair of adjacent values $(a, b)$ (i.e., such that there exists no $c$ in $\train$
such that $a<c<b$),
we include in $\Phi_\reals$ the predicate $\varphi=\lambda \point_i \ldotp \point_i \leq \frac{a + b}{2}$.

\begin{example}\label{ex:disj}
In our running example from Figure~\ref{fig:example},
we have training set elements in $\train_\emph{bw}$
whose features take the numeric values $\{0,1,2,3,4,7,\ldots,14\}$.
$\bestsplitr(\train_\emph{bw})$ would pick a predicate from the set
$\Phi_\reals = \{\lambda \point \ldotp \point \leq \tau \mid
\tau \in \{\frac{1}{2},\frac{3}{2},\frac{5}{2},\frac{7}{2},\frac{11}{2},\frac{15}{2},\ldots,\frac{27}{2}\} \}$.
\end{example}

\paragraph{From $\atlearner$ to $\atlearnerr$}
To apply the abstract learner in the real-valued setting,
we can follow the idea above and construct a finite set $\Phi_\reals$.
Because our poisoning model assumes dropping up to $n$ elements of the training set,
this results in roughly $(n+1)\cdot|T|$ predicates in the worst case---i.e.,
we need to account for every pair $(a,b)$ of adjacent feature values 
or that are adjacent after removing up to $n$ elements between them.

\begin{example}
    Continuing Example~\ref{ex:disj}.
    Say we want to compute $\Phi_\reals$ for $\drop{\train_\emph{bw}}{1}$.
    Then, for every pair of values that are $1$ apart
    we will need to add a predicate to accommodate the possibility that we
    drop the value between them.
    E.g., in $\train_\emph{bw} = \{\ldots,3,4,7,\ldots\}$,
     we will additionally need the predicate $\lambda x \ldotp x \leq (3+7)/2$,
    for the case where we drop the element with value $4$  from the dataset.
\end{example}

To avoid a potential explosion in the size of the predicate set 
and maintain efficiency, we compactly represent
sets of similar predicates symbolically.
We describe this detail \iffull in Appendix~\ref{app:real-valued-features}\else ~in the
full version of our paper~\cite{antidotearxiv}\fi.

\subsection{Disjunctive Abstraction}
\label{sec:disjunctive-abstraction}



The state  abstraction used by $\atlearner$
can be \emph{imprecise},
mainly due to the join operations that take place, e.g.,
during $\afilter$.
The primary concern is that we are forced to perform a very imprecise join
between possibly quite dissimilar training set fragments.
Consider the following example:

\begin{example}
Let us return to $\train_\emph{bw}$ from Figure~\ref{fig:example},
but imagine we have continued the computation after filtering using $\point \leq 10$
and have selected some best predicates.
Specifically, consider a case in which we have
$\point=4$ and 
\begin{itemize}[topsep=5pt, partopsep=0pt, leftmargin=*]
\item $\drop{\train}{1}$, where $\train=\{0,1,2,3,4,7,8,9,10\}$
\item $\Psi = \{\point \leq 3, \point \leq 4\}$ (ignoring whether this is correct)
\end{itemize}
Let us evaluate $\afilter(\drop{\train}{1}, \Psi, \point)$.
Following the definition of $\afilter$, we will compute
$$\drop{\train'}{n'} =  \drop{\train_{\leq 4}}{1} \sqcup \drop{\train_{> 3}}{1}$$
where 
$\train_{\leq 4} = \{(4,b), (3,w), (2,w), (1,w), (0,b)\}$
and
$\train_{> 3} = \{(4,b), (7,w), (8,w), (9,w), (10,w)\}$,
thus giving us $\train' = \train$ (the set we began with)
and $n' = 5$ (much larger than what we began with).
This is a large loss in precision.
\end{example}

To address this imprecision, we will consider a 
\emph{disjunctive} version of our abstract domain, consisting
of unboundedly many disjuncts of this previous domain,
which we represent as a set $\{(\drop{\train}{n}_i, \Psi_i)\}_i$.
Our join operation becomes very simple: it is the union of the two sets of disjuncts.
\begin{definition}[Joins]
Given two disjunctive abstractions $D_I = \{(\drop{\train}{n}_i, \Psi_i)\}_{i\in I}$
and $D_J = \{(\drop{\train}{n}_j, \Psi_j)\}_{j\in J}$,
we define
\[
D_I \sqcup D_J \coloneqq D_I \cup D_J
\]
\end{definition}
Adapting $\atlearner$ to operate on this domain is immediate:
each of the transformers described in the previous section
is applied to each disjunct.

Because our disjunctive domain eschews memory- and time-efficiency for precision,
we are able to prove more things, but at a cost
(we explore this in our evaluation, \S~\ref{sec:evaluation}).
Note that, by construction, the disjunctive abstract domain is at least as precise
as our standard abstract domain.


\section{Implementation and Evaluation}\label{sec:evaluation}

We implemented our algorithms $\tlearner$ and $\atlearner$ in C++
in a (single-threaded) prototype we call \name.
Our evaluation%
\footnote{We use a machine with a 2.3GHz processor and 160GB of RAM throughout.}
aims to answer the following research questions:
\begin{description}
\item[RQ1] Can \name prove data-poisoning robustness for real-world datasets? (\S\ref{sec:effectiveness})
\item[RQ2] How does the performance of \name vary with respect to
    the scale of the problem and the choice of abstract domain?
    (\S\ref{sec:performance})
\end{description}

\subsection{Benchmarks and Experimental Setup}
We experiment on $\numBenchmarks$ datasets (Table~\ref{ta:benchmarks}).
We obtained the first three datasets
from the UCI Machine Learning Repository~\cite{UCI}.
\iris is a small dataset that categorizes three related flower species;
\mammography and \wdbc are two datasets of differing complexities
related to classifying whether tumors are cancerous.
We also evaluate on the widely-studied MNIST dataset of handwritten digits~\cite{mnist},
which consists of \numprint{70000} grayscale images (\numprint{60000} training, \numprint{10000} test)
of the digits zero through nine.
We consider a form of MNIST that has been used in the poisoning literature
and create another variant for evaluation:
\begin{itemize}[topsep=5pt, partopsep=0pt, leftmargin=*]
    \item We make the same simplification as in
        other work on data poisoning~\cite{Biggio12,steinhardt2017certified}
        and restrict ourselves to the classification of ones versus sevens
        (\numprint{13007} training instances and \numprint{2163} test instances),
        which we denote \mnistreal.
        \citet{steinhardt2017certified}, for example, 
        recently used this to study poisoning in support vector machines.
    \item Each \mnistreal image's pixels are 8-bit integers (which we treat as real-valued);
        to create a variant of the problem with reduced scale, we \textit{also} consider \mnistbin,
        a black-and-white version that uses each pixel's most significant bit (i.e.\ our predicates are Boolean).
\end{itemize}

\begin{table*}
\caption{Detailed metrics for the benchmark datasets considered in our evaluation. * Test set accuracy for MNIST is computed on the full 2,163 instances; robustness experiments are performed on 100 randomly chosen test set elements.}
\small
\centering

\newcommand{\irisclasses}{\{\text{Setosa}, \text{Versicolour}, \text{Virginica}\}}
\newcommand{\cancerclasses}{\{\text{benign}, \text{malignant}\}}
\newcommand{\mnistclasses}{\{\text{one}, \text{seven}\}}
\newcommand{\fp}[1]{\nprounddigits{1}\numprint{#1}}
\newcommand{\tworow}[1]{\multirow{2}{*}[-0.5\dimexpr \aboverulesep + \belowrulesep + \cmidrulewidth]{#1}}

\begin{tabular}{lrrllrrrr}
\toprule
\tworow{Data Set} & \multicolumn{2}{c}{Size} & \tworow{Features $\tdom$} & \tworow{Classes $\classes$} & \multicolumn{4}{c}{DT Test-Set Accuracy (\%)} \\
\cmidrule(lr){2-3} \cmidrule(lr){6-9}
 & Training & Test & & & Depth 1 & 2 & 3 & 4 \\
\midrule
\iris & 120 & 30 & $\mathds{R}^4$ & $\irisclasses$ & \fp{20} & \fp{90} & \fp{90} & \fp{90} \\
\mammography & 664 & 166 & $\mathds{R}^5$ & $\cancerclasses$ & \fp{80.72289} & \fp{83.13253} & \fp{81.92771} & \fp{80.72289} \\
\wdbc & 456 & 113 & $\mathds{R}^{30}$ & $\cancerclasses$ & \fp{91.15044} & \fp{92.03539} & \fp{92.92035} & \fp{94.69026} \\
\mnistbin & \numprint{13007} & 100* & $\{0,1\}^{784}$ & $\mnistclasses$ & \fp{95.70041} & \fp{97.41100} & \fp{97.82709} & \fp{98.28941} \\
\mnistreal & \numprint{13007} & 100* & $\mathds{R}^{784}$ &  $\mnistclasses$ & \fp{95.60795} & \fp{97.59593} & \fp{98.33564} & \fp{98.70550} \\
\bottomrule
\end{tabular}

\label{ta:benchmarks}
\end{table*}

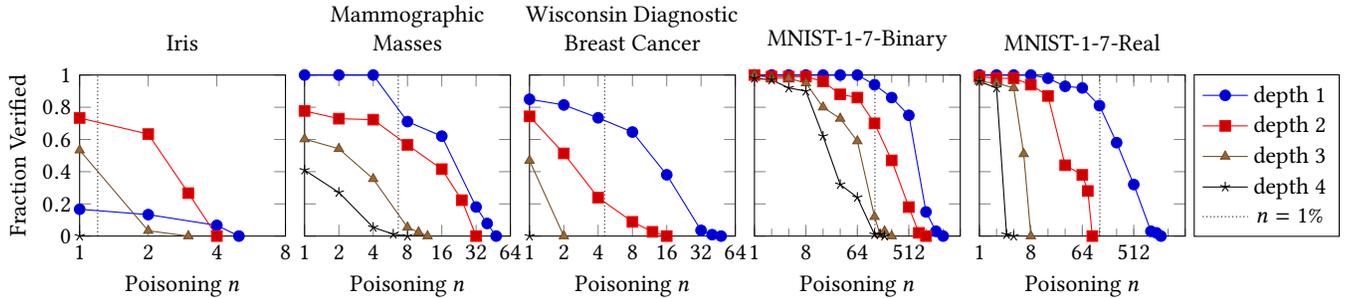
\begin{figure*}[t]
\centering
\small
\pgfplotsset{filter discard warning=false}
\pgfplotscreateplotcyclelist{whatever}{%
    blue,every mark/.append style={fill=blue!80!black},mark=*\\%
    red,every mark/.append style={fill=red!80!black},mark=square*\\%
    brown!60!black,every mark/.append style={fill=brown!80!black},mark=triangle*\\%
    black,mark=star\\%
    blue,every mark/.append style={fill=blue!80!black},mark=diamond*\\%
    red,densely dashed,every mark/.append style={solid,fill=red!80!black},mark=*\\%
    brown!60!black,densely dashed,every mark/.append style={solid,fill=brown!80!black},mark=square*\\%
    black,densely dashed,every mark/.append style={solid,fill=gray},mark=triangle*\\%
    blue,densely dashed,mark=star,every mark/.append style=solid\\%
    red,densely dashed,every mark/.append style={solid,fill=red!80!black},mark=diamond*\\%
    }
\begin{tikzpicture}
    \begin{groupplot}[
            group style={
                group name=pvgroup,
                group size=5 by 1,
                horizontal sep=.1in,
                ylabels at=edge left,
                yticklabels at=edge left,
            },
            width=1.7in,
            xlabel near ticks,
            ylabel near ticks,
            xlabel=Poisoning $n$,
            ylabel=Fraction Verified,
            xmode=log,
            xtick={1,2,4,8,16,32,64,128,256,512,1024,2048},
            log ticks with fixed point,
            xmin=1,
            ymin=0,
            ymax=1,
            cycle list name=whatever,
        ]

        \nextgroupplot[
            title=\iris,
            xmax=8,
        ]
        \foreach \n in {1, 2, 3, 4}{
            \addplot table [x=num_dropout, y=percent_verified, col sep=comma]
                {data/proven_vs_poisoned/iris_d\n.csv};
        }
        \addplot[densely dotted] coordinates {(1.2, 0) (1.2, 1)};

        \nextgroupplot[
            title={\parbox{1in}{\centering \mammography}},
            xmax=64,
        ]
        \foreach \n in {1, 2, 3, 4}{
            \addplot table [x=num_dropout, y=percent_verified, col sep=comma]
                {data/proven_vs_poisoned/mammography_d\n.csv};
        }
        \addplot[densely dotted] coordinates {(6.64, 0) (6.64, 1)};

        \nextgroupplot[
            title={\parbox{1.3in}{\centering \wdbc}},
            xmax=64,
        ]
        \foreach \n in {1, 2, 3, 4}{
            \addplot table [x=num_dropout, y=percent_verified, col sep=comma]
                {data/proven_vs_poisoned/wdbc_d\n.csv};
        }
        \addplot[densely dotted] coordinates {(4.56, 0) (4.56, 1)};

        \nextgroupplot[
            title=\mnistbin,
            xmax=4096,
            xtick={1,8,64,512},
            minor xtick={1,2,4,8,16,32,64,128,256,512,1024,2048,4096},
        ]
        \foreach \n in {1, 2, 3, 4}{
            \addplot table [x=num_dropout, y=percent_verified, col sep=comma]
                {data/proven_vs_poisoned/mnist_simple_1_7_d\n.csv};
        }
        \addplot[densely dotted] coordinates {(130, 0) (130, 1)};

        \nextgroupplot[
            title=\mnistreal,
            xmax=4096,
            xtick={1,8,64,512},
            minor xtick={1,2,4,8,16,32,64,128,256,512,1024,2048,4096},
            legend to name={singlelegend},
        ]
        \foreach \n in {1, 2, 3, 4}{
            \addplot table [x=num_dropout, y=percent_verified, col sep=comma]
                {data/proven_vs_poisoned/mnist_1_7_d\n.csv};
        }
        \addplot[densely dotted] coordinates {(130, 0) (130, 1)};

        \addlegendentry{depth 1}
        \addlegendentry{depth 2}
        \addlegendentry{depth 3}
        \addlegendentry{depth 4}
        \addlegendentry{$n = 1\%$}
    \end{groupplot}
    \node (P1) at (pvgroup c5r1.north east) {};
    \node (P2) at (pvgroup c5r1.south east) {};
    \path (P1) -- node[right]{\pgfplotslegendfromname{singlelegend}} (P2);
\end{tikzpicture}
\caption{Fraction of test instances proven robust versus poisoning parameter $n$ (log scale).
The dotted line is a visual aid, indicating $n$ is 1\% of the training set size.}
\label{fig:mainresults}
\end{figure*}

For each dataset, we consider a decision-tree learner
with a maximum tree depth (i.e.\ number of calls to $\bestsplit$)
ranging from 1 to 4.
Table~\ref{ta:benchmarks} shows that test set%
\footnote{The UCI datasets come as a single training set.
We selected a random 80\%-20\% split of the data,
saving the 20\% as the test set to use in our experiments.
The scale of the MNIST dataset is large;
for pragmatic reasons, we fix a random subset of 100 of the original \numprint{2163}
test set elements for robustness proving,
and we run our $\atlearner$ experiments only on this subset.}
accuracies of the decision trees learned by $\tlearner$ are reasonably high---%
affirmation that when we prove the robustness of its results,
we are proving something worthwhile.

\paragraph{Experimental Setup}
For each test element, we explore the amount of poisoning
(i.e.\ how large of a $n$ from our $\Drop_n$ model)
for which we can prove the robustness property as follows.
\begin{enumerate}[topsep=5pt, partopsep=0pt, leftmargin=*]
\item
For each combination of dataset $\train$ and tree depth $d$,
we begin with a poisoning amount $n=1$, i.e.\ a single element could be missing from the training set.
\item
For each test set element $x$, we attempt to prove that $x$ is robust to poisoning $\train$
using any set in $\Drop_n(T)$.
Let $S_n$ be the test subset for which we do prove robustness for poisoning amount $n$.
If $S_n$ is non-empty,
we double $n$ and again attempt to verify the property on elements in $S_n$.
\item If at a depth $n$ all instances fail, we binary search between $n$ and $n/2$ to find an $n/2<n'<n$ at which
some instances terminate. This approach allows us to better illustrate the experiment trends in our plots.
\end{enumerate}
Failure occurs due to any of three cases:
\rone the computed over-approximation does not conclusively prove robustness,
\rtwo the computation runs out of memory, or
\rthree the computation exceeds a one-hour timeout.
We run the entire procedure for the non-disjunctive and disjunctive abstract domains.

\subsection{Effectiveness of \name}
\label{sec:effectiveness}

We evaluate how effective \name is at proving data-poisoning robustness.
In this experiment, we consider a run of the algorithm on a single test element successful if
either the
non-disjunctive or disjunctive abstract domain  succeeds
at proving robustness
(mimicking a setting in which two instances of $\atlearner$, one for each abstract domain, are run in parallel)---%
we will contrast the results for the different domains in \S\ref{sec:performance}.
Figure~\ref{fig:mainresults} shows these results.

To exemplify the power of \name, draw your attention
to the depth-2 instance of $\atlearner$ invoked on \mnistreal.
For 38 of the 100 test instances, we are able to verify that
even if the training set had been poisoned by an attacker
who contributed up to 64 poisoned elements ($\approx \frac{1}{2}\%$),
the attacker would not have had any power to change the resulting classification.
Conventional machine learning wisdom says that,  in decision tree learning,
small changes to the training set can cause the model to behave quite differently.
Our results verify nuance---sometimes, there is some stability.\footnote{
The \iris dataset has an interesting quirk---%
we're unable to prove much at depth 1
because in the concrete case, one of the leaves
is a 50/50 split between two classes,
thus changing one element
could make the difference for any of the
test set instances taking that path.
At depth 2, a predicate is allowed to split that leaf further,
making decision-tree learning more stable.
}
These 38 verified instances average ${\sim}800s$ run time.
$\Drop_{64}(\train)$ consists of over $10^{174}$ concrete training sets;
This is staggeringly efficient compared to a na\"ive enumeration baseline,
which would be unable to verify robustness at this scale.


To answer \textbf{RQ1}, \textit{\name can verify  robustness for real-world datasets with extremely large
perturbed sets
and decision-tree learners with high accuracies.}

\subsection{Performance of \name}
\label{sec:performance}

We evaluate how the performance of \name is affected by the complexity of the problem,
e.g., the size of the training set and its number of features,
the number of poisoned elements, and the depth of the learned decision tree.
Due to the large number of parameters involved in our evaluation, this
section only provides a number of representative statistics.
In particular, although the reader can find plots describing all 
the metrics evaluated on each dataset \iffull in Appendix~\ref{app:bench}\else ~in the
full version of our paper~\cite{antidotearxiv}\fi,
most of our analysis will focus on \mnistbin~(see Figure~\ref{fig:mnistbin}), since
it exhibits the most illustrative behavior.

\paragraph{Box vs Disjuncts}
In this section we use Disjuncts to refer to the disjunctive abstract domain and Box to refer
to the non-disjunctive one.
Disjuncts is more precise than Box and, as expected, it can verify
more instances. However, Disjuncts is slower and more memory-intensive.
Consider the \mnistbin dataset~(see Figure~\ref{fig:mnistbin}).
For depth $3$ and $n=64$ (approximately $0.5\%$ of the dataset), 
Disjuncts can verify 52 instances while
Box can only verify 15. However,
Disjuncts takes on average 32s to terminate (0 timeouts) and uses 1,650MB of memory,
while
Box takes on average 0.7s to terminate (0 timeouts) and uses 150MB of memory.
It is worth noting that Box can verify certain instances that Disjunct cannot verify due to timeouts.
For example, at depth $4$ and $n=128$, Box is able to verify 1 problem instance,%
\footnote{The  14 other instances that succeeded at $n=64$
similarly terminated after 0.7s on average,
but their final state did not prove robustness.}
while Disjuncts always times out.
An interesting direction for future research would be to consider strategies
that capitalize on the precision of tracking many disjuncts
while incorporating the efficiency of allowing some to be joined.

\paragraph{Number of Poisoned Elements}
It is clear from the plots that the number of poisoned elements greatly
affects the performance and efficacy of \name. 
We do not focus on particular numbers, since the trends are clear from the plots \iffull (including the ones in Appendix~\ref{app:bench})\fi:
The memory consumption and running times of
Disjuncts grow exponentially with $n$, but are still practical and Disjuncts is effective up to high depths.
The memory consumption and running times of
Box grow more slowly:
95\% of all experiments we ran using Box finished within 20 seconds,
and none timed out (the longest took 232 seconds).%
\footnote{This data must be taken with a grain of salt:
 Box is generally less effective than Disjuncts;
due to the incremental nature of our experiments,
it did not attempt as many of the ``harder'' problems as Disjuncts did.}
However, Box is less effective than Disjuncts as the depths increase;
this is expected, as the loss of precision
with more operations is more severe
for Box.

\paragraph{Size of Dataset and Number of Features}
We measure whether the size of the dataset (which in our benchmarks is 
quite correlated with the number of features) affects the performance.
Consider the case of verifying a decision-tree learner of depth 3 using the disjunctive domain
and a perturbed set where
$0.5\%$ of the points\footnote{We round to the closest $n$ for which
the tool can verify at least one instance} are removed from the dataset (similar trends are observed
when varying these parameters).
The average running time of \name is 
0.1s  for \iris, 
0.2s for \mammography, 
26s for \wdbc, and 
32s for \mnistbin. 
For \mnistreal, 100\% of the benchmarks TO at $0.05\%$ poisoning. 
As expected, the size of the dataset and the number of features have an effect on the verification time.
However, it is hard to exactly quantify this effect, given how differently each dataset behaves;
an obvious comparison we can make is the difference between \mnistbin and \mnistreal. These two datasets
have identical sizes, but the former uses binary features and the latter uses real features.
As we can see, handling real features results in a massive slowdown and in proving fewer instances robust. This is not surprising
since real features can result in more predicates, which affect both running time and the discrimination
power of individual nodes in the decision tree.

\begin{figure*}
\centering
\small
\pgfplotsset{filter discard warning=false}
\begin{tikzpicture}
    \begin{groupplot}[
            group style={
                group size=4 by 3,
                horizontal sep=.18in,
                vertical sep=.18in,
                ylabels at=edge left,
                yticklabels at=edge left,
                xlabels at=edge bottom,
                xticklabels at=edge bottom
            },
            width=2in,
            height=1.4in,
            xlabel near ticks,
            ylabel near ticks,
            xlabel=Poisoning $n$,
            xmode=log,
            xtick={1,8,64,512},
            minor xtick={1,2,4,8,16,32,64,128,256,512,1024,2048,4096},
            log ticks with fixed point,
            xmin=1,
            xmax=2048,
        ]

        \nextgroupplot[
            title=Depth 1,
            ylabel=\# Verified,
            ymin=0,
            ymax=100,
        ]
        \addplot table [x=num_dropout, y=num_verified, col sep=comma]{data/exhaustive/mnist_simple_1_7_d1_box.csv};
        \addplot table [x=num_dropout, y=num_verified, col sep=comma]{data/exhaustive/mnist_simple_1_7_d1_disjuncts.csv};

        \nextgroupplot[
            title=Depth 2,
            ymin=0,
            ymax=100,
        ]
        \addplot table [x=num_dropout, y=num_verified, col sep=comma]{data/exhaustive/mnist_simple_1_7_d2_box.csv};
        \addplot table [x=num_dropout, y=num_verified, col sep=comma]{data/exhaustive/mnist_simple_1_7_d2_disjuncts.csv};

        \nextgroupplot[
            title=Depth 3,
            ymin=0,
            ymax=100,
        ]
        \addplot table [x=num_dropout, y=num_verified, col sep=comma]{data/exhaustive/mnist_simple_1_7_d3_box.csv};
        \addplot table [x=num_dropout, y=num_verified, col sep=comma]{data/exhaustive/mnist_simple_1_7_d3_disjuncts.csv};

        \nextgroupplot[
            title=Depth 4,
            ymin=0,
            ymax=100,
        ]
        \addplot table [x=num_dropout, y=num_verified, col sep=comma]{data/exhaustive/mnist_simple_1_7_d4_box.csv};
        \addplot table [x=num_dropout, y=num_verified, col sep=comma]{data/exhaustive/mnist_simple_1_7_d4_disjuncts.csv};
        \legend{Box, Disjuncts};

        \nextgroupplot[
            ylabel=Average Time (s),
            ymode=log,
            ymin=.1,
            ymax=10000,
            ytick={.1,1,10,100,1000,10000},
        ]
        \addplot table [x=num_dropout, y=avg_time, col sep=comma]{data/exhaustive/mnist_simple_1_7_d1_box.csv};
        \addplot table [x=num_dropout, y=avg_time, col sep=comma]{data/exhaustive/mnist_simple_1_7_d1_disjuncts.csv};
        \addplot[dotted, domain=1:4096] {3600};
        \legend{ , , Timeout};

        \nextgroupplot[
            ymode=log,
            ymin=.1,
            ymax=10000,
            ytick={.1,1,10,100,1000,10000},
        ]
        \addplot table [x=num_dropout, y=avg_time, col sep=comma]{data/exhaustive/mnist_simple_1_7_d2_box.csv};
        \addplot table [x=num_dropout, y=avg_time, col sep=comma]{data/exhaustive/mnist_simple_1_7_d2_disjuncts.csv};
        \addplot[dotted, domain=1:4096] {3600};

        \nextgroupplot[
            ymode=log,
            ymin=.1,
            ymax=10000,
            ytick={.1,1,10,100,1000,10000},
        ]
        \addplot table [x=num_dropout, y=avg_time, col sep=comma]{data/exhaustive/mnist_simple_1_7_d3_box.csv};
        \addplot table [x=num_dropout, y=avg_time, col sep=comma]{data/exhaustive/mnist_simple_1_7_d3_disjuncts.csv};
        \addplot[dotted, domain=1:4096] {3600};

        \nextgroupplot[
            ymode=log,
            ymin=.1,
            ymax=10000,
            ytick={.1,1,10,100,1000,10000},
        ]
        \addplot table [x=num_dropout, y=avg_time, col sep=comma]{data/exhaustive/mnist_simple_1_7_d4_box.csv};
        \addplot table [x=num_dropout, y=avg_time, col sep=comma]{data/exhaustive/mnist_simple_1_7_d4_disjuncts.csv};
        \addplot[dotted, domain=1:4096] {3600};

        \nextgroupplot[
            ylabel=\parbox{1in}{\centering Average Max Memory (MB)},
            ymode=log,
            ymax=200000,
        ]
        \addplot table [x=num_dropout, y=avg_max_memory, col sep=comma]{data/exhaustive/mnist_simple_1_7_d1_box.csv};
        \addplot table [x=num_dropout, y=avg_max_memory, col sep=comma]{data/exhaustive/mnist_simple_1_7_d1_disjuncts.csv};
        \addplot[dotted, domain=1:4096] {140000};
        \legend{ , , OOM};

        \nextgroupplot[
            ymode=log,
            ymax=200000,
        ]
        \addplot table [x=num_dropout, y=avg_max_memory, col sep=comma]{data/exhaustive/mnist_simple_1_7_d2_box.csv};
        \addplot table [x=num_dropout, y=avg_max_memory, col sep=comma]{data/exhaustive/mnist_simple_1_7_d2_disjuncts.csv};
        \addplot[dotted, domain=1:4096] {140000};

        \nextgroupplot[
            ymode=log,
            ymax=200000,
        ]
        \addplot table [x=num_dropout, y=avg_max_memory, col sep=comma]{data/exhaustive/mnist_simple_1_7_d3_box.csv};
        \addplot table [x=num_dropout, y=avg_max_memory, col sep=comma]{data/exhaustive/mnist_simple_1_7_d3_disjuncts.csv};
        \addplot[dotted, domain=1:4096] {140000};

        \nextgroupplot[
            ymode=log,
            ymax=200000,
        ]
        \addplot table [x=num_dropout, y=avg_max_memory, col sep=comma]{data/exhaustive/mnist_simple_1_7_d4_box.csv};
        \addplot table [x=num_dropout, y=avg_max_memory, col sep=comma]{data/exhaustive/mnist_simple_1_7_d4_disjuncts.csv};
        \addplot[dotted, domain=1:4096] {140000};
    \end{groupplot}
\end{tikzpicture}
\caption{Efficacy, performance, and memory usage for \mnistbin}
\label{fig:mnistbin}
\end{figure*}
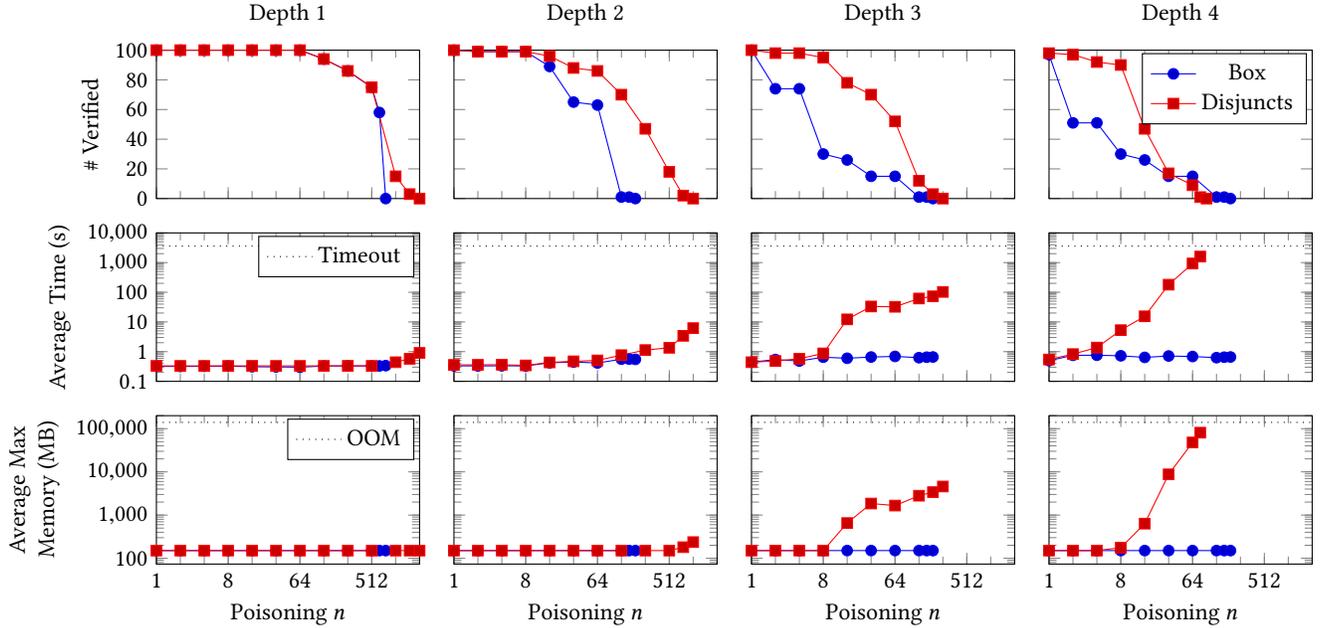

\paragraph{Depth of the Tree}
Consider the case of verifying a decision-tree learner for \mnistbin using the disjunctive domain,
and a perturbed set where
up to 64 of the points have been added maliciously to the dataset (similar trends are observed
when varying these parameters and for other datasets).
The average running time of \name is 
0.3s  at depth 1, 
0.5s at depth 2, 
32s  at depth 3, and 
933s  at depth 4. 
As expected, the depth of the tree is an important factor in the performance of the disjunctive domain,
as each abstract operation expands the set of disjuncts.

We summarize the results presented in this section and
answer \textbf{RQ2}:
\emph{in general, 
the disjunctive domain is more precise but slower than the non-disjunctive domain, and
the depth of the learned trees and the number of poisoned elements in the dataset are the greatest factors
affecting performance.}


\section{Related Work}\label{sec:relatedwork}

\paragraph{Instability in Decision Trees}
Decision-tree learning has a long and storied history. 
 A particular
thread of work that is relevant to ours is the analysis of decision-tree \emph{instability}~\cite{dwyer07,turney95,li02,perez05}.
These works show that decision-tree learning algorithms are in general susceptible to small data-poisoning attacks---%
although they do not phrase it in those terms.
For the most part, the works are motivated from the perspective
that a decision tree represents a set of ``rules,''
and they are concerned with conditions under which those rules will not change
(either by quantifying forms of invariance or providing novel learning algorithms).
Our work is different in that it \textit{proves} that no poisoning attack exists
on a formalization of very basic decision-tree learning,
and we can often precisely allow for the ``rules'' to change
so long as the ultimate classification does not.

\paragraph{Data Poisoning}
Data-poisoning robustness has been studied extensively from an attacker 
perspective~\cite{Biggio12,Xiao12,Xiao15,Newell14,Mei15}. 
This body
of work has demonstrated  attacks that can degrade classifier accuracy, sometimes
dramatically. 
These works phrase the problem of identifying a poisoned set  as a constraint optimization problem.
To make the problem tractable, they typically focus on support vector machines (SVMs)
and forms of regression for which existing optimization techniques are readily available.
Our approach differs from these works in multiple ways:
\rone Our work focuses on \textit{decision trees}.
The greedy, recursive nature of decision-tree learning is fundamentally different
from the optimization problem solved in learning SVMs.
\rtwo While our technique is general, in this paper we consider a poisoning model in which training elements have been added~\cite{xiao2015feature,chen2017targeted}. 
Some  works instead focuses on a model in which elements of the training set can be modified~\cite{alfeld2016data}.
\rthree Final and most important, our work \textit{proves} that no poisoning attack exists using 
abstract interpretation, while existing techniques largely provide search techniques for finding poisoned training sets. 

Recently, techniques have been proposed to modify the training processes of machine learning models
to make them robust to various
data-poisoning attacks (while remaining computationally efficient).
These techniques~\cite{LaishramP16, steinhardt2017certified, Diakonikolas19, diakonikolas19sever}
are often based on robust estimation, e.g.\ outlier removal;
see~\cite{diakonikolas19survey} for a survey.
In general, these approaches
provide limited probabilistic guarantees about certain kinds of attacks;
the works are orthogonal to ours, though they raise an interesting question for future work:
Can one verify that, on a given training set, these models actually make
the training process resistant to data poisoning?

\paragraph{Abstract Interpretation for Robustness}
Abstract interpretation~\cite{Cousot77}
is one of the most popular models for static program analysis.
Our work is inspired by that of \citet{Gehr18},
where  abstract interpretation is used to prove
input-robustness for neural networks.
(Recently, Ranzato and Zanella
have done similar work for decision tree ensembles~\cite{ranzato20}.)
Many papers have followed improving on this problem~\cite{anderson2019optimization,singh2019abstract}.
The main difference between these works and ours is that
we tackle the problem of verifying training-time robustness, while existing works focus on test-time robustness.
The former problem requires abstracting sets of training sets,
while the latter only requires abstracting sets of individual inputs. In particular, 
Gehr et al. rely on well-known abstract domains---e.g., intervals and zonotopes---to represent
sets of real vectors, while our work presents entirely new abstract domains
for reasoning about sets of training sets.
To our knowledge, our work is the first that even tries to tackle
the problem of verifying data-poisoning robustness.

Other works have focused on \emph{provable training} of neural networks
to exhibit test-time robustness by construction~\cite{wong18, mirman18}:
this is done by using abstract interpretation to over-approximate the worst-case loss
formed by any adversarial perturbation to any element in the training set.
One can think of these techniques as performing a form of symbolic training,
which is conceptually similar to our core idea.
Note, however, two important distinctions:
\rone These works address the problem of adversarial changes to test inputs,
while we address adversarial changes to the training set;
\rtwo These works construct a different, robust model,
while we verify a property of an unchanged model (or rather, the learner).

\section{Conclusion}\label{sec:conclusion}

We presented \name, the first tool that can  verify  
 data-poisoning robustness for decision tree learners,
 where an attacker may have contributed malicious training items.
\name is based on abstract interpretation and introduces
a new abstract domain for representing sets of training sets.
We showed that \name can verify robustness for real-world datasets
in cases  where an enumeration approach would be intractable.
To our knowledge, this paper is the first to verify data-poisoning robustness
for any kind of machine learning model. 
A natural future direction is to extend our ideas to 
 neural networks, where the learning algorithm is stochastic.

\paragraph{Acknowledgements}
We would like to thank Jerry Zhu, Ilias Diakonikolas, and Paris Koutris
for their feedback,
as well as our shepherd, Martin Vechev.
This material is based upon work supported by the National Science Foundation
under grant numbers 1652140, 1704117, 1750965, and 1918211.

\bibliography{paper}


\begin{thebibliography}{36}


\ifx \showCODEN    \undefined \def \showCODEN     #1{\unskip}     \fi
\ifx \showDOI      \undefined \def \showDOI       #1{#1}\fi
\ifx \showISBNx    \undefined \def \showISBNx     #1{\unskip}     \fi
\ifx \showISBNxiii \undefined \def \showISBNxiii  #1{\unskip}     \fi
\ifx \showISSN     \undefined \def \showISSN      #1{\unskip}     \fi
\ifx \showLCCN     \undefined \def \showLCCN      #1{\unskip}     \fi
\ifx \shownote     \undefined \def \shownote      #1{#1}          \fi
\ifx \showarticletitle \undefined \def \showarticletitle #1{#1}   \fi
\ifx \showURL      \undefined \def \showURL       {\relax}        \fi
\providecommand\bibfield[2]{#2}
\providecommand\bibinfo[2]{#2}
\providecommand\natexlab[1]{#1}
\providecommand\showeprint[2][]{arXiv:#2}

\bibitem[\protect\citeauthoryear{Alfeld, Zhu, and Barford}{Alfeld
  et~al\mbox{.}}{2016}]%
        {alfeld2016data}
\bibfield{author}{\bibinfo{person}{Scott Alfeld}, \bibinfo{person}{Xiaojin
  Zhu}, {and} \bibinfo{person}{Paul Barford}.} \bibinfo{year}{2016}\natexlab{}.
\newblock \showarticletitle{Data poisoning attacks against autoregressive
  models}. In \bibinfo{booktitle}{\emph{Thirtieth AAAI Conference on Artificial
  Intelligence}}.
\newblock


\bibitem[\protect\citeauthoryear{Anderson, Pailoor, Dillig, and
  Chaudhuri}{Anderson et~al\mbox{.}}{2019}]%
        {anderson2019optimization}
\bibfield{author}{\bibinfo{person}{Greg Anderson}, \bibinfo{person}{Shankara
  Pailoor}, \bibinfo{person}{Isil Dillig}, {and} \bibinfo{person}{Swarat
  Chaudhuri}.} \bibinfo{year}{2019}\natexlab{}.
\newblock \showarticletitle{Optimization and abstraction: a synergistic
  approach for analyzing neural network robustness}. In
  \bibinfo{booktitle}{\emph{Proceedings of the 40th ACM SIGPLAN Conference on
  Programming Language Design and Implementation}}. ACM,
  \bibinfo{pages}{731--744}.
\newblock


\bibitem[\protect\citeauthoryear{Biggio, Nelson, and Laskov}{Biggio
  et~al\mbox{.}}{2012}]%
        {Biggio12}
\bibfield{author}{\bibinfo{person}{Battista Biggio}, \bibinfo{person}{Blaine
  Nelson}, {and} \bibinfo{person}{Pavel Laskov}.}
  \bibinfo{year}{2012}\natexlab{}.
\newblock \showarticletitle{Poisoning Attacks Against Support Vector Machines}.
  In \bibinfo{booktitle}{\emph{Proceedings of the 29th International Coference
  on International Conference on Machine Learning}} (Edinburgh, Scotland)
  \emph{(\bibinfo{series}{ICML'12})}. \bibinfo{publisher}{Omnipress},
  \bibinfo{address}{USA}, \bibinfo{pages}{1467--1474}.
\newblock
\showISBNx{978-1-4503-1285-1}
\urldef\tempurl%
\url{http://dl.acm.org/citation.cfm?id=3042573.3042761}
\showURL{%
\tempurl}


\bibitem[\protect\citeauthoryear{Breiman}{Breiman}{2017}]%
        {breiman2017classification}
\bibfield{author}{\bibinfo{person}{Leo Breiman}.}
  \bibinfo{year}{2017}\natexlab{}.
\newblock \bibinfo{booktitle}{\emph{Classification and regression trees}}.
\newblock \bibinfo{publisher}{Routledge}.
\newblock


\bibitem[\protect\citeauthoryear{Carlini and Wagner}{Carlini and
  Wagner}{2017}]%
        {carlini2017towards}
\bibfield{author}{\bibinfo{person}{Nicholas Carlini} {and}
  \bibinfo{person}{David Wagner}.} \bibinfo{year}{2017}\natexlab{}.
\newblock \showarticletitle{Towards evaluating the robustness of neural
  networks}. In \bibinfo{booktitle}{\emph{2017 IEEE Symposium on Security and
  Privacy (SP)}}. IEEE, \bibinfo{pages}{39--57}.
\newblock


\bibitem[\protect\citeauthoryear{Chen and Guestrin}{Chen and Guestrin}{2016}]%
        {chen2016xgboost}
\bibfield{author}{\bibinfo{person}{Tianqi Chen} {and} \bibinfo{person}{Carlos
  Guestrin}.} \bibinfo{year}{2016}\natexlab{}.
\newblock \showarticletitle{Xgboost: A scalable tree boosting system}. In
  \bibinfo{booktitle}{\emph{Proceedings of the 22nd acm sigkdd international
  conference on knowledge discovery and data mining}}. ACM,
  \bibinfo{pages}{785--794}.
\newblock


\bibitem[\protect\citeauthoryear{Chen, Liu, Li, Lu, and Song}{Chen
  et~al\mbox{.}}{2017}]%
        {chen2017targeted}
\bibfield{author}{\bibinfo{person}{Xinyun Chen}, \bibinfo{person}{Chang Liu},
  \bibinfo{person}{Bo Li}, \bibinfo{person}{Kimberly Lu}, {and}
  \bibinfo{person}{Dawn Song}.} \bibinfo{year}{2017}\natexlab{}.
\newblock \showarticletitle{Targeted backdoor attacks on deep learning systems
  using data poisoning}.
\newblock \bibinfo{journal}{\emph{arXiv preprint arXiv:1712.05526}}
  (\bibinfo{year}{2017}).
\newblock


\bibitem[\protect\citeauthoryear{Cousot and Cousot}{Cousot and Cousot}{1977}]%
        {Cousot77}
\bibfield{author}{\bibinfo{person}{P. Cousot} {and} \bibinfo{person}{R.
  Cousot}.} \bibinfo{year}{1977}\natexlab{}.
\newblock \showarticletitle{Abstract Interpretation: A Unified Lattice Model
  for Static Analysis of Programs by Construction or Approximation of
  Fixpoints}.
\newblock


\bibitem[\protect\citeauthoryear{Diakonikolas, Kamath, Kane, Li, Moitra, and
  Stewart}{Diakonikolas et~al\mbox{.}}{2019a}]%
        {Diakonikolas19}
\bibfield{author}{\bibinfo{person}{Ilias. Diakonikolas},
  \bibinfo{person}{Gautam. Kamath}, \bibinfo{person}{Daniel. Kane},
  \bibinfo{person}{Jerry. Li}, \bibinfo{person}{Ankur. Moitra}, {and}
  \bibinfo{person}{Alistair. Stewart}.} \bibinfo{year}{2019}\natexlab{a}.
\newblock \showarticletitle{Robust Estimators in High-Dimensions Without the
  Computational Intractability}.
\newblock \bibinfo{journal}{\emph{SIAM J. Comput.}} \bibinfo{volume}{48},
  \bibinfo{number}{2} (\bibinfo{year}{2019}), \bibinfo{pages}{742--864}.
\newblock
\urldef\tempurl%
\url{https://doi.org/10.1137/17M1126680}
\showDOI{\tempurl}
\showeprint{https://doi.org/10.1137/17M1126680}


\bibitem[\protect\citeauthoryear{Diakonikolas, Kamath, Kane, Li, Steinhardt,
  and Stewart}{Diakonikolas et~al\mbox{.}}{2019b}]%
        {diakonikolas19sever}
\bibfield{author}{\bibinfo{person}{Ilias Diakonikolas}, \bibinfo{person}{Gautam
  Kamath}, \bibinfo{person}{Daniel Kane}, \bibinfo{person}{Jerry Li},
  \bibinfo{person}{Jacob Steinhardt}, {and} \bibinfo{person}{Alistair
  Stewart}.} \bibinfo{year}{2019}\natexlab{b}.
\newblock \showarticletitle{Sever: {A} Robust Meta-Algorithm for Stochastic
  Optimization}. In \bibinfo{booktitle}{\emph{Proceedings of the 36th
  International Conference on Machine Learning, {ICML} 2019, 9-15 June 2019,
  Long Beach, California, {USA}}} \emph{(\bibinfo{series}{Proceedings of
  Machine Learning Research})}, \bibfield{editor}{\bibinfo{person}{Kamalika
  Chaudhuri} {and} \bibinfo{person}{Ruslan Salakhutdinov}} (Eds.),
  Vol.~\bibinfo{volume}{97}. \bibinfo{publisher}{{PMLR}},
  \bibinfo{pages}{1596--1606}.
\newblock
\urldef\tempurl%
\url{http://proceedings.mlr.press/v97/diakonikolas19a.html}
\showURL{%
\tempurl}


\bibitem[\protect\citeauthoryear{Diakonikolas and Kane}{Diakonikolas and
  Kane}{2019}]%
        {diakonikolas19survey}
\bibfield{author}{\bibinfo{person}{Ilias Diakonikolas} {and}
  \bibinfo{person}{Daniel~M. Kane}.} \bibinfo{year}{2019}\natexlab{}.
\newblock \showarticletitle{Recent Advances in Algorithmic High-Dimensional
  Robust Statistics}.
\newblock \bibinfo{journal}{\emph{CoRR}}  \bibinfo{volume}{abs/1911.05911}
  (\bibinfo{year}{2019}).
\newblock
\showeprint[arxiv]{1911.05911}
\urldef\tempurl%
\url{http://arxiv.org/abs/1911.05911}
\showURL{%
\tempurl}


\bibitem[\protect\citeauthoryear{Dua and Graff}{Dua and Graff}{2017}]%
        {UCI}
\bibfield{author}{\bibinfo{person}{Dheeru Dua} {and} \bibinfo{person}{Casey
  Graff}.} \bibinfo{year}{2017}\natexlab{}.
\newblock \bibinfo{title}{{UCI} Machine Learning Repository}.
\newblock
\newblock
\urldef\tempurl%
\url{http://archive.ics.uci.edu/ml}
\showURL{%
\tempurl}


\bibitem[\protect\citeauthoryear{Dwyer and Holte}{Dwyer and Holte}{2007}]%
        {dwyer07}
\bibfield{author}{\bibinfo{person}{Kenneth Dwyer} {and} \bibinfo{person}{Robert
  Holte}.} \bibinfo{year}{2007}\natexlab{}.
\newblock \showarticletitle{Decision Tree Instability and Active Learning}. In
  \bibinfo{booktitle}{\emph{Machine Learning: {ECML} 2007, 18th European
  Conference on Machine Learning, Warsaw, Poland, September 17-21, 2007,
  Proceedings}}. \bibinfo{pages}{128--139}.
\newblock
\urldef\tempurl%
\url{https://doi.org/10.1007/978-3-540-74958-5\_15}
\showDOI{\tempurl}


\bibitem[\protect\citeauthoryear{{Gehr}, {Mirman}, {Drachsler-Cohen},
  {Tsankov}, {Chaudhuri}, and {Vechev}}{{Gehr} et~al\mbox{.}}{2018}]%
        {Gehr18}
\bibfield{author}{\bibinfo{person}{T. {Gehr}}, \bibinfo{person}{M. {Mirman}},
  \bibinfo{person}{D. {Drachsler-Cohen}}, \bibinfo{person}{P. {Tsankov}},
  \bibinfo{person}{S. {Chaudhuri}}, {and} \bibinfo{person}{M. {Vechev}}.}
  \bibinfo{year}{2018}\natexlab{}.
\newblock \showarticletitle{AI2: Safety and Robustness Certification of Neural
  Networks with Abstract Interpretation}. In \bibinfo{booktitle}{\emph{2018
  IEEE Symposium on Security and Privacy (SP)}}. \bibinfo{pages}{3--18}.
\newblock
\urldef\tempurl%
\url{https://doi.org/10.1109/SP.2018.00058}
\showDOI{\tempurl}


\bibitem[\protect\citeauthoryear{Katz, Barrett, Dill, Julian, and
  Kochenderfer}{Katz et~al\mbox{.}}{2017}]%
        {katz2017reluplex}
\bibfield{author}{\bibinfo{person}{Guy Katz}, \bibinfo{person}{Clark Barrett},
  \bibinfo{person}{David~L Dill}, \bibinfo{person}{Kyle Julian}, {and}
  \bibinfo{person}{Mykel~J Kochenderfer}.} \bibinfo{year}{2017}\natexlab{}.
\newblock \showarticletitle{Reluplex: An efficient SMT solver for verifying
  deep neural networks}. In \bibinfo{booktitle}{\emph{International Conference
  on Computer Aided Verification}}. Springer, \bibinfo{pages}{97--117}.
\newblock


\bibitem[\protect\citeauthoryear{Laishram and Phoha}{Laishram and
  Phoha}{2016}]%
        {LaishramP16}
\bibfield{author}{\bibinfo{person}{Ricky Laishram} {and}
  \bibinfo{person}{Vir~Virander Phoha}.} \bibinfo{year}{2016}\natexlab{}.
\newblock \showarticletitle{Curie: {A} method for protecting {SVM} Classifier
  from Poisoning Attack}.
\newblock \bibinfo{journal}{\emph{CoRR}}  \bibinfo{volume}{abs/1606.01584}
  (\bibinfo{year}{2016}).
\newblock
\showeprint[arxiv]{1606.01584}
\urldef\tempurl%
\url{http://arxiv.org/abs/1606.01584}
\showURL{%
\tempurl}


\bibitem[\protect\citeauthoryear{LeCun, Cortes, and Burges}{LeCun
  et~al\mbox{.}}{[n.d.]}]%
        {mnist}
\bibfield{author}{\bibinfo{person}{Yann LeCun}, \bibinfo{person}{Corinna
  Cortes}, {and} \bibinfo{person}{Christopher J.~C. Burges}.}
  \bibinfo{year}{[n.d.]}\natexlab{}.
\newblock \bibinfo{title}{The {MNIST} Database of handwritten digits}.
\newblock
\newblock
\urldef\tempurl%
\url{http://yann.lecun.com/exdb/mnist}
\showURL{%
\tempurl}


\bibitem[\protect\citeauthoryear{Li and Belford}{Li and Belford}{2002}]%
        {li02}
\bibfield{author}{\bibinfo{person}{Ruey{-}Hsia Li} {and}
  \bibinfo{person}{Geneva~G. Belford}.} \bibinfo{year}{2002}\natexlab{}.
\newblock \showarticletitle{Instability of decision tree classification
  algorithms}. In \bibinfo{booktitle}{\emph{Proceedings of the Eighth {ACM}
  {SIGKDD} International Conference on Knowledge Discovery and Data Mining,
  July 23-26, 2002, Edmonton, Alberta, Canada}}. \bibinfo{pages}{570--575}.
\newblock
\urldef\tempurl%
\url{https://doi.org/10.1145/775047.775131}
\showDOI{\tempurl}


\bibitem[\protect\citeauthoryear{Mei and Zhu}{Mei and Zhu}{2015}]%
        {Mei15}
\bibfield{author}{\bibinfo{person}{Shike Mei} {and} \bibinfo{person}{Xiaojin
  Zhu}.} \bibinfo{year}{2015}\natexlab{}.
\newblock \showarticletitle{Using Machine Teaching to Identify Optimal
  Training-set Attacks on Machine Learners}. In
  \bibinfo{booktitle}{\emph{Proceedings of the Twenty-Ninth AAAI Conference on
  Artificial Intelligence}} (Austin, Texas) \emph{(\bibinfo{series}{AAAI'15})}.
  \bibinfo{publisher}{AAAI Press}, \bibinfo{pages}{2871--2877}.
\newblock
\showISBNx{0-262-51129-0}
\urldef\tempurl%
\url{http://dl.acm.org/citation.cfm?id=2886521.2886721}
\showURL{%
\tempurl}


\bibitem[\protect\citeauthoryear{Mirman, Gehr, and Vechev}{Mirman
  et~al\mbox{.}}{2018}]%
        {mirman18}
\bibfield{author}{\bibinfo{person}{Matthew Mirman}, \bibinfo{person}{Timon
  Gehr}, {and} \bibinfo{person}{Martin~T. Vechev}.}
  \bibinfo{year}{2018}\natexlab{}.
\newblock \showarticletitle{Differentiable Abstract Interpretation for Provably
  Robust Neural Networks}. In \bibinfo{booktitle}{\emph{Proceedings of the 35th
  International Conference on Machine Learning, {ICML} 2018,
  Stockholmsm{\"{a}}ssan, Stockholm, Sweden, July 10-15, 2018}}
  \emph{(\bibinfo{series}{Proceedings of Machine Learning Research})},
  \bibfield{editor}{\bibinfo{person}{Jennifer~G. Dy} {and}
  \bibinfo{person}{Andreas Krause}} (Eds.), Vol.~\bibinfo{volume}{80}.
  \bibinfo{publisher}{{PMLR}}, \bibinfo{pages}{3575--3583}.
\newblock
\urldef\tempurl%
\url{http://proceedings.mlr.press/v80/mirman18b.html}
\showURL{%
\tempurl}


\bibitem[\protect\citeauthoryear{Newell, Potharaju, Xiang, and
  Nita-Rotaru}{Newell et~al\mbox{.}}{2014}]%
        {Newell14}
\bibfield{author}{\bibinfo{person}{Andrew Newell}, \bibinfo{person}{Rahul
  Potharaju}, \bibinfo{person}{Luojie Xiang}, {and} \bibinfo{person}{Cristina
  Nita-Rotaru}.} \bibinfo{year}{2014}\natexlab{}.
\newblock \showarticletitle{On the Practicality of Integrity Attacks on
  Document-Level Sentiment Analysis}. In \bibinfo{booktitle}{\emph{Proceedings
  of the 2014 Workshop on Artificial Intelligent and Security Workshop}}
  (Scottsdale, Arizona, USA) \emph{(\bibinfo{series}{AISec '14})}.
  \bibinfo{publisher}{ACM}, \bibinfo{address}{New York, NY, USA},
  \bibinfo{pages}{83--93}.
\newblock
\showISBNx{978-1-4503-3153-1}
\urldef\tempurl%
\url{https://doi.org/10.1145/2666652.2666661}
\showDOI{\tempurl}


\bibitem[\protect\citeauthoryear{P{\'{e}}rez, Muguerza, Arbelaitz, Gurrutxaga,
  and Mart{\'{\i}}n}{P{\'{e}}rez et~al\mbox{.}}{2005}]%
        {perez05}
\bibfield{author}{\bibinfo{person}{Jes{\'{u}}s~M. P{\'{e}}rez},
  \bibinfo{person}{Javier Muguerza}, \bibinfo{person}{Olatz Arbelaitz},
  \bibinfo{person}{Ibai Gurrutxaga}, {and} \bibinfo{person}{Jos{\'{e}}~Ignacio
  Mart{\'{\i}}n}.} \bibinfo{year}{2005}\natexlab{}.
\newblock \showarticletitle{Consolidated Trees: Classifiers with Stable
  Explanation. {A} Model to Achieve the Desired Stability in Explanation}. In
  \bibinfo{booktitle}{\emph{Pattern Recognition and Data Mining, Third
  International Conference on Advances in Pattern Recognition, {ICAPR} 2005,
  Bath, UK, August 22-25, 2005, Proceedings, Part {I}}}.
  \bibinfo{pages}{99--107}.
\newblock
\urldef\tempurl%
\url{https://doi.org/10.1007/11551188\_11}
\showDOI{\tempurl}


\bibitem[\protect\citeauthoryear{Quinlan}{Quinlan}{1987}]%
        {Quinlan87}
\bibfield{author}{\bibinfo{person}{J.R. Quinlan}.}
  \bibinfo{year}{1987}\natexlab{}.
\newblock \showarticletitle{Simplifying decision trees}.
\newblock \bibinfo{journal}{\emph{International Journal of Man-Machine
  Studies}} \bibinfo{volume}{27}, \bibinfo{number}{3} (\bibinfo{year}{1987}),
  \bibinfo{pages}{221 -- 234}.
\newblock
\showISSN{0020-7373}
\urldef\tempurl%
\url{https://doi.org/10.1016/S0020-7373(87)80053-6}
\showDOI{\tempurl}


\bibitem[\protect\citeauthoryear{Quinlan}{Quinlan}{1986}]%
        {quinlan1986induction}
\bibfield{author}{\bibinfo{person}{J.~Ross Quinlan}.}
  \bibinfo{year}{1986}\natexlab{}.
\newblock \showarticletitle{Induction of decision trees}.
\newblock \bibinfo{journal}{\emph{Machine learning}} \bibinfo{volume}{1},
  \bibinfo{number}{1} (\bibinfo{year}{1986}), \bibinfo{pages}{81--106}.
\newblock


\bibitem[\protect\citeauthoryear{Quinlan}{Quinlan}{1993}]%
        {quinlan1993c}
\bibfield{author}{\bibinfo{person}{J~Ross Quinlan}.}
  \bibinfo{year}{1993}\natexlab{}.
\newblock \showarticletitle{C4.5: Programs for machine learning}.
\newblock \bibinfo{journal}{\emph{The Morgan Kaufmann Series in Machine
  Learning, San Mateo, CA: Morgan Kaufmann,| c1993}} (\bibinfo{year}{1993}).
\newblock


\bibitem[\protect\citeauthoryear{Ranzato and Zanella}{Ranzato and
  Zanella}{2020}]%
        {ranzato20}
\bibfield{author}{\bibinfo{person}{Francesco Ranzato} {and}
  \bibinfo{person}{Marco Zanella}.} \bibinfo{year}{2020}\natexlab{}.
\newblock \showarticletitle{Abstract Interpretation of Decision Tree Ensemble
  Classifiers}. In \bibinfo{booktitle}{\emph{Proceedings of the Thirty-Fourth
  AAAI Conference on Artificial Intelligence}}
  \emph{(\bibinfo{series}{AAAI'20})},
  \bibfield{editor}{\bibinfo{person}{V.~Conitzer} {and}
  \bibinfo{person}{F.~Sha}} (Eds.).
\newblock


\bibitem[\protect\citeauthoryear{Singh, Gehr, P{\"u}schel, and Vechev}{Singh
  et~al\mbox{.}}{2019}]%
        {singh2019abstract}
\bibfield{author}{\bibinfo{person}{Gagandeep Singh}, \bibinfo{person}{Timon
  Gehr}, \bibinfo{person}{Markus P{\"u}schel}, {and} \bibinfo{person}{Martin
  Vechev}.} \bibinfo{year}{2019}\natexlab{}.
\newblock \showarticletitle{An abstract domain for certifying neural networks}.
\newblock \bibinfo{journal}{\emph{Proceedings of the ACM on Programming
  Languages}} \bibinfo{volume}{3}, \bibinfo{number}{POPL}
  (\bibinfo{year}{2019}), \bibinfo{pages}{41}.
\newblock


\bibitem[\protect\citeauthoryear{Steinhardt, Koh, and Liang}{Steinhardt
  et~al\mbox{.}}{2017}]%
        {steinhardt2017certified}
\bibfield{author}{\bibinfo{person}{Jacob Steinhardt}, \bibinfo{person}{Pang
  Wei~W Koh}, {and} \bibinfo{person}{Percy~S Liang}.}
  \bibinfo{year}{2017}\natexlab{}.
\newblock \showarticletitle{Certified defenses for data poisoning attacks}. In
  \bibinfo{booktitle}{\emph{Advances in neural information processing
  systems}}. \bibinfo{pages}{3517--3529}.
\newblock


\bibitem[\protect\citeauthoryear{Szegedy, Zaremba, Sutskever, Bruna, Erhan,
  Goodfellow, and Fergus}{Szegedy et~al\mbox{.}}{2014}]%
        {SzegedyZSBEGF13}
\bibfield{author}{\bibinfo{person}{Christian Szegedy},
  \bibinfo{person}{Wojciech Zaremba}, \bibinfo{person}{Ilya Sutskever},
  \bibinfo{person}{Joan Bruna}, \bibinfo{person}{Dumitru Erhan},
  \bibinfo{person}{Ian~J. Goodfellow}, {and} \bibinfo{person}{Rob Fergus}.}
  \bibinfo{year}{2014}\natexlab{}.
\newblock \showarticletitle{Intriguing properties of neural networks}. In
  \bibinfo{booktitle}{\emph{2nd International Conference on Learning
  Representations, {ICLR} 2014, Banff, AB, Canada, April 14-16, 2014,
  Conference Track Proceedings}}.
\newblock


\bibitem[\protect\citeauthoryear{Turney}{Turney}{1995}]%
        {turney95}
\bibfield{author}{\bibinfo{person}{Peter~D. Turney}.}
  \bibinfo{year}{1995}\natexlab{}.
\newblock \showarticletitle{Technical Note: Bias and the Quantification of
  Stability}.
\newblock \bibinfo{journal}{\emph{Machine Learning}} \bibinfo{volume}{20},
  \bibinfo{number}{1-2} (\bibinfo{year}{1995}), \bibinfo{pages}{23--33}.
\newblock
\urldef\tempurl%
\url{https://doi.org/10.1007/BF00993473}
\showDOI{\tempurl}


\bibitem[\protect\citeauthoryear{Wang, Pei, Whitehouse, Yang, and Jana}{Wang
  et~al\mbox{.}}{2018b}]%
        {wang2018formal}
\bibfield{author}{\bibinfo{person}{Shiqi Wang}, \bibinfo{person}{Kexin Pei},
  \bibinfo{person}{Justin Whitehouse}, \bibinfo{person}{Junfeng Yang}, {and}
  \bibinfo{person}{Suman Jana}.} \bibinfo{year}{2018}\natexlab{b}.
\newblock \showarticletitle{Formal security analysis of neural networks using
  symbolic intervals}. In \bibinfo{booktitle}{\emph{27th $\{$USENIX$\}$
  Security Symposium ($\{$USENIX$\}$ Security 18)}}.
  \bibinfo{pages}{1599--1614}.
\newblock


\bibitem[\protect\citeauthoryear{Wang, Jha, and Chaudhuri}{Wang
  et~al\mbox{.}}{2018a}]%
        {DBLP:conf/icml/WangJC18}
\bibfield{author}{\bibinfo{person}{Yizhen Wang}, \bibinfo{person}{Somesh Jha},
  {and} \bibinfo{person}{Kamalika Chaudhuri}.}
  \bibinfo{year}{2018}\natexlab{a}.
\newblock \showarticletitle{Analyzing the Robustness of Nearest Neighbors to
  Adversarial Examples}. In \bibinfo{booktitle}{\emph{Proceedings of the 35th
  International Conference on Machine Learning, {ICML} 2018,
  Stockholmsm{\"{a}}ssan, Stockholm, Sweden, July 10-15, 2018}}.
  \bibinfo{pages}{5120--5129}.
\newblock


\bibitem[\protect\citeauthoryear{Wong and Kolter}{Wong and Kolter}{2018}]%
        {wong18}
\bibfield{author}{\bibinfo{person}{Eric Wong} {and} \bibinfo{person}{J.~Zico
  Kolter}.} \bibinfo{year}{2018}\natexlab{}.
\newblock \showarticletitle{Provable Defenses against Adversarial Examples via
  the Convex Outer Adversarial Polytope}. In
  \bibinfo{booktitle}{\emph{Proceedings of the 35th International Conference on
  Machine Learning, {ICML} 2018, Stockholmsm{\"{a}}ssan, Stockholm, Sweden,
  July 10-15, 2018}} \emph{(\bibinfo{series}{Proceedings of Machine Learning
  Research})}, \bibfield{editor}{\bibinfo{person}{Jennifer~G. Dy} {and}
  \bibinfo{person}{Andreas Krause}} (Eds.), Vol.~\bibinfo{volume}{80}.
  \bibinfo{publisher}{{PMLR}}, \bibinfo{pages}{5283--5292}.
\newblock
\urldef\tempurl%
\url{http://proceedings.mlr.press/v80/wong18a.html}
\showURL{%
\tempurl}


\bibitem[\protect\citeauthoryear{Xiao, Biggio, Brown, Fumera, Eckert, and
  Roli}{Xiao et~al\mbox{.}}{2015a}]%
        {xiao2015feature}
\bibfield{author}{\bibinfo{person}{Huang Xiao}, \bibinfo{person}{Battista
  Biggio}, \bibinfo{person}{Gavin Brown}, \bibinfo{person}{Giorgio Fumera},
  \bibinfo{person}{Claudia Eckert}, {and} \bibinfo{person}{Fabio Roli}.}
  \bibinfo{year}{2015}\natexlab{a}.
\newblock \showarticletitle{Is feature selection secure against training data
  poisoning?}. In \bibinfo{booktitle}{\emph{International Conference on Machine
  Learning}}. \bibinfo{pages}{1689--1698}.
\newblock


\bibitem[\protect\citeauthoryear{Xiao, Biggio, Nelson, Xiao, Eckert, and
  Roli}{Xiao et~al\mbox{.}}{2015b}]%
        {Xiao15}
\bibfield{author}{\bibinfo{person}{Huang Xiao}, \bibinfo{person}{Battista
  Biggio}, \bibinfo{person}{Blaine Nelson}, \bibinfo{person}{Han Xiao},
  \bibinfo{person}{Claudia Eckert}, {and} \bibinfo{person}{Fabio Roli}.}
  \bibinfo{year}{2015}\natexlab{b}.
\newblock \showarticletitle{Support Vector Machines Under Adversarial Label
  Contamination}.
\newblock \bibinfo{journal}{\emph{Neurocomput.}} \bibinfo{volume}{160},
  \bibinfo{number}{C} (\bibinfo{date}{July} \bibinfo{year}{2015}),
  \bibinfo{pages}{53--62}.
\newblock
\showISSN{0925-2312}
\urldef\tempurl%
\url{https://doi.org/10.1016/j.neucom.2014.08.081}
\showDOI{\tempurl}


\bibitem[\protect\citeauthoryear{Xiao, Xiao, and Eckert}{Xiao
  et~al\mbox{.}}{2012}]%
        {Xiao12}
\bibfield{author}{\bibinfo{person}{Han Xiao}, \bibinfo{person}{Huang Xiao},
  {and} \bibinfo{person}{Claudia Eckert}.} \bibinfo{year}{2012}\natexlab{}.
\newblock \showarticletitle{Adversarial Label Flips Attack on Support Vector
  Machines}. In \bibinfo{booktitle}{\emph{Proceedings of the 20th European
  Conference on Artificial Intelligence}} (Montpellier, France)
  \emph{(\bibinfo{series}{ECAI'12})}. \bibinfo{publisher}{IOS Press},
  \bibinfo{address}{Amsterdam, The Netherlands, The Netherlands},
  \bibinfo{pages}{870--875}.
\newblock
\showISBNx{978-1-61499-097-0}
\urldef\tempurl%
\url{https://doi.org/10.3233/978-1-61499-098-7-870}
\showDOI{\tempurl}


\end{thebibliography}

\iffull
\appendix

\section{Proofs of Soundness for $\atlearner$} \label{app:proofs}

\begin{proof}[Proof of Proposition~\ref{prop:join}]
Suppose $\train \in \gamma\left(\drop{\train_1}{n_1}\right)\cup \gamma(\drop{\train_2}{n_2})$.
Without loss of generality, assume $T \in \gamma(\drop{\train_1}{n_1})$.
Certainly ${\train \subseteq \train_1 \cup \train_2}$;
furthermore, observe that $\train$ can be formed by first removing
$|\train_2 \setminus \train_1|$-many elements from $\train_1 \cup \train_2$
to recover $\train_1$ and then removing $\leq n_1$ remaining elements:
this gives us ${|(\train_1 \cup \train_2) \setminus \train|}
= {|(\train_2 \setminus \train_1) \cup (\train_1 \setminus \train)|}
\leq {|\train_2 \setminus \train_1| + |\train_1 \setminus \train|}
\leq {|\train_2 \setminus \train_1| + n_1}$.
This matches the definition of $\sqcup$.
\end{proof}

\begin{proof}[Proof of Proposition~\ref{prop:arestr}]
    Let $\drop{S}{m} = \gamma(\arestr{\drop{\train}{n}}{\varphi})$
    (so we will show $\restr{\train'}{\varphi} \in \gamma(\drop{S}{m})$).
    Since $\train' \subseteq \train$,
    we have that
    $\restr{\train'}{\varphi} \subseteq \{(\point,\class) \in \train : \point \models \varphi\} = S$,
    thus $\restr{\train'}{\varphi} \subseteq S$.
    Additionally, $S \setminus \restr{\train'}{\varphi}
    = S \setminus \train'
    \subseteq \train \setminus \train'$;
    certainly, then, $|S \setminus \restr{\train'}{\varphi}| \leq |\train \setminus \train'|$ and thus $ \leq n$.
    Recall there are two cases, $m = n$ or $m = |S|$:
    for the latter, we have trivially that $|S \setminus \restr{\train'}{\varphi}| \leq |S| \leq m$.
    Therefore $|S \setminus \restr{\train'}{\varphi}| \leq \min(n, m)$.
\end{proof}

\begin{proof}[Proof of Proposition~\ref{prop:asummary}]
    When $n < |\train|$, this follows from the soundness of interval arithmetic.
    In our corner case for $n = |\train|$,
    the concrete $\summary$ is undefined behavior,
    so we encompass every well-formed categorical probability distribution
    by allowing each component to take any $[0,1]$ value.
\end{proof}

\begin{proof}[Proof of Proposition~\ref{prop:afilter}]
    This is an immediate consequence
    of the soundness of $\arestr{\drop{\train}{n}}{\varphi'}$ and the soundness of the join.
\end{proof}

\begin{proof}[Proof of Lemma~\ref{lem:abestsplit}]
First, observe that $\ascore$ is sound
since it involves composing sound interval arithmetic
with other sound operations.
Let $\varphi' = \bestsplit(\train')$.
\begin{itemize}
    \item Case $\varphi' = \nullpred$:
        By definition of $\bestsplit$, this occurs when $\train'$ is such that
        every predicate $\varphi \in \Phi$ results in trivial splits.
        Soundness of $\arestr{}{}$ then gives us that for each $\varphi \in \Phi$,
        either $\emptyset \in \gamma(\arestr{\drop{\train}{n}}{\varphi})$
        or $\emptyset \in \gamma(\arestr{\drop{\train}{n}}{\neg\varphi})$,
        thus $\Phi_\forall = \emptyset$.
        We return using the then-branch, which explicitly includes $\nullpred$.
    \item Case $\varphi' \neq \nullpred$:
        By definition of $\bestsplit$,
        \rone $\varphi'$ non-trivially splits $\train'$, and furthermore,
        \rtwo for all other $\psi$ that non-trivially split $\train'$,
        we have that $\score(\train', \varphi') \leq \score(\train', \psi)$.
        \rone gives us that $\varphi' \in \Phi_\exists$,
        therefore if $\Phi_\forall = \emptyset$, then $\varphi'$ is included in the return value.

        Otherwise, when $\Phi_\forall \neq \emptyset$,
        we return using the then-branch.
        Let $\psi^*$ minimize $\lub_{\Phi_\forall}$:
        since $\psi^* \in \Phi_\forall$, \rtwo gives us that
        $\score(\train', \varphi') \leq \score(\train', \psi^*)$,
        and thus we have $\lb(\ascore(\drop{\train}{n}, \varphi'))
        \leq \ub(\ascore(\drop{\train}{n}, \psi^*))
        = \lub_{\Phi_\forall}$.
        Therefore $\varphi'$ is included in the return.
\end{itemize}
\end{proof}

\begin{proof}[Proof of Theorem~\ref{thm:main}]
$\atlearner$ applies a sequence of operations;
throughout the section, we state the soundness of each of these operations.
The soundness of $\atlearner$ follows from taking their composition.
\end{proof}

\section{Real-Valued Features}
\label{app:real-valued-features}

In this appendix, we provide a complete exposition of the technique used to handle
real-valued features. We repeat some of the arguments given in the main paper 
to make this appendix self-contained.
For real-valued features, there are infinitely many possible predicates of the form
$\lambda \point_i \ldotp \point_i \leq \tau$ (where $\tau \in \mathds{R}$),
and the learner $\tlearnerr$ chooses a finite set of possible $\tau$ values dynamically,
based on the values that occur in $\train$.
Throughout this section, we use the subscript $\reals$ to denote the real-valued
versions of existing operations.

\subsection{From $\tlearner$ to $\tlearnerr$}
The new learner $\tlearnerr$ is almost identical to $\tlearner$.
However, to formalize the aforementioned operation in the concrete semantics  
of $\tlearnerr$, we make a single modification in $\bestsplitr$, which
 maintains the original definition of $\bestsplit$, but it first computes 
a finite set of predicates $\Phi_\reals$
used in the remainder of its computation:
consider all of the values appearing in $\train$ for the $i$th feature in $\tdom$,
sorted in ascending order.
For each pair of adjacent values $(a, b)$ (i.e., such that there exists no $c$ in $\train$
such that $a<c<b$),
we include in $\Phi_\reals$ the predicate $\varphi=\lambda \point_i \ldotp \point_i \leq \frac{a + b}{2}$.

\begin{example}
In our running example from Figure~\ref{fig:example},
we have training set elements in $\train_\emph{bw}$
whose features take the numeric values $\{0,1,2,3,4,7,\ldots,14\}$.
$\bestsplitr(\train_\emph{bw})$ would thus enumerate over the predicates
$\Phi_\reals = \{\lambda \point \ldotp \point \leq \tau \mid
\tau \in \{\frac{1}{2},\frac{3}{2},\frac{5}{2},\frac{7}{2},\frac{11}{2},\frac{15}{2},\ldots,\frac{27}{2}\} \}$.
\end{example}

\subsection{From $\atlearner$ to $\atlearnerr$}
The formalization for the abstract case is more involved than the concrete case:
\rone we will similarly modify $\abestsplit$, but also
\rtwo we will change our abstract domain over predicates.
This change to the predicate domain means we will have to make
largely superficial adjustments to the many of the other operations, as well.

$\abestsplitr$, the real-valued version of $\abestsplit$, is responsible for selecting a finite set of predicates
that it will consider in its computation.
This motivates the second point, which we will discuss first:
because we don't know \emph{which} values could be missing from $\drop{\train}{n}$,
we might naively consider a value of $\tau$ for every such possible combination of missing values.
If we did so, and if $\bestsplitr{\train}$ considers $m$ predicates,
then $\abestsplitr(\drop{\train}{n})$ might consider up to $\approx mn$ predicates.
For efficiency, we will instead create a finite set of $m$-many \emph{symbolic predicates}
that overapproximates these possibilities.

\begin{definition}[Symbolic Real-Valued Predicate]
For a real-valued feature at $\point_i$, a \emph{symbolic predicate} $\rpred$ over $\point_i$ takes the form
$\lambda \point_i \ldotp \point_i \leq [a, b)$
for real values $a$ and $b$.
The semantics of a symbolic predicate is three-valued, which we denote as follows:
\[
\rpred(\point) \coloneqq 
\begin{cases}
\mathit{true} & \point_i \leq a \\
\mathit{maybe} & a < \point_i < b \\
\mathit{false} & b \leq \point_i \\
\end{cases}
\]
Each symbolic predicate $\rpred=\lambda \point_i \ldotp \point_i \leq [a, b)$ has the concretization
$\gamma(\rpred) \coloneqq \{\lambda \point_i \ldotp \point_i \leq \tau \mid \tau \in [a, b)\}$.
\end{definition}

Without loss of generality, we focus on the single disjunct case.
We previously stated $\atlearner$ operates over a state $(\drop{\train}{n}, \Psi)$
where $\Psi$ is some finite set of predicates;
now we let the state 
of $\atlearnerr$ be $(\drop{\train}{n}, \Psi^\#)$
where $\Psi^\#$ is represented as a finite set of \emph{symbolic predicates}.
The join remains a simple set union
$\Psi^\#_1 \sqcup \Psi^\#_2 \coloneqq \Psi^\#_1 \cup \Psi^\#_2$.
The concretization captures an infinite set of (non-symbolic) predicates:
$\gamma(\Psi^\#) \coloneqq \bigcup_{\rpred \in \Psi^\#} \gamma(\rpred)$.

\paragraph{Symbolic Predicates in Auxiliary Operators}
The definition of $\arestr{\drop{\train}{n}}{\rpred}$ changes slightly,
since we must include the \emph{maybe} case to be sound.
This complicates the computation of the poisoning amount $n$,
since we have two sources of uncertainty:
we must account for elements that could be missing because either
\rone they are missing from some particular $\train' \in \gamma(\drop{\train}{n})$, or
\rtwo $\rpred$ evaluates to \emph{maybe}.
Fortunately, we will be able to succinctly encompass these possibilities
using our existing lower-level operations.

We define $\arestr{\drop{\train}{n}}{\rpred}$ as follows:
suppose $\rpred$ is of the form $\lambda \point_i \ldotp \point_i \leq [a, b)$,
let $\varphi_a = \lambda \point_i \ldotp \point_i \leq a$
and let $\varphi_b = \lambda \point_i \ldotp \point_i < b$.
Because $\varphi_a$ and $\varphi_b$ are concrete predicates, we can use 
Equation~\ref{eq:abstract-semantics-dropn} (Section~\ref{sec:auxops})
to compute their abstract semantics.
Then
\[
\arestr{\drop{\train}{n}}{\rpred}:=\arestr{\drop{\train}{n}}{\varphi_a} \sqcup \arestr{\drop{\train}{n}}{\varphi_b}
.
\]

\begin{proposition}
Let $\train' \in \gamma(\drop{\train}{n})$ and $\varphi' \in \gamma(\rpred)$.
Then,
\[
\restr{\train'}{\varphi'} \in \gamma(\arestr{\drop{\train}{n}}{\rpred})
.
\]
\end{proposition}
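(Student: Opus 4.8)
The plan is to unfold the symbolic predicate and reduce the claim to a single direct membership check against the concrete join that defines $\arestr{\drop{\train}{n}}{\rpred}$. Since $\varphi' \in \gamma(\rpred)$, I would write $\varphi' = \lambda \point_i \ldotp \point_i \leq \tau$ for some $\tau \in [a,b)$, so that $a \leq \tau < b$. Reusing the notation of the definition, set $\varphi_a = \lambda\point_i\ldotp\point_i \leq a$ and $\varphi_b = \lambda\point_i\ldotp\point_i < b$, and let $A = \restr{\train}{\varphi_a}$ and $B = \restr{\train}{\varphi_b}$; the assumption $a < b$ gives $A \subseteq B$. Using Equation~\ref{eq:abstract-semantics-dropn} together with Definition~\ref{def:join}, I would compute $\arestr{\drop{\train}{n}}{\rpred} = \drop{A}{n_A} \sqcup \drop{B}{n_B} = \drop{B}{n'}$, where $n_A = \min(n,|A|)$, $n_B = \min(n,|B|)$, and $n' = \max(n_B,\, |B \setminus A| + n_A)$; the containment $A \subseteq B$ collapses the union in the join to $B$ and annihilates the $|A \setminus B|$ term.

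It is worth noting why a black-box appeal to Proposition~\ref{prop:arestr} and Proposition~\ref{prop:join} will not close the argument: those give $\restr{\train'}{\varphi_a} \in \gamma(\arestr{\drop{\train}{n}}{\varphi_a})$ and $\restr{\train'}{\varphi_b} \in \gamma(\arestr{\drop{\train}{n}}{\varphi_b})$, but the object we must place, $\restr{\train'}{\varphi'}$, is sandwiched strictly between $\restr{\train'}{\varphi_a}$ and $\restr{\train'}{\varphi_b}$ and in general equals neither. So I would instead verify the two conditions defining membership in $\gamma(\drop{B}{n'})$ directly, namely $\restr{\train'}{\varphi'} \subseteq B$ and $|B \setminus \restr{\train'}{\varphi'}| \leq n'$. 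The containment is immediate: any $(\point,\class) \in \restr{\train'}{\varphi'}$ satisfies $\point_i \leq \tau < b$ and lies in $\train' \subseteq \train$, hence in $B$.

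The real work is the poisoning-count bound, and the key idea is to split $B \setminus \restr{\train'}{\varphi'}$ according to whether an element lies in the region $A$ (where $\point_i \leq a$) or in the \emph{maybe} region $B \setminus A$ (where $a < \point_i < b$). On $A$, since $\point_i \leq a \leq \tau$, the predicate $\varphi'$ is satisfied automatically, so an element of $A$ is absent from $\restr{\train'}{\varphi'}$ only if it was dropped when forming $\train'$ from $\train$; hence $A \setminus \restr{\train'}{\varphi'} = A \setminus \train'$, whose size is at most $\min(|A|, |\train \setminus \train'|) \leq \min(|A|, n) = n_A$. The contribution from $B \setminus A$ is bounded trivially by $|B \setminus A|$. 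Adding these two disjoint contributions yields $|B \setminus \restr{\train'}{\varphi'}| \leq n_A + |B \setminus A| \leq n'$, which finishes the membership check.

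I expect the main obstacle to be exactly this bookkeeping of the poisoning budget across the two independent sources of uncertainty—genuine drops from $\train'$ on the one hand, and the slack introduced by the \emph{maybe} verdict on the other—since a careless accounting would force the two budgets to compound and blow past $n'$. The crux that makes the bound go through is the observation that restricting attention to the $A$-region eliminates the predicate's own uncertainty (the split there is forced), so within any single element the two sources never stack: honest drops are charged against $n_A$ and only the \emph{maybe} elements are charged against $|B \setminus A|$, matching exactly the second argument of the $\max$ in $n'$.
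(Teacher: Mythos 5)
Your proof is correct, and it starts from the same place as the paper's: both arguments unfold the definition to compute $\arestr{\drop{\train}{n}}{\rpred} = \drop{A}{n_A} \sqcup \drop{B}{n_B} = \drop{B}{n'}$ using $A \subseteq B$, and then verify the two membership conditions directly. (Your opening observation---that black-box appeals to Propositions~\ref{prop:arestr} and~\ref{prop:join} cannot place $\restr{\train'}{\varphi'}$ itself, since it generally equals neither $\restr{\train'}{\varphi_a}$ nor $\restr{\train'}{\varphi_b}$---is accurate, and it is precisely why the paper also argues membership directly.) The genuine difference is in the counting step. The paper (writing $S_a, S_b, m_a, m_b$ for your $A, B, n_A, n_B$) decomposes $B \setminus \restr{\train'}{\varphi'}$ through the intermediate, $\tau$-dependent set $\restr{\train}{\varphi'}$: elements of $B$ that fail $\varphi'$ (at most $|B \setminus A|$, since $A \subseteq \restr{\train}{\varphi'}$) plus elements satisfying $\varphi'$ that were dropped (at most $n$). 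Because that bound charges a full $n$ rather than $n_A$, it matches the join's poisoning parameter only when $m_a = m_b = n$, so the paper runs a case analysis on $m_b$ (with a sub-case on $m_a$), disposing of the degenerate cases by noting that $\gamma(\drop{B}{|B|})$ is the entire power set of $B$, where containment alone suffices. You instead partition by membership in the $\tau$-independent set $A$: elements of $A$ satisfy $\varphi'$ automatically, so $A \setminus \restr{\train'}{\varphi'} = A \setminus \train'$ has size at most $n_A$, while the \emph{maybe} region contributes at most $|B \setminus A|$ outright; since $n_A + |B \setminus A|$ is literally one arm of the $\max$ defining $n'$, the bound holds uniformly and the case analysis disappears. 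Both routes are sound; yours is more uniform and gives a marginally tighter intermediate bound, while the paper's case split makes explicit the situations in which the abstraction degenerates to a power set.
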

\begin{proof}
    Denote $\drop{S_a}{m_a} = \arestr{\drop{\train}{n}}{\varphi_a}$
    and $\drop{S_b}{m_b} = \arestr{\drop{\train}{n}}{\varphi_b}$.
    We will show that $\restr{\train'}{\varphi'} \in \gamma(\drop{S_a}{m_a} \sqcup \drop{S_b}{m_b})$.
    This join has nice structure since $a \leq b$ and thus $S_a \subseteq S_b$.

    We break into two cases, since $m_b = \min(n, |S_b|)$.
    First, suppose $m_b = |S_b|$.
    Then $\drop{S_a}{m_a} \sqcup \drop{S_b}{m_b} = \drop{S_b}{|S_b|}$,
    where $\gamma(\drop{S_b}{|S_b|}) = \mathcal{P}(\restr{\train}{\varphi_b})$.
    Because $\varphi' \in \gamma(\rpred)$,
    we then have $\restr{\train'}{\varphi'} \subseteq \restr{\train'}{\varphi_b}
    \in \mathcal{P}(\restr{\train}{\varphi_b})$, and we are done.

    Otherwise, we have $m_b = n$.
    Here, $\drop{S_a}{m_a} \sqcup \drop{S_b}{m_b} = \drop{S_b}{|S_b \setminus S_a| + n}$
    (unless $m_a = |S_a|$, in which case we immediately collapse to the previous case).
    Again, because $\varphi' \in \gamma(\rpred)$, we immediately have that
    $\restr{\train'}{\varphi'} \subseteq S_b$;
    it remains to show that $|S_b \setminus \restr{\train'}{\varphi'}| \leq
    |S_b \setminus S_a| + n$.
    Observe that \rone $\restr{\train}{\varphi'} \setminus \restr{\train'}{\varphi'}
    \subseteq \train \setminus \train'$ and thus
    $|\restr{\train}{\varphi'} \setminus \restr{\train'}{\varphi'}| \leq n$,
    and \rtwo since $S_a \subseteq S_b$ and $\varphi' \in \gamma(\rpred)$,
    we have that $|S_b \setminus \restr{\train}{\varphi'}| \leq |S_b \setminus S_a|$.
    Combined, we know that
    $|S_b \setminus \restr{\train'}{\varphi'}| =
    |S_b \setminus \restr{\train}{\varphi'}| +
    |\restr{\train}{\varphi'} \setminus \restr{\train'}{\varphi'}|
    \leq |S_b \setminus S_a| + n$.
\end{proof}

\paragraph{Symbolic Predicates in $\afilterr$}
In the original definition of $\afilter$
we separated $\Psi$ into two sets $\Psi_{\point}$ and $\Psi_{\neg\point}$
because exclusively either $\varphi \models \point$ or $\neg\varphi \models \point$.
In this new three-valued symbolic predicate case,
we appropriately over-approximate the $\rpred(\point) = \mathit{maybe}$ possibility.

Let us denote the following:
\begin{align*}
\Psi^\#_\point &= \{\rpred \in \Psi^\# \mid \rpred(\point) \in \{\mathit{true}, \mathit{maybe}\}\} \\
\Psi^\#_{\neg\point} &= \{\rpred \in \Psi^\# \mid \rpred(\point) \in \{\mathit{maybe}, \mathit{false}\}\}
\end{align*}
and define
\[
\afilterr(\drop{\train}{n}, \Psi^\#, \point) \coloneqq
\left(\bigsqcup_{\rpred \in \Psi^\#_\point} \arestr{\drop{\train}{n}}{\rpred}\right)
\sqcup \left(\bigsqcup_{\rpred \in \Psi^\#_{\neg \point}} \arestr{\drop{\train}{n}}{\neg\rpred}\right)
\]

\begin{proposition}
Let $\train' \in \gamma(\drop{\train}{n})$ and let $\varphi' \in \gamma(\Psi^\#)$.
Then,
\[
\filterr(\train', \varphi', \point) \in \gamma\left(\afilterr(\drop{\train}{n}, \Psi^\#, \point)\right)
\]
\end{proposition}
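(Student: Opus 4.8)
The plan is to mirror the argument used for Proposition~\ref{prop:afilter} in the non-symbolic setting, reducing the claim to the already-established soundness of the symbolic abstract restriction $\arestr{\drop{\train}{n}}{\rpred}$ together with the soundness of the join (Proposition~\ref{prop:join}). First I would unpack the hypothesis $\varphi' \in \gamma(\Psi^\#)$: since $\gamma(\Psi^\#) = \bigcup_{\rpred \in \Psi^\#} \gamma(\rpred)$, there is a witnessing symbolic predicate $\rpred \in \Psi^\#$ of the form $\lambda \point_i \ldotp \point_i \leq [a,b)$ with $\varphi' = (\lambda \point_i \ldotp \point_i \leq \tau)$ for some $\tau \in [a,b)$. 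The entire argument then amounts to tracking this single witness $\rpred$ through the definition of $\afilterr$ and showing it supplies the disjunct that covers $\filterr(\train', \varphi', \point)$.

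Next I would case-split on whether $\point \models \varphi'$, exactly as the concrete $\filterr$ does, so that $\filterr(\train',\varphi',\point)$ is either $\restr{\train'}{\varphi'}$ or $\restr{\train'}{\neg\varphi'}$. The one genuinely new step — and the part I expect to be the main (if modest) obstacle — is verifying that the witness $\rpred$ is placed in the correct set among $\Psi^\#_\point$ and $\Psi^\#_{\neg\point}$; this is precisely where the three-valued semantics and the \emph{maybe} verdict earn their keep. In the case $\point \models \varphi'$ we have $\point_i \leq \tau < b$, so $\point_i < b$ and hence $\rpred(\point) \neq \mathit{false}$, giving $\rpred(\point) \in \{\mathit{true}, \mathit{maybe}\}$ and $\rpred \in \Psi^\#_\point$. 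Symmetrically, in the case $\point \not\models \varphi'$ we have $\point_i > \tau \geq a$, so $\rpred(\point) \neq \mathit{true}$ and thus $\rpred \in \Psi^\#_{\neg\point}$. The intuition is that a \emph{true} or \emph{false} verdict pins $\point$ to one side for \emph{every} concrete threshold $\tau \in [a,b)$, whereas a \emph{maybe} verdict conservatively lands $\rpred$ in both candidate sets, so no concrete threshold can escape the over-approximation.

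With the witness correctly located, the remainder follows by composition. In the first case, the soundness of $\arestr{\drop{\train}{n}}{\rpred}$ (the preceding proposition) applied to $\train' \in \gamma(\drop{\train}{n})$ and $\varphi' \in \gamma(\rpred)$ yields $\restr{\train'}{\varphi'} \in \gamma(\arestr{\drop{\train}{n}}{\rpred})$; since $\rpred \in \Psi^\#_\point$, the element $\arestr{\drop{\train}{n}}{\rpred}$ is one of the operands of the big join defining $\afilterr$, and because the concretization of an iterated join over-approximates the union of the operands' concretizations (Proposition~\ref{prop:join} applied across the finite join), we obtain $\restr{\train'}{\varphi'} \in \gamma(\afilterr(\drop{\train}{n}, \Psi^\#, \point))$, which equals $\filterr(\train', \varphi', \point)$ in this case. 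The second case is entirely analogous, invoking the symmetric soundness of the negated restriction $\arestr{\drop{\train}{n}}{\neg\rpred}$ and the membership $\rpred \in \Psi^\#_{\neg\point}$. As these exhaust the possibilities, the proposition follows.
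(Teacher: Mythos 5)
Your proof is correct and takes essentially the same route as the paper's: extract a witnessing symbolic predicate $\rpred \in \Psi^\#$ with $\varphi' \in \gamma(\rpred)$, then compose the soundness of $\arestr{}{\rpred}$ (the preceding proposition) with the soundness of the join. The paper's proof is just terser---it leaves implicit the case analysis placing $\rpred$ in $\Psi^\#_\point$ or $\Psi^\#_{\neg\point}$ via the three-valued semantics, which you spell out explicitly (and correctly).
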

\begin{proof}
    Because $\varphi' \in \gamma(\Psi^\#)$,
    we know there is some $\rpred \in \Psi^\#$
    such that $\varphi' \in \gamma(\rpred)$.
    Once again, soundness now follows from the soundness of the join
    and the soundness of $\arestr{}{\rpred}$.
\end{proof}

\paragraph{Symbolic Predicates in $\abestsplitr$}
Finally, we discuss the treatment of our favorite complicated instruction.
Perhaps surprisingly, very little has to change.
Indeed, $\abestsplitr(\drop{\train}{n})$ begins by creating
a finite set of symbolic predicates $\Phi^\#$
and then proceeds exactly as its original definition.

The construction of $\Phi^\#$ is very simple.
In the concrete case we considered $\lambda \point_i \ldotp \point_i \leq \frac{a+b}{2}$
for all adjacent pairs $(a,b)$ of the feature values that occur in $\train$;
here, we will consider $\lambda \point_i \ldotp \point_i \leq [a, b)$
for all such pairs.

It should be surprising that this computation is sound.
After all, it involves (effectively) computing the $\argmin$ over a set of predicates;
here, we're \emph{over-approximating} that set,
and we ought to be concerned that our over-approximation could include
extraneous predicates whose valuation is so small that they occlude
the feasible minimizing predicates.
Serendipitously, the choice of $\Phi^\#$ prevents this from happening.

\begin{lemma}
Let $\train' \in \gamma(\drop{\train}{n})$.
Then,
\[
\bestsplitr(\train') \in
\gamma\left(\abestsplitr(\drop{\train}{n})\right)
\]
\end{lemma}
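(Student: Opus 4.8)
The plan is to mirror the structure of the proof of Lemma~\ref{lem:abestsplit}, with the extra work of bridging between the concrete predicate set $\Phi_\reals$ that $\bestsplitr(\train')$ enumerates and the symbolic set $\Phi^\#$ that $\abestsplitr(\drop{\train}{n})$ builds. First I would isolate the structural fact that makes everything go through: each symbolic predicate $\rpred = \lambda\point_i\ldotp\point_i \leq [a,b) \in \Phi^\#$ is built from a pair $(a,b)$ of \emph{adjacent} feature values of $\train$, so no value of $\train$---and hence no value of any $\train' \in \gamma(\drop{\train}{n})$---lies strictly inside $(a,b)$. Consequently every concrete threshold in $\gamma(\rpred)$ induces the \emph{same} split on any such $\train'$: on subsets of $\train$, a symbolic predicate behaves exactly like a single concrete predicate. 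I would also observe that $\varphi' = \bestsplitr(\train')$ is itself a midpoint $\frac{a'+b'}{2}$ of adjacent $\train'$-values, which falls inside exactly one half-open interval $[a,b)$ of adjacent $\train$-values; thus there is a $\rpred \in \Phi^\#$ with $\varphi' \in \gamma(\rpred)$. Since $\gamma$ of the returned set of symbolic predicates is the union of their concretizations, it suffices to prove that this particular $\rpred$ is returned.

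Next I would split on whether $\varphi' = \nullpred$. If $\varphi' = \nullpred$, then $\Phi_\reals$ contains no non-trivial split of $\train'$; since midpoints realize every threshold split, $\train'$ admits no non-trivial threshold split at all, so by soundness of $\arestr{}{\rpred}$ every $\rpred \in \Phi^\#$ already satisfies $\emptyset \in \gamma(\arestr{\drop{\train}{n}}{\rpred})$ (or on its negation), forcing $\Phi^\#_\forall = \emptyset$; the first branch of $\abestsplitr$ then returns a set containing $\nullpred$, as required. Otherwise $\varphi' \neq \nullpred$ non-trivially splits $\train'$, so soundness of $\arestr{}{\rpred}$ applied to the non-empty sets $\restr{\train'}{\varphi'}$ and $\restr{\train'}{\neg\varphi'}$ places $\rpred$ in $\Phi^\#_\exists$. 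If $\Phi^\#_\forall = \emptyset$ the first branch returns $\Phi^\#_\exists \cup \{\nullpred\} \ni \rpred$ and we are done.

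The main obstacle is the remaining case, $\Phi^\#_\forall \neq \emptyset$, where I must show $\lb(\ascore(\drop{\train}{n},\rpred)) \leq \lub_{\Phi^\#_\forall}$---this is exactly the worry that an over-approximating symbolic predicate could carry a spuriously small score and occlude the true minimizer. Let $\psi$ minimize $\ub(\ascore(\drop{\train}{n},\cdot))$ over $\Phi^\#_\forall$, so $\lub_{\Phi^\#_\forall} = \ub(\ascore(\drop{\train}{n},\psi))$. Using the structural fact, all thresholds in $\gamma(\psi)$ induce one common split of $\train'$; because $\psi \in \Phi^\#_\forall$ and $\train' \in \gamma(\drop{\train}{n})$, that split is non-trivial, hence realized by some midpoint $\tilde\varphi \in \Phi_\reals$, giving $\score(\train',\tilde\varphi) = \score(\train',\varphi^*)$ for any $\varphi^* \in \gamma(\psi)$. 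Soundness of $\ascore$ on symbolic predicates then yields $\score(\train',\varphi^*) \leq \ub(\ascore(\drop{\train}{n},\psi)) = \lub_{\Phi^\#_\forall}$; minimality of $\varphi'$ over $\Phi_\reals$ gives $\score(\train',\varphi') \leq \score(\train',\tilde\varphi)$; and soundness of $\ascore$ on $\rpred$ gives $\lb(\ascore(\drop{\train}{n},\rpred)) \leq \score(\train',\varphi')$. Chaining these inequalities closes the gap, so $\rpred$ is returned and $\varphi' \in \gamma(\rpred)$ lies in the concretization of the output. The crux---the ``serendipity'' the authors allude to---is precisely that adjacency of the interval endpoints forces each symbolic candidate in $\Phi^\#_\forall$ to correspond to an honest, $\train'$-realizable split that is no better than $\varphi'$, so the over-approximation cannot manufacture a score low enough to exclude $\rpred$.
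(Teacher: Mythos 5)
Your proof is correct and follows essentially the same route as the paper's: a bridging argument placing $\varphi'$ in the concretization of some $\rpred \in \Phi^\#$, the same case analysis inherited from the soundness proof of $\abestsplit$, and the same chain of inequalities anchored at the $\lub$-minimizing symbolic predicate to show $\rpred$ cannot be excluded. If anything, you spell out two steps the paper leaves implicit (that adjacency of interval endpoints makes all concretizations of a symbolic predicate induce a single common split on subsets of $\train$, and the explicit midpoint realization $\tilde\varphi$ justifying $\score(\train',\varphi') \leq \score(\train',\varphi^*)$), which is a welcome clarification rather than a different approach.
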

\begin{proof}
    We begin by observing two facts about the $\Phi^\#$ constructed in $\abestsplitr$:
    \rone For every $\varphi \in \Phi$ built during the non-symbolic $\bestsplitr(\train')$,
    there exists some $\rpred \in \Phi^\#$ such that $\varphi \in \gamma(\rpred)$.
    \rtwo For every $\rpred \in \Phi^\#$, there exists some
    $\train'' \in \gamma(\drop{\train}{n})$ and $\varphi'' \in \Phi$ from $\bestsplitr(\train'')$
    ($\varphi''$ is not necessarily returned as optimal, just constructed for consideration)
    such that $\varphi'' \in \gamma(\rpred)$.

    Let $\varphi' = \bestsplitr(\train')$.
    The rest of the proof is exactly the same as the soundness proof for $\abestsplit$
    (the previous two observations ensure that $\Phi^\#_\forall$ and $\Phi^\#_\exists$
    preserve the properties necessary for those arguments),
    except for establishing that our target $\varphi'$ is in the concretization of some $\rpred'$
    returned in the then-branch of the definition.
    Take any $\varphi^* \in \gamma(\rpred^*)$ where $\rpred^*$ is the minimizer in $\lub_{\Phi^\#}$.
    Observe the following for any $\rpred' \in \gamma(\Phi^\#)$
    such that $\varphi' \in \gamma(\rpred')$:
    \begin{align*}
    \lb(\ascore(\drop{\train}{n}, \rpred'))
    & \leq \lb(\ascore(\drop{\train}{n}, \varphi')) \\
    & \leq \score(\train', \varphi') \\
    & \leq \score(\train', \varphi^*) \\
    & \leq \ub(\ascore(\drop{\train}{n}, \varphi^*)) \\
    & \leq \ub(\ascore(\drop{\train}{n}, \rpred^*)) \\
    & = \lub_{\Phi^\#}
    \end{align*}
    and thus $\varphi'$ would be included.
\end{proof}

\section{Full Benchmarks}\label{app:bench}
Figures~\ref{fig:iris-details}, \ref{fig:mammography-details}, \ref{fig:wdbc-details}, and \ref{fig:mnistreal-details}
show the detailed performance metrics of the remaining datasets.

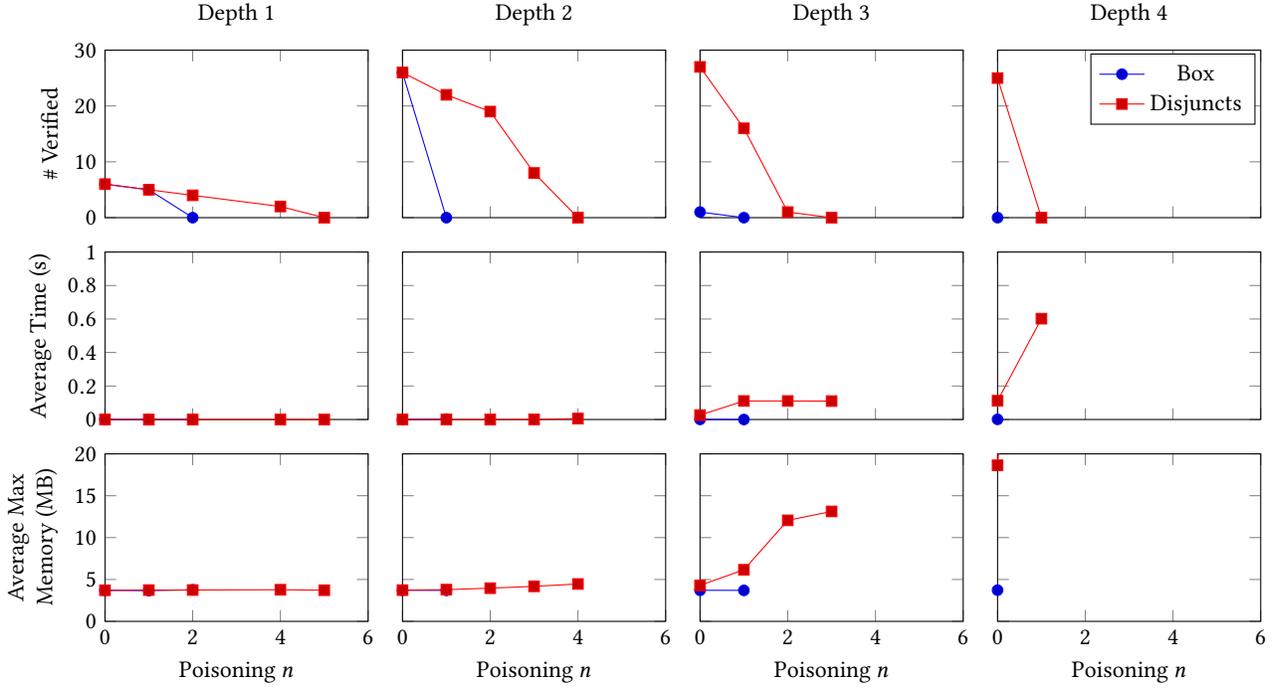
\begin{figure*}
\centering
\small
\pgfplotsset{filter discard warning=false}
\begin{tikzpicture}
    \begin{groupplot}[
            group style={
                group size=4 by 3,
                horizontal sep=.18in,
                vertical sep=.18in,
                ylabels at=edge left,
                yticklabels at=edge left,
                xlabels at=edge bottom,
                xticklabels at=edge bottom
            },
            width=2in,
            height=1.5in,
            xlabel near ticks,
            ylabel near ticks,
            xlabel=Poisoning $n$,
            xmin=0,
            xmax=6,
            ymin=0,
        ]

        \nextgroupplot[
            title=Depth 1,
            ylabel=\# Verified,
            ymax=30,
        ]
        \addplot table [x=num_dropout, y=num_verified, col sep=comma]{data/exhaustive/iris_d1_box.csv};
        \addplot table [x=num_dropout, y=num_verified, col sep=comma]{data/exhaustive/iris_d1_disjuncts.csv};

        \nextgroupplot[
            title=Depth 2,
            ymax=30,
        ]
        \addplot table [x=num_dropout, y=num_verified, col sep=comma]{data/exhaustive/iris_d2_box.csv};
        \addplot table [x=num_dropout, y=num_verified, col sep=comma]{data/exhaustive/iris_d2_disjuncts.csv};

        \nextgroupplot[
            title=Depth 3,
            ymax=30,
        ]
        \addplot table [x=num_dropout, y=num_verified, col sep=comma]{data/exhaustive/iris_d3_box.csv};
        \addplot table [x=num_dropout, y=num_verified, col sep=comma]{data/exhaustive/iris_d3_disjuncts.csv};

        \nextgroupplot[
            title=Depth 4,
            ymax=30,
        ]
        \addplot table [x=num_dropout, y=num_verified, col sep=comma]{data/exhaustive/iris_d4_box.csv};
        \addplot table [x=num_dropout, y=num_verified, col sep=comma]{data/exhaustive/iris_d4_disjuncts.csv};
        \legend{Box, Disjuncts};

        \nextgroupplot[
            ylabel=Average Time (s),
            ymax=1,
        ]
        \addplot table [x=num_dropout, y=avg_time, col sep=comma]{data/exhaustive/iris_d1_box.csv};
        \addplot table [x=num_dropout, y=avg_time, col sep=comma]{data/exhaustive/iris_d1_disjuncts.csv};

        \nextgroupplot[
            ymax=1,
        ]
        \addplot table [x=num_dropout, y=avg_time, col sep=comma]{data/exhaustive/iris_d2_box.csv};
        \addplot table [x=num_dropout, y=avg_time, col sep=comma]{data/exhaustive/iris_d2_disjuncts.csv};

        \nextgroupplot[
            ymax=1,
        ]
        \addplot table [x=num_dropout, y=avg_time, col sep=comma]{data/exhaustive/iris_d3_box.csv};
        \addplot table [x=num_dropout, y=avg_time, col sep=comma]{data/exhaustive/iris_d3_disjuncts.csv};

        \nextgroupplot[
            ymax=1,
        ]
        \addplot table [x=num_dropout, y=avg_time, col sep=comma]{data/exhaustive/iris_d4_box.csv};
        \addplot table [x=num_dropout, y=avg_time, col sep=comma]{data/exhaustive/iris_d4_disjuncts.csv};

        \nextgroupplot[
            ylabel=\parbox{1in}{\centering Average Max Memory (MB)},
            ymax=20,
        ]
        \addplot table [x=num_dropout, y=avg_max_memory, col sep=comma]{data/exhaustive/iris_d1_box.csv};
        \addplot table [x=num_dropout, y=avg_max_memory, col sep=comma]{data/exhaustive/iris_d1_disjuncts.csv};

        \nextgroupplot[
            ymax=20,
        ]
        \addplot table [x=num_dropout, y=avg_max_memory, col sep=comma]{data/exhaustive/iris_d2_box.csv};
        \addplot table [x=num_dropout, y=avg_max_memory, col sep=comma]{data/exhaustive/iris_d2_disjuncts.csv};

        \nextgroupplot[
            ymax=20,
        ]
        \addplot table [x=num_dropout, y=avg_max_memory, col sep=comma]{data/exhaustive/iris_d3_box.csv};
        \addplot table [x=num_dropout, y=avg_max_memory, col sep=comma]{data/exhaustive/iris_d3_disjuncts.csv};

        \nextgroupplot[
            ymax=20,
        ]
        \addplot table [x=num_dropout, y=avg_max_memory, col sep=comma]{data/exhaustive/iris_d4_box.csv};
        \addplot table [x=num_dropout, y=avg_max_memory, col sep=comma]{data/exhaustive/iris_d4_disjuncts.csv};
    \end{groupplot}
\end{tikzpicture}
\caption{\iris (Note: this is the only benchmark for which we do not use log-log plots,
since the numbers are generally small.
\label{fig:iris-details}
}
\end{figure*}
\begin{figure*}
\centering
\small
\pgfplotsset{filter discard warning=false}
\begin{tikzpicture}
    \begin{groupplot}[
            group style={
                group size=4 by 3,
                horizontal sep=.18in,
                vertical sep=.18in,
                ylabels at=edge left,
                yticklabels at=edge left,
                xlabels at=edge bottom,
                xticklabels at=edge bottom
            },
            width=2in,
            height=1.5in,
            xlabel near ticks,
            ylabel near ticks,
            xlabel=Poisoning $n$,
            xmode=log,
            xtick={1,2,4,8,16,32,64,128,256,512,1024,2048},
            log ticks with fixed point,
            xmin=1,
            xmax=64,
        ]

        \nextgroupplot[
            title=Depth 1,
            ylabel=\# Verified,
            ymin=0,
            ymax=166,
        ]
        \addplot table [x=num_dropout, y=num_verified, col sep=comma]{data/exhaustive/mammography_d1_box.csv};
        \addplot table [x=num_dropout, y=num_verified, col sep=comma]{data/exhaustive/mammography_d1_disjuncts.csv};

        \nextgroupplot[
            title=Depth 2,
            ymin=0,
            ymax=166,
        ]
        \addplot table [x=num_dropout, y=num_verified, col sep=comma]{data/exhaustive/mammography_d2_box.csv};
        \addplot table [x=num_dropout, y=num_verified, col sep=comma]{data/exhaustive/mammography_d2_disjuncts.csv};

        \nextgroupplot[
            title=Depth 3,
            ymin=0,
            ymax=166,
        ]
        \addplot table [x=num_dropout, y=num_verified, col sep=comma]{data/exhaustive/mammography_d3_box.csv};
        \addplot table [x=num_dropout, y=num_verified, col sep=comma]{data/exhaustive/mammography_d3_disjuncts.csv};

        \nextgroupplot[
            title=Depth 4,
            ymin=0,
            ymax=166,
        ]
        \addplot table [x=num_dropout, y=num_verified, col sep=comma]{data/exhaustive/mammography_d4_box.csv};
        \addplot table [x=num_dropout, y=num_verified, col sep=comma]{data/exhaustive/mammography_d4_disjuncts.csv};
        \legend{Box, Disjuncts};

        \nextgroupplot[
            ylabel=Average Time (s),
            ymode=log,
            ymin=.00001,
            ymax=10,
            ytick={.00001,.0001,.001,.01,.1,1,10},
            yticklabels={$10^{-5}$,$10^{-4}$,$10^{-3}$,$10^{-2}$,$10^{-1}$,1,10}
        ]
        \addplot table [x=num_dropout, y=avg_time, col sep=comma]{data/exhaustive/mammography_d1_box.csv};
        \addplot table [x=num_dropout, y=avg_time, col sep=comma]{data/exhaustive/mammography_d1_disjuncts.csv};

        \nextgroupplot[
            ymode=log,
            ymin=.00001,
            ymax=10,
            ytick={.00001,.0001,.001,.01,.1,1,10},
        ]
        \addplot table [x=num_dropout, y=avg_time, col sep=comma]{data/exhaustive/mammography_d2_box.csv};
        \addplot table [x=num_dropout, y=avg_time, col sep=comma]{data/exhaustive/mammography_d2_disjuncts.csv};

        \nextgroupplot[
            ymode=log,
            ymin=.00001,
            ymax=10,
            ytick={.00001,.0001,.001,.01,.1,1,10},
        ]
        \addplot table [x=num_dropout, y=avg_time, col sep=comma]{data/exhaustive/mammography_d3_box.csv};
        \addplot table [x=num_dropout, y=avg_time, col sep=comma]{data/exhaustive/mammography_d3_disjuncts.csv};

        \nextgroupplot[
            ymode=log,
            ymin=.00001,
            ymax=10,
            ytick={.00001,.0001,.001,.01,.1,1,10},
        ]
        \addplot table [x=num_dropout, y=avg_time, col sep=comma]{data/exhaustive/mammography_d4_box.csv};
        \addplot table [x=num_dropout, y=avg_time, col sep=comma]{data/exhaustive/mammography_d4_disjuncts.csv};

        \nextgroupplot[
            ylabel=\parbox{1in}{\centering Average Max Memory (MB)},
            ymode=log,
            ymax=1000,
        ]
        \addplot table [x=num_dropout, y=avg_max_memory, col sep=comma]{data/exhaustive/mammography_d1_box.csv};
        \addplot table [x=num_dropout, y=avg_max_memory, col sep=comma]{data/exhaustive/mammography_d1_disjuncts.csv};

        \nextgroupplot[
            ymode=log,
            ymax=1000,
        ]
        \addplot table [x=num_dropout, y=avg_max_memory, col sep=comma]{data/exhaustive/mammography_d2_box.csv};
        \addplot table [x=num_dropout, y=avg_max_memory, col sep=comma]{data/exhaustive/mammography_d2_disjuncts.csv};

        \nextgroupplot[
            ymode=log,
            ymax=1000,
        ]
        \addplot table [x=num_dropout, y=avg_max_memory, col sep=comma]{data/exhaustive/mammography_d3_box.csv};
        \addplot table [x=num_dropout, y=avg_max_memory, col sep=comma]{data/exhaustive/mammography_d3_disjuncts.csv};

        \nextgroupplot[
            ymode=log,
            ymax=1000,
        ]
        \addplot table [x=num_dropout, y=avg_max_memory, col sep=comma]{data/exhaustive/mammography_d4_box.csv};
        \addplot table [x=num_dropout, y=avg_max_memory, col sep=comma]{data/exhaustive/mammography_d4_disjuncts.csv};
    \end{groupplot}
\end{tikzpicture}
\caption{\mammography \label{fig:mammography-details}}
\end{figure*}
\begin{figure*}
\centering
\small
\pgfplotsset{filter discard warning=false}
\begin{tikzpicture}
    \begin{groupplot}[
            group style={
                group size=4 by 3,
                horizontal sep=.18in,
                vertical sep=.18in,
                ylabels at=edge left,
                yticklabels at=edge left,
                xlabels at=edge bottom,
                xticklabels at=edge bottom
            },
            width=2in,
            height=1.5in,
            xlabel near ticks,
            ylabel near ticks,
            xlabel=Poisoning $n$,
            xmode=log,
            xtick={1,2,4,8,16,32,64,128,256,512,1024,2048},
            log ticks with fixed point,
            xmin=1,
            xmax=64,
        ]

        \nextgroupplot[
            title=Depth 1,
            ylabel=\# Verified,
            ymin=0,
            ymax=113,
        ]
        \addplot table [x=num_dropout, y=num_verified, col sep=comma]{data/exhaustive/wdbc_d1_box.csv};
        \addplot table [x=num_dropout, y=num_verified, col sep=comma]{data/exhaustive/wdbc_d1_disjuncts.csv};

        \nextgroupplot[
            title=Depth 2,
            ymin=0,
            ymax=113,
        ]
        \addplot table [x=num_dropout, y=num_verified, col sep=comma]{data/exhaustive/wdbc_d2_box.csv};
        \addplot table [x=num_dropout, y=num_verified, col sep=comma]{data/exhaustive/wdbc_d2_disjuncts.csv};

        \nextgroupplot[
            title=Depth 3,
            ymin=0,
            ymax=113,
        ]
        \addplot table [x=num_dropout, y=num_verified, col sep=comma]{data/exhaustive/wdbc_d3_box.csv};
        \addplot table [x=num_dropout, y=num_verified, col sep=comma]{data/exhaustive/wdbc_d3_disjuncts.csv};

        \nextgroupplot[
            title=Depth 4,
            ymin=0,
            ymax=113,
        ]
        \addplot table [x=num_dropout, y=num_verified, col sep=comma]{data/exhaustive/wdbc_d4_box.csv};
        \addplot table [x=num_dropout, y=num_verified, col sep=comma]{data/exhaustive/wdbc_d4_disjuncts.csv};
        \legend{Box, Disjuncts};

        \nextgroupplot[
            ylabel=Average Time (s),
            ymode=log,
            ymin=.1,
            ymax=200,
        ]
        \addplot table [x=num_dropout, y=avg_time, col sep=comma]{data/exhaustive/wdbc_d1_box.csv};
        \addplot table [x=num_dropout, y=avg_time, col sep=comma]{data/exhaustive/wdbc_d1_disjuncts.csv};

        \nextgroupplot[
            ymode=log,
            ymin=.1,
            ymax=200,
        ]
        \addplot table [x=num_dropout, y=avg_time, col sep=comma]{data/exhaustive/wdbc_d2_box.csv};
        \addplot table [x=num_dropout, y=avg_time, col sep=comma]{data/exhaustive/wdbc_d2_disjuncts.csv};

        \nextgroupplot[
            ymode=log,
            ymin=.1,
            ymax=200,
        ]
        \addplot table [x=num_dropout, y=avg_time, col sep=comma]{data/exhaustive/wdbc_d3_box.csv};
        \addplot table [x=num_dropout, y=avg_time, col sep=comma]{data/exhaustive/wdbc_d3_disjuncts.csv};

        \nextgroupplot[
            ymode=log,
            ymin=.1,
            ymax=200,
        ]
        \addplot table [x=num_dropout, y=avg_time, col sep=comma]{data/exhaustive/wdbc_d4_box.csv};
        \addplot table [x=num_dropout, y=avg_time, col sep=comma]{data/exhaustive/wdbc_d4_disjuncts.csv};

        \nextgroupplot[
            ylabel=\parbox{1in}{\centering Average Max Memory (MB)},
            ymode=log,
            ymax=100000,
            ytick={1,10,100,1000,10000,100000},
        ]
        \addplot table [x=num_dropout, y=avg_max_memory, col sep=comma]{data/exhaustive/wdbc_d1_box.csv};
        \addplot table [x=num_dropout, y=avg_max_memory, col sep=comma]{data/exhaustive/wdbc_d1_disjuncts.csv};

        \nextgroupplot[
            ymode=log,
            ymax=100000,
            ytick={1,10,100,1000,10000,100000},
        ]
        \addplot table [x=num_dropout, y=avg_max_memory, col sep=comma]{data/exhaustive/wdbc_d2_box.csv};
        \addplot table [x=num_dropout, y=avg_max_memory, col sep=comma]{data/exhaustive/wdbc_d2_disjuncts.csv};

        \nextgroupplot[
            ymode=log,
            ymax=100000,
            ytick={1,10,100,1000,10000,100000},
        ]
        \addplot table [x=num_dropout, y=avg_max_memory, col sep=comma]{data/exhaustive/wdbc_d3_box.csv};
        \addplot table [x=num_dropout, y=avg_max_memory, col sep=comma]{data/exhaustive/wdbc_d3_disjuncts.csv};

        \nextgroupplot[
            ymode=log,
            ymax=100000,
            ytick={1,10,100,1000,10000,100000},
        ]
        \addplot table [x=num_dropout, y=avg_max_memory, col sep=comma]{data/exhaustive/wdbc_d4_box.csv};
        \addplot table [x=num_dropout, y=avg_max_memory, col sep=comma]{data/exhaustive/wdbc_d4_disjuncts.csv};
    \end{groupplot}
\end{tikzpicture}
\caption{\wdbc \label{fig:wdbc-details}}
\end{figure*}
\begin{figure*}
\centering
\small
\pgfplotsset{filter discard warning=false}
\begin{tikzpicture}
    \begin{groupplot}[
            group style={
                group size=4 by 3,
                horizontal sep=.18in,
                vertical sep=.18in,
                ylabels at=edge left,
                yticklabels at=edge left,
                xlabels at=edge bottom,
                xticklabels at=edge bottom
            },
            width=2in,
            height=1.5in,
            xlabel near ticks,
            ylabel near ticks,
            xlabel=Poisoning $n$,
            xmode=log,
            xtick={1,8,64,512},
            minor xtick={1,2,4,8,16,32,64,128,256,512,1024,2048,4096},
            log ticks with fixed point,
            xmin=1,
            xmax=2048,
        ]

        \nextgroupplot[
            title=Depth 1,
            ylabel=\# Verified,
            ymin=0,
            ymax=100,
        ]
        \addplot table [x=num_dropout, y=num_verified, col sep=comma]{data/exhaustive/mnist_1_7_d1_box.csv};
        \addplot table [x=num_dropout, y=num_verified, col sep=comma]{data/exhaustive/mnist_1_7_d1_disjuncts.csv};

        \nextgroupplot[
            title=Depth 2,
            ymin=0,
            ymax=100,
        ]
        \addplot table [x=num_dropout, y=num_verified, col sep=comma]{data/exhaustive/mnist_1_7_d2_box.csv};
        \addplot table [x=num_dropout, y=num_verified, col sep=comma]{data/exhaustive/mnist_1_7_d2_disjuncts.csv};

        \nextgroupplot[
            title=Depth 3,
            ymin=0,
            ymax=100,
        ]
        \addplot table [x=num_dropout, y=num_verified, col sep=comma]{data/exhaustive/mnist_1_7_d3_box.csv};
        \addplot table [x=num_dropout, y=num_verified, col sep=comma]{data/exhaustive/mnist_1_7_d3_disjuncts.csv};

        \nextgroupplot[
            title=Depth 4,
            ymin=0,
            ymax=100,
        ]
        \addplot table [x=num_dropout, y=num_verified, col sep=comma]{data/exhaustive/mnist_1_7_d4_box.csv};
        \addplot table [x=num_dropout, y=num_verified, col sep=comma]{data/exhaustive/mnist_1_7_d4_disjuncts.csv};
        \legend{Box, Disjuncts};

        \nextgroupplot[
            ylabel=Average Time (s),
            ymode=log,
            ymin=1,
            ymax=10000,
            ytick={1,10,100,1000,10000},
        ]
        \addplot table [x=num_dropout, y=avg_time, col sep=comma]{data/exhaustive/mnist_1_7_d1_box.csv};
        \addplot table [x=num_dropout, y=avg_time, col sep=comma]{data/exhaustive/mnist_1_7_d1_disjuncts.csv};
        \addplot[dotted, domain=1:4096] {3600};
        \legend{ , , Timeout};

        \nextgroupplot[
            ymode=log,
            ymin=1,
            ymax=10000,
            ytick={1,10,100,1000,10000},
        ]
        \addplot table [x=num_dropout, y=avg_time, col sep=comma]{data/exhaustive/mnist_1_7_d2_box.csv};
        \addplot table [x=num_dropout, y=avg_time, col sep=comma]{data/exhaustive/mnist_1_7_d2_disjuncts.csv};
        \addplot[dotted, domain=1:4096] {3600};

        \nextgroupplot[
            ymode=log,
            ymin=1,
            ymax=10000,
            ytick={1,10,100,1000,10000},
        ]
        \addplot table [x=num_dropout, y=avg_time, col sep=comma]{data/exhaustive/mnist_1_7_d3_box.csv};
        \addplot table [x=num_dropout, y=avg_time, col sep=comma]{data/exhaustive/mnist_1_7_d3_disjuncts.csv};
        \addplot[dotted, domain=1:4096] {3600};

        \nextgroupplot[
            ymode=log,
            ymin=1,
            ymax=10000,
            ytick={1,10,100,1000,10000},
        ]
        \addplot table [x=num_dropout, y=avg_time, col sep=comma]{data/exhaustive/mnist_1_7_d4_box.csv};
        \addplot table [x=num_dropout, y=avg_time, col sep=comma]{data/exhaustive/mnist_1_7_d4_disjuncts.csv};
        \addplot[dotted, domain=1:4096] {3600};

        \nextgroupplot[
            ylabel=\parbox{1in}{\centering Average Max Memory (MB)},
            ymode=log,
            ymax=200000,
        ]
        \addplot table [x=num_dropout, y=avg_max_memory, col sep=comma]{data/exhaustive/mnist_1_7_d1_box.csv};
        \addplot table [x=num_dropout, y=avg_max_memory, col sep=comma]{data/exhaustive/mnist_1_7_d1_disjuncts.csv};
        \addplot[dotted, domain=1:4096] {140000};
        \legend{ , , OOM};

        \nextgroupplot[
            ymode=log,
            ymax=200000,
        ]
        \addplot table [x=num_dropout, y=avg_max_memory, col sep=comma]{data/exhaustive/mnist_1_7_d2_box.csv};
        \addplot table [x=num_dropout, y=avg_max_memory, col sep=comma]{data/exhaustive/mnist_1_7_d2_disjuncts.csv};
        \addplot[dotted, domain=1:4096] {140000};

        \nextgroupplot[
            ymode=log,
            ymax=200000,
        ]
        \addplot table [x=num_dropout, y=avg_max_memory, col sep=comma]{data/exhaustive/mnist_1_7_d3_box.csv};
        \addplot table [x=num_dropout, y=avg_max_memory, col sep=comma]{data/exhaustive/mnist_1_7_d3_disjuncts.csv};
        \addplot[dotted, domain=1:4096] {140000};

        \nextgroupplot[
            ymode=log,
            ymax=200000,
        ]
        \addplot table [x=num_dropout, y=avg_max_memory, col sep=comma]{data/exhaustive/mnist_1_7_d4_box.csv};
        \addplot table [x=num_dropout, y=avg_max_memory, col sep=comma]{data/exhaustive/mnist_1_7_d4_disjuncts.csv};
        \addplot[dotted, domain=1:4096] {140000};
    \end{groupplot}
\end{tikzpicture}
\caption{\mnistreal \label{fig:mnistreal-details}}
\end{figure*}

\fi

\end{document}